\documentclass[11pt]{article}

\setlength{\oddsidemargin}{0.0in}
\setlength{\evensidemargin}{\oddsidemargin}
\setlength{\textwidth}{6.5in}
\setlength{\textheight}{9.1in}
\setlength{\topmargin}{-0.6in}

\widowpenalty10000
\clubpenalty10000

\usepackage{times}
\usepackage{amsmath,amssymb,amsthm}
\usepackage{color,soul}
\usepackage{xspace}
\usepackage[numbers]{natbib}
\usepackage{comment}
\usepackage{enumerate}
\usepackage[colorlinks=true,citecolor=darkgreen,linkcolor=euro-blue,pdfencoding=auto,psdextra,pagebackref=true]{hyperref}
\usepackage{etex}
\usepackage{mathtools}

\usepackage{nicefrac}
\usepackage{paralist}
\usepackage{todonotes}
\usepackage{setspace}
\usepackage[matrix,arrow,ps]{xy}

\usepackage{subcaption}
\usepackage[noend,ruled,linesnumbered]{algorithm2e} 
\usepackage{xifthen}

\usepackage{booktabs}

\usepackage{tikz}
\usetikzlibrary{decorations,arrows,petri,topaths,backgrounds,shapes,positioning,fit,calc,decorations.pathreplacing,patterns,intersections,decorations.pathmorphing,matrix}
\def \xsc {1}
\def \ysc {.9}
\def \dist {1}

\makeatletter
\newcommand{\gettikzxy}[3]{%
  \tikz@scan@one@point\pgfutil@firstofone#1\relax
  \edef#2{\the\pgf@x}%
  \edef#3{\the\pgf@y}%
}
\makeatother

\newcommand{\searchprob}[3]{
  \begin{center}%
    \begin{minipage}{0.9\linewidth}%
      \textsc{#1}\\[0.2ex]
      \textbf{Input:} #2\\[0.2ex]
      \textbf{Output:} #3
    \end{minipage}%
  \end{center} }




\usepackage[sort&compress,nameinlink]{cleveref}

\pdfstringdefDisableCommands{\def\alpha{\textalpha}}
\pdfstringdefDisableCommands{\def\ell{\textell}}
\pdfstringdefDisableCommands{\def\leq{\textleq}}

\newtheorem{theorem}{Theorem}[section]
\newtheorem{lemma}{Lemma}[section]
\newtheorem{corollary}[theorem]{Corollary}
\newtheorem{proposition}[theorem]{Proposition}

\newtheorem{claim}{Claim}

\theoremstyle{remark}
\newtheorem{example}{Example}[section]
\newtheorem{definition}{Definition}[section]

\crefname{table}{Table}{Tables}
\crefname{figure}{Figure}{Figures}
\crefname{theorem}{Theorem}{Theorems}
\crefname{claim}{Claim}{Claims}
\crefname{definition}{Definition}{Definitions}
\crefname{corollary}{Corollary}{Corollaries}
\crefname{observation}{Observation}{Observations}
\crefname{lemma}{Lemma}{Lemmas}
\crefname{example}{Example}{Examples}
\crefname{reduction}{Reduction}{Reductions}
\crefname{construction}{Construction}{Constructions}
\crefname{subsection}{Subsection}{Subsections}
\crefname{section}{Section}{Sections}
\crefname{proposition}{Proposition}{Propositions}
\crefname{algorithm}{Algorithm}{Algorithms}
\crefname{algocf}{Algorithm}{Algorithms}
\crefname{rrule}{Reduction rule}{Reduction rules}
\crefalias{AlgoLine}{line}


\newcommand{\Pot}{\mathcal{P}}

\newcommand{\pref}{\ensuremath{\succ}}

\usepackage{xcolor}
\usepackage{colortbl}
\definecolor{dargray}{rgb}{0.18, 0.18, 0.18}
\definecolor{darkgreen}{rgb}{0.01,0.6,0.1}
\definecolor{lightrose}{rgb}{0.996,0.75,0.793}
\definecolor{rose}{cmyk}{0.75, 0.75, 0,0}
\definecolor{winered}{rgb}{0.6,0.1,0.1}
\definecolor{lightyellow}{rgb}{1, 1, 0.6}
\definecolor{transparent}{rgb}{1,1,1}
\definecolor{lightlightgray}{rgb}{0.88, 0.88, 0.88}
\definecolor{lightgray}{rgb}{0.8, 0.8, 0.8}
\definecolor{lightblue}{rgb}{0.527,0.805,0.977}
\definecolor{lightgreen}{rgb}{.74,1,0}
\definecolor{cambridgeblue}{rgb}{0.87, 1.0, 0.87}
\definecolor{mistyrose}{rgb}{1.0, 0.89, 0.88}
\definecolor{euro-blue}{rgb}{0.0,0,0.5}
\newcommand{\cellnph}[1]{\cellcolor{mistyrose}#1}
\newcommand{\cellnphh}[1]{\cellcolor{mistyrose}#1}
\newcommand{\cellp}[1]{\cellcolor{cambridgeblue}#1}
\newcommand{\celllogsnp}[1]{\cellcolor{orange!25}#1}

\newcommand{\cellwh}[1]{\cellcolor{rose!30}#1}
\newcommand{\cellgh}[1]{\cellcolor{blue!15}#1}

\newcommand{\pairset}{\ensuremath{U \star W}}

\newcommand{\gstable}[1]{#1-layer globally stable\xspace}
\newcommand{\abstable}[1]{#1-layer pair stable\xspace}
\newcommand{\aistable}[1]{#1-layer individually stable\xspace}

\newcommand{\gstability}[1]{#1-layer global stability\xspace}
\newcommand{\abstability}[1]{#1-layer pair stability\xspace}
\newcommand{\aistability}[1]{#1-layer individual stability\xspace}

\newcommand{\GSM}{\textsc{Globally Stable Marriage}\xspace}
\newcommand{\BSM}{\textsc{Pair Stable Marriage}\xspace}
\newcommand{\ISM}{\textsc{Individually Stable Marriage}\xspace}

\tikzstyle{blueline} = [thick, blue, dotted]
\tikzstyle{redline} = [thick, red, dashed]
\tikzstyle{greenline} = [thick, darkgreen, dashed]
\tikzstyle{blackline} = [thick, black]
\tikzstyle{concept} = [minimum height=4ex, inner sep=1pt, text centered, align=center]

\newcommand{\myemph}[1]{{\color{winered}\emph{#1}}}

\newcommand{\true}{\mathsf{true}}
\newcommand{\false}{\mathsf{false}}

\newcommand{\SM}{\textsc{Stable Marriage}\xspace}
\newcommand{\SR}{\textsc{Stable Roommates}\xspace}
\newcommand{\IS}{\textsc{Independent Set}\xspace}

\pagestyle{plain}

\makeatletter
\let\cref@old@stepcounter\stepcounter
\def\stepcounter#1{%
  \cref@old@stepcounter{#1}%
  \cref@constructprefix{#1}{\cref@result}%
  \@ifundefined{cref@#1@alias}%
    {\def\@tempa{#1}}%
    {\def\@tempa{\csname cref@#1@alias\endcsname}}%
  \protected@edef\cref@currentlabel{%
    [\@tempa][\arabic{#1}][\cref@result]%
    \csname p@#1\endcsname\csname the#1\endcsname}}
\makeatother

\begin{document}
\sloppy
\allowdisplaybreaks

\title{Stable Marriage with Multi-Modal Preferences\thanks{Work started when 
all authors were with TU~Berlin.}}

\author{
  \makebox[0.25\linewidth]{Jiehua Chen\thanks{Supported by the People Programme (Marie Curie Actions) of the European Union's Seventh Framework Programme (FP7/2007-2013) under REA grant agreement number~631163.11, 
and by the
Israel Science Foundation (grant number 551145/14).}} \\ 
  Ben-Gurion University of the Negev\\ 
  Beer-Sheva, Israel 
\and
  \makebox[0.25\linewidth]{Rolf Niedermeier} \\
  TU Berlin \\
  Berlin, Germany
\and 
  \makebox[0.25\linewidth]{Piotr Skowron\thanks{Supported by a postdoctoral fellowship of the Alexander von Humboldt Foundation, Bonn, Germany.}} \\
  TU Berlin \\
  Berlin, Germany
}

\date{}
\maketitle

\begin{abstract}
We introduce a generalized version of the famous \SM problem,
now based on multi-modal preference lists. The central twist herein
is to allow each agent to rank its potentially matching counterparts
based on more than one ``evaluation mode'' (e.g., more than one criterion); thus, each agent is equipped 
with multiple preference lists, each 
ranking the counterparts in a possibly
different way. 
We introduce and study three natural concepts of stability, investigate their mutual relations 
and focus on computational complexity aspects with 
respect to computing stable matchings in these new scenarios. 
Mostly encountering computational hardness (NP-hardness), we can also 
spot few islands of tractability and make a surprising 
connection to the \textsc{Graph Isomorphism} problem.

\bigskip
\noindent
\textbf{Keywords:} Stable matching, concepts of stability, multi-layer (graph) models,
NP-hardness, parameterized complexity analysis, exact algorithms.
\end{abstract}

\section{Introduction}\label{sec:intro}

Information about the same ``phenomenon'' can come from different, 
possibly ``contradicting'', sources. 
For instance, when evaluating candidates for an open position,
data concerning experience and so far achieved successes 
of the candidates may give different
candidate rankings than data 
concerning their formal qualifications and degrees.
In other words, one has to deal with a multi-modal data scenario.
Clearly, in maximally objective and 
rationality-driven decision making, it makes sense 
to take into account several information resources in order 
to achieve best possible results. In this work we systematically apply this point of view to the 
\SM problem~\cite{GaleShapley1962}; a key observation here is that several natural and well-motivated
``multi-modal variants'' of \SM need to be studied.
We investigate the complexity of computing matchings that are stable according to the considered definitions. 

In the classic (conservative) \SM problem~\cite{GaleShapley1962},
we are given two disjoint sets~$U$ and~$W$ of $n$~agents each,
where each of the agents has a \myemph{strict preference list} that ranks 
\myemph{every} member of the other set.
The goal is to find a bijection (which we call a \myemph{matching})
between $U$ and~$W$ without any \myemph{blocking pair} which can endanger the stability of the matching.
A pair of agents is \myemph{blocking} a matching if they are not matched to each other 
but rank each other higher than their respective partners in the matching.

\citet{GaleShapley1962} introduced the \SM problem in the fields of Economics and Computer Science in the 1960s.
One of their central results was that 
every \SM instance with $2n$ agents admits a stable matching,
which can be found by their algorithm in $O(n^2)$~time.
Since then \SM has been intensively studied in Economics, Computer Science, and Social and Political Science~\cite{AbBiMa2005,GusfieldIrving1989,Irving2016a,Irving2016b,Knuth1976,Manlove2016,Manlove2013}.
Practical applications of \SM (and its variants) 
include partnership issues in various 
real-world scenarios, matching graduating medical students (so-called residents) with hospitals, students with schools,
and organ donors with patients~\cite{GusfieldIrving1989,Knuth1976,Manlove2013}, and the design of content delivery systems~\cite{MaggsSit15} and other distributed markets~\cite{RothSotomayor1992}.


The original model of \SM assumes, roughly speaking, that there is a (subjective) criterion and that each agent has a single preference list depending on this criterion. 
In typically complex real-world scenarios, however, there are usually multiple aspects one takes into account when making a decision.
For instance, if we consider the classical partnership scenario,
then there could be different criteria such as working hours, family background, physical appearance, health, hobbies, etc. 
In other words, we face a much more complex multi-modal scenario.
Accordingly, the agents may have multiple preference lists, 
each defined by a different criterion;
we call each of these criteria a \myemph{layer}.
For an illustration, let us consider the following stable marriage example 
with two sets of two agents each, denoted as $u_1$, $u_2$, $w_1$, and~$w_2$,
and three layers of preferences, denoted as $P_1$, $P_2$, and~$P_3$.

\noindent 
{\centering
\begin{tikzpicture} 
  \node at (0, -0.5) (Layer1) {\Large $P_1$:};
  \foreach \i in {1, 2}
  {
    \foreach \j/\p/\o in {u/0/left,w/1/right} {
      \node[draw, circle, minimum size=3ex, inner sep=1pt] at (\i*\dist, -\p*\xsc) (n\j\i) {$\j_\i$};
    }
  }

  \foreach \n / \i / \o / \a/\b   in {u1/w/left/1/2, u2/w/left/1/2}{
    \gettikzxy{(n\n)}{\xx}{\yy};
    \node at (\xx,\yy+\ysc*25) {$\i_\a$};
    \node at (\xx,\yy+\ysc*15) {$\i_\b$};
  }  
  \foreach \n / \i / \o / \a/\b  in {%
    w1/u/right/1/2, w2/u/right/1/2}{
    \gettikzxy{(n\n)}{\xx}{\yy};
    \node at (\xx,\yy-\ysc*18) {$\i_\a$};
    \node at (\xx,\yy-\ysc*28) {$\i_\b$};
  }
  \foreach \s/\t in {u1/w1,u2/w2}{
    \draw[blackline] (n\s) -- (n\t);
  } 
\end{tikzpicture}
~~\qquad
\begin{tikzpicture}
  \node at (0, -0.5) (Layer1) {\Large $P_2$:};
  \foreach \i in {1, 2}
  {
    \foreach \j/\p/\o in {u/0/left,w/1/right} {
      \node[draw, circle, minimum size=3ex, inner sep=1pt] at (\i*\dist, -\p*\xsc) (n\j\i) {$\j_\i$};
    }
  }

  \foreach \n / \i / \o / \a/\b   in {u1/w/left/2/1, u2/w/left/2/1}{
    \gettikzxy{(n\n)}{\xx}{\yy};
    \node at (\xx,\yy+\ysc*25) {$\i_\a$};
    \node at (\xx,\yy+\ysc*15) {$\i_\b$};
  }  
  \foreach \n / \i / \o / \a/\b  in {%
    w1/u/right/1/2, w2/u/right/1/2}{
    \gettikzxy{(n\n)}{\xx}{\yy};
    \node at (\xx,\yy-\ysc*18) {$\i_\a$};
    \node at (\xx,\yy-\ysc*28) {$\i_\b$};
  }
  \foreach \s/\t in {u1/w2,u2/w1}{
    \draw[blackline] (n\s) -- (n\t);
  } 
\end{tikzpicture}
~~\qquad
\begin{tikzpicture}
  \node at (0, -0.5) (Layer1) {\Large $P_3$:};
  \foreach \i in {1, 2}
  {
    \foreach \j/\p/\o in {u/0/left,w/1/right} {
      \node[draw, circle, minimum size=3ex, inner sep=1pt] at (\i*\dist, -\p*\xsc) (n\j\i) {$\j_\i$};
    }
  }

  \foreach \n / \i / \o / \a/\b   in {u1/w/left/1/2, u2/w/left/2/1}{
    \gettikzxy{(n\n)}{\xx}{\yy};
    \node at (\xx,\yy+\ysc*25) {$\i_\a$};
    \node at (\xx,\yy+\ysc*15) {$\i_\b$};
  }  
  \foreach \n / \i / \o / \a/\b  in {%
    w1/u/right/2/1, w2/u/right/1/2}{
    \gettikzxy{(n\n)}{\xx}{\yy};
    \node at (\xx,\yy-\ysc*18) {$\i_\a$};
    \node at (\xx,\yy-\ysc*28) {$\i_\b$};
  }
  \foreach \s/\t in {u1/w1,u2/w2}{
    \draw[blackline] (n\s) -- (n\t);
  } 
  \foreach \s/\t in {u1/w2,u2/w1} {
    \draw[greenline] (n\s) -- (n\t);
  }
\end{tikzpicture}
\par}

\noindent 
In the above diagram the preferences are depicted right above (respectively, right below) the corresponding agents; preferences are represented through vertical lists where more preferred agents are put above the less preferred ones.
For example, in the first layer, 
all agents from the same set have the same preference list, \emph{i.e.}\ both~$u_1$ and $u_2$ rank $w_1$ higher than $w_2$ while both $w_1$ and $w_2$ rank $u_1$ higher than $u_2$.
Similarly, in the second layer, both~$u_1$ and $u_2$ rank $w_2$ higher than $w_1$ while both $w_1$ and $w_2$ rank $u_1$ higher than $u_2$.
In the last layer, the preference lists of two agents from the same set are reverse to each other. For instance, $u_1$ ranks $w_1$ higher than $w_2$, which is opposite to $u_2$. 
In terms of the classic stable marriage problem, we will have three independent instances, one for each layer. 
The corresponding stable matching(s) for each instance are depicted through the edges between the agents. 
For instance, the first layer admits exactly one stable matching, which matches $u_1$ with $w_1$, and $u_2$ with $w_2$.
Yet, if we want to take all these layers jointly into account, then we need to extend the traditional concept of stability.

With multiple preference lists for each agent,
there are many natural ways to extend the original stability concept.
We propose three naturally emerging concepts of stability. Assume each agent has $\ell$ (possibly different) preferences lists.
All three concepts are defined for a certain threshold~$\alpha$ with $1\le \alpha \le \ell$, which quantifies ``the strength'' of stability.
In the following, we briefly describe our three concepts and defer the formal definitions to \cref{sec:defi}.
\begin{itemize}[-]
\item The first one, called \myemph{\gstability{$\alpha$}}, 
extends the original stability concept in a straightforward way.
It assumes that the matched pairs agree on a set~$S$ of $\alpha$ layers where no unmatched pair is blocking the matching in any layer from~$S$.

In our introductory example, the matching~$M_1=\{\{u_1,w_1\}, \{u_2,w_2\}\}$ is stable in the first and the last layer, and thus it is a \gstable{$2$} matching.

\item The second one, called \myemph{\abstability{$\alpha$}}, 
changes the ``blocking ability'' of the unmatched pairs.
It \myemph{forbids} an unmatched pair to block more than $\ell-\alpha$ layers. In other words, each pair of matched agents needs to be stable in some $\alpha$ layers, but the choice of these layers can be different for different pairs.

Considering again our running example, we can verify that the \gstable{$2$} matching~$M_1$ is also \abstable{$2$} as each unmatched pair is blocking at most one layer.
Indeed, we will see that \abstability{$\alpha$} is strictly weaker than \gstability{$\alpha$} (\cref{prop:relation1} and \cref{ex:matching_istab_bstab_not_gstab}).

\item The last one, called \myemph{\aistability{$\alpha$}},
focuses on the ``willingness'' of an agent to stay with its partner.
It requires that for each unmatched pair, at least one of the agents prefers to stay with its partner in at least $\alpha$ layers.

In our introductory example, the matching~$M_1$ is also \aistable{$2$}. 
Thus, it is tempting to assume that \aistability{$\alpha$} also generalizes \gstability{$\alpha$}.
This is, however, not true as the following matching~$M_2=\{\{u_1,w_2\}, \{u_2,w_1\}\}$ is \gstable{$2$} but not \aistable{$2$}. 
Neither does the latter implies the former.
We refer to \cref{ex:matching_bstab_not_istab} for more explanations.
\end{itemize}

\subsection{Related work}
While we are not aware of research on an arbitrary number~$\ell$ of layers, there is some work on $\ell =2$~layers. 
\citet{Weem1999} considered the case where 
each agent has two preference lists that are 
the reverse of each other.
He provided a polynomial-time algorithm to find a \myemph{bistable} matching, \emph{i.e.}\ a matching that is stable in both layers.
Thus, while his concept falls into our \gstability{$\alpha$} concept for $\alpha=\ell=2$, it is a special case since the preference lists in the two layers are the reverse of each other.
In fact, for $\alpha=\ell=2$, we show that  the complexity of determining \gstability{$\alpha$} is NP-hard. 

Aggregating the preference lists of multiple layers into one (by comparing each pair of agents) and then searching for a ``stable'' matching for the agents with aggregated preferences is a plausible approach to multi-modal stable marriages.
As already noted by \citet{FarGeoKoe2016}, the aggregated preferences may be intransitive or even cyclic. 
Addressing this situation, they consider a generalized variant of \SM, 
where each agent~$u$ of one side, say~$U$, has a strict preference list~$\pref_u$ (as in the original \SM) 
while each agent~$w$ of the other side, say $W$, may order each possible pair of partners separately, expressed by a subset~$B_w \subseteq U\times U$ of ordered pairs.
They defined a matching~$M$ to be \myemph{stable} if no unmatched pair~$\{u,w\}$ satisfies ``$w\pref_u M(u)$ and $(u,M(u))\in B_w$''.
It turns out that our concept of individual stability and their concept for a more generalized case where both sides of the agents may have intransitive preferences are related, and we can use one of their results as a subroutine. In a way, our analysis provides a more fine-grained view, since we consider a richer model and thus are able to discuss how certain assumptions on elements of this model (e.g., the number of layers, the threshold value $\alpha$, etc.) affect the computational complexity of the problem. 

\citet{ABFGHMR17} considered a variant of \SM{}, where each agent has a probability for each ordered pair of potential partners.
Assigning a probability of 1 to either $(x,y)$ or $(y,x)$ for each $x$ and $y$, their variant is closely related to the one of \citet{FarGeoKoe2016}
and is shown to be NP-hard.

We refer to several expositions \cite{Knuth1976,GusfieldIrving1989,IwamaMiyazaki2008,Manlove2013,KMR16,Bir17} for a broader overview on \SM and related problems.

\subsection{Our contributions} 
We introduce three main concepts of stability for \SM with multi-modal preferences.
In \cref{sec:defi} we formally define these concepts, 
\myemph{global stability}, \myemph{pair stability}, and \myemph{individual stability}, and provide motivating and illustrating examples.
In \cref{sec:relation}, we study the relations between the three concepts and
show that pair stability is the least restrictive form while global and 
individual stability are in general incomparable (also see
\cref{fig:relation_between_stability_concepts} for a much refined picture).
In \cref{sec:all-stab}, we consider the special case of 
all-layers stability ($\alpha = \ell$) for the three concepts.
On the one hand, we provide a polynomial-time algorithm for checking individual stability for arbitrary large number of preference lists.
On the other hand, through an involved construction, we show NP-hardness for the other two stability concepts, even if there are only two layers.
The hardness results demonstrate a complexity dichotomy for both global and pair stability since for single-layer preference lists, all three concepts of stability are the same and polynomial-time computable.
In \cref{sec:notall-stab} we investigate the case of finding stable matchings with respect to less than all layers and only find NP-hardness results.
In \cref{sec:constr-pref}, we identify two special scenarios with strong but natural 
restrictions on the preference lists. 
For the fist scenario we assume that one side of the agents has \myemph{single-layered preferences}, \emph{i.e.}\ on one side the preference list of each agent remains the same in all layers.
We find that under such restrictions two out of three studied concepts are equivalent, and can be computed in polynomial time; for global stability we obtain W[1]-hardness (and also NP-hardness) and XP~membership for 
the threshold parameter~$\alpha$.
In the second scenario, we assume that 
the preferences of all agents on each side are uniform in each layer, \emph{i.e.}\ 
when for each fixed layer and side all agents have the same preference list, and when considering individual stability we find surprising tight connections 
to the complexity of the \textsc{Graph Isomorphism} problem.
\cref{table:results} gives a broad overview on our complexity 
results.

{
\begin{table}[t]
\caption{The computational complexity of finding matchings stable according to the three consideerd definitions---\gstability{$\alpha$}, \abstability{$\alpha$}, and \aistability{$\alpha$}---for instances with $2n$ agents and $\ell$ layers. All results hold for each value of $\alpha$ specified in the first column. Results marked with~$^{*}$ hold even if we assume that each agent of one side has the same preference list in all layers. The NP-hardness results hold even for a fixed number of layers.}
\centering
\resizebox{\textwidth}{!}{
\begin{tabular}{@{}l l l l@{}}
  \toprule
  Parameters & global stability & pair stability & individual stability\\
  \midrule
  Arbitrary \\
  \; $1=\alpha $ & $O(n^2)$~\cite{GaleShapley1962} &  $O(n^2)$~\cite{GaleShapley1962} & $O(n^2)$~\cite{GaleShapley1962} \\
  \; $2 \le \alpha=\ell$ & \cellnph{NP-h~[T.~\ref{thm:global-hard-alpha=ell=constant}+P.~\ref{prop:global=pair-hard-alpha=ell}]}  & \cellnphh{NP-h~[C.~\ref{cor:pair-hard-alpha=ell=constant}+P.~\ref{prop:global=pair-hard-alpha=ell}] } & \cellp{$O(\ell\cdot n^2)$~[T.~\ref{thm:l_individual_polynomial}]}\\
  \; $\lfloor \nicefrac{\ell}{2} \rfloor < \alpha<\ell$ &  \cellnphh{NP-h~[P.~\ref{prop:global-alpha_ge_2}]}  &  \cellnphh{NP-h~[P.~\ref{pro:pair-alpha_ge_ell/2}+P.~\ref{prop:hardness_pair_stable-alpha=ell/2+1}]} & ?\\
  \; $2\le \alpha \le \lfloor \nicefrac{\ell}{2} \rfloor$ &  \cellnphh{NP-h~[P.~\ref{prop:global-alpha_ge_2}]} &  \cellnphh{NP-h$^{*}$~[C.~\ref{cor:hardness_pair_stable}]} & \cellnph{NP-h$^{*}$~[T.~\ref{thm:hardness_individual_stable}]} \\
  \\[-2ex]
  Single-layered\\
             & \cellnph{NP-h for unbounded~$\alpha$~[T.~\ref{thm:single-sided-global-w-hard}]} &  \cellnphh{NP-h$^{*}$~when $2\le \alpha \le  \lfloor \nicefrac{\ell}{2} \rfloor$ [C.~\ref{cor:hardness_pair_stable}]}  & \cellnph{NP-h$^{*}$ when $2\le \alpha \le  \lfloor \nicefrac{\ell}{2} \rfloor$~[T.~\ref{thm:hardness_individual_stable}]}\\
 & \cellwh{W[1]-h \& in XP for $\alpha$~[T.~\ref{thm:single-sided-global-w-hard}]} 
                                 & \cellp{$O(\ell \cdot n^2)$ when $\alpha > \lfloor \nicefrac{\ell}{2} \rfloor$~[P.~\ref{prop:single-sided-individual-poly}]} & \cellp{$O(\ell \cdot n^2)$ when $\alpha > \lfloor \nicefrac{\ell}{2} \rfloor $~[P.~\ref{prop:single-sided-individual-poly}]}\\
  \\[-2ex]
  Uniform\\
  \; $\alpha \geq \nicefrac{\ell}{2}+1$& \cellp{$O(\ell \cdot n)$~[P.~\ref{prop:uniform-global-p}]}  & ? & \celllogsnp{$n^{O(\log{(n)})}+O(\ell \cdot n^2)$~[C.~\ref{cor:uniform-alpha>l/2-individual-subexp}]}\\
  \; $\alpha = \nicefrac{\ell}{2}$& \cellp{$O(\ell \cdot n)$~[P.~\ref{prop:uniform-global-p}]} & ? & \cellgh{\textsc{Graph Isom.}-hard~[T.~\ref{thm:uniform_istable_matching_as_hard_as_isomorphism}]}\\
  \bottomrule
\end{tabular}
}
\label{table:results}
\end{table}
}

\section{Definitions and Notations}
\label{sec:defi}
For each natural number $t$ by $[t]$ we denote the set $\{1, 2, \ldots, t\}$.

Let $U = \{u_1, \ldots, u_n\}$ and $W = \{w_1, \ldots, w_n\}$ be two $n$-element disjoint sets of agents. 
There are~$\ell$~layers of preferences, where $\ell$ is a non-negative integer.
For each $i \in [\ell]$ and each $u \in U$, let $\pref_u^{(i)}$ be a linear order on~$W$ that represents the ranking of agent $u$ over all agents from $W$ in layer~$i$. 
Analogously, for each $i \in [\ell]$ and each $w \in W$, the symbol~$\pref_w^{(i)}$ represents a linear order on~$U$ that encodes preferences of~$w$ in layer~$i$.
We refer to such linear orders as preference lists.
A preference profile~$P_i$ of layer~$i\in[\ell]$ is a collection of preference lists of all the agents in layer $i$,~$\{\succ^{(i)}_{a} \mid a \in U \cup W\}$.

Let $\pairset = \{\{u,w\}\mid u \in U \wedge w \in W\}$.
A \myemph{matching}~$M\subseteq \pairset$ is a set of pairwisely disjoint pairs, \emph{i.e.}\ 
for each two pairs~$p, p'\in M$ it holds that $p \cap p' = \emptyset$.
If $\{u,w\}\in M$, then we also use $M(u)$ to refer to $w$ and $M(w)$ to refer to $u$,
and we say that $u$ and $w$ are their respective partners under $M$;
otherwise we say that $\{ u,w\}$ is an \myemph{unmatched pair}.
\cref{ex:first_example} below shows an example matching and introduces a graphical notation that we will use throughout the paper. 

\begin{example}\label{ex:first_example}
Consider two sets of agents, $U=\{u_1,u_2,u_3\}$ and $W=\{w_1,w_2,w_3\}$, and two layers of preference profiles, $P_1$ and $P_2$. Let us recall that in the following diagram the preferences are represented through vertical lists where more preferred agents are put above the less preferred ones. For instance, in the diagram the preference list of agent~$u_3$ in the first layer (profile~$P_1$) is \mbox{$w_2 \pref_{u_3}^{(1)} w_3 \pref_{u_3}^{(1)} w_1$}. 

%
%
%
%
%


{\centering
\begin{tikzpicture}
  \node at (0, -0.5) (Layer1) {\Large $P_1$:};
  \foreach \i in {1, 2, 3}
  {
    \foreach \j/\p/\o in {u/0/left,w/1/right} {
      \node[draw, circle, minimum size=3ex, inner sep=1pt] at (\i*\dist, -\p*\xsc) (n\j\i) {$\j_\i$};
    }
  }

  \foreach \n / \i / \o / \a/\b/\c   in {u1/w/left/3/2/1, u2/w/left/1/2/3, u3/w/left/2/3/1}{
    \gettikzxy{(n\n)}{\xx}{\yy};
    \node at (\xx,\yy+\ysc*35) {$\i_\a$};
    \node at (\xx,\yy+\ysc*25) {$\i_\b$};
    \node at (\xx,\yy+\ysc*15) {$\i_\c$};
  }  
  \foreach \n / \i / \o / \a/\b/\c  in {%
    w1/u/right/2/3/1, w2/u/right/3/1/2, w3/u/right/3/1/2}{
    \gettikzxy{(n\n)}{\xx}{\yy};
    \node at (\xx,\yy-\ysc*18) {$\i_\a$};
    \node at (\xx,\yy-\ysc*28) {$\i_\b$};
    \node at (\xx,\yy-\ysc*38) {$\i_\c$};
  }
  \foreach \s/\t in {u1/w3,u2/w1,u3/w2} {
    \draw[blackline] (n\s) -- (n\t);
  }
\end{tikzpicture}
~~\qquad
\begin{tikzpicture}
  \node at (0, -0.5) (Layer1) {\Large $P_2$:};
  \foreach \i in {1, 2, 3}
  {
    \foreach \j/\p/\o in {u/0/left,w/1/right} {
      \node[draw, circle, minimum size=3ex, inner sep=1pt] at (\i*\dist, -\p*\xsc) (n\j\i) {$\j_\i$};
    }
  }

  \foreach \n / \i / \o / \a/\b/\c   in {u1/w/left/2/3/1, u2/w/left/3/1/2, u3/w/left/1/2/3} {
    \gettikzxy{(n\n)}{\xx}{\yy};
    \node at (\xx,\yy+\ysc*35) {$\i_\a$};
    \node at (\xx,\yy+\ysc*25) {$\i_\b$};
    \node at (\xx,\yy+\ysc*15) {$\i_\c$};
  }  
  \foreach \n / \i / \o / \a/\b/\c  in {%
     w1/u/right/1/2/3, w2/u/right/2/3/1, w3/u/right/3/1/2}{
    \gettikzxy{(n\n)}{\xx}{\yy};
    \node at (\xx,\yy-\ysc*18) {$\i_\a$};
    \node at (\xx,\yy-\ysc*28) {$\i_\b$};
    \node at (\xx,\yy-\ysc*38) {$\i_\c$};
  }
 
  \foreach \s/\t in {u1/w1,u2/w2,u3/w3} {
    \draw[blackline] (n\s) -- (n\t);
  }
  \foreach \s/\t in {u1/w2,u2/w3,u3/w1} {
    \draw[blueline] (n\s) edge (n\t);
  }
  \foreach \s/\t in {u1/w3,u2/w1,u3/w2} {
    \draw[redline] (n\s) edge (n\t);
  }

\end{tikzpicture}
\par}

\noindent
In our diagrams we will depict stable matchings in each layer through edges between matched nodes. If a layer has more than one stable matching, then we will use different types of lines (solid, dashed, dotted) and different colors to distinguish between them. For instance, in the above example profile~$P_1$ has one stable matching $M_1=\{\{u_1, w_3\}, \{u_2, w_1\}, \{u_3, w_2\}\}$, and~$P_2$ has three stable matchings:
\begin{inparaenum}[(1)]
\item $M_2=\{\{u_1, w_1\}, \{u_2, w_2\}, \{u_3, w_3\}\}$,
\item $M_3=\{\{u_1, w_2\}, \{u_2, w_3\}, \{u_3, w_1\}\}$, and
\item $M_1$.
\end{inparaenum}
\hfill $\diamond$
\end{example}

Let us now introduce two notions that we will use when defining various concepts of stability.

\begin{definition}[Dominating pairs and blocking pairs]
  Let $M$ be a matching over $U \cup W$.
  Consider an unmatched pair~$\{u,w\}\in (\pairset)\setminus M$ and a layer~$i\in [\ell ]$.
  We say that \myemph{$\{u,w\}$ dominates $\{u,v\}$ in layer~$i$} if $w \pref^{(i)}_{u} v$.
  We say that \myemph{$\{u,w\}$ is blocking matching~$M$ in layer~$i$}
  if it holds that 
  \begin{compactenum}[(1)]
    \item $u$ is unmatched in $M$ or $\{u,w\}$ dominates $\{u,M(u)\}$ in layer~$i$,
    and 
    \item $w$ is unmatched in $M$ or $\{u,w\}$ dominates $\{w,M(w)\}$ in layer~$i$.
  \end{compactenum}
%
\end{definition}

For a single layer~$i$, a matching $M$ is \myemph{stable in layer~$i$} if no unmatched pair is blocking $M$ in layer~$i$.
Let us illustrate the concept of dominating and blocking pairs through \cref{ex:first_example}.
Consider the matching $M_3 = \{\{u_1, w_2\}, \{u_2, w_3\}, \{u_3, w_1\}\}$ and profile~$P_1$ of layer~$1$. Here, pair $\{u_1, w_3\}$ dominates both $\{u_1, w_2\}$ (since $u_1$ prefers $w_3$ to $w_2$) and
$\{u_2, w_3\}$ (since $w_3$ prefers $u_1$ to $u_2$). Thus, $\{u_1, w_3\}$ is a blocking pair and so it witnesses that $M$ is not stable in profile~$P_1$.

We are interested in matchings which are stable in multiple layers, \emph{i.e.}\ we aim at generalizing the classic \textsc{Stable Marriage} problem~\cite{GaleShapley1962, Knuth1976, GusfieldIrving1989, Manlove2013} which is defined for a single layer to the case of multiple layers.
The idea behind each of the concepts defined below is similar: in order to call a matching stable for multiple layers we require
that it must be stable in at least a certain, given number of layers $\alpha$ ($\alpha$ is a number indicating the ``strength'' of the stability).
However, for different concepts we require a different level of agreement with respect to which layers are required for stability.
Informally speaking, on the one end of the spectrum we have a variant of stability where we require a global agreement of the agents regarding the set of $\alpha$ layers for which the matching must be stable. On the other end of the spectrum we have a variant where 
we assume that the agents act independently: an agent $a$ would deviate if it would find another agent, say $b$, such that $a$
prefers~$b$ to its matched partner in some $\alpha$ layers, and $b$ prefers $a$ to its matched partner in another, possibly different, set of $\alpha$ layers.
In the intermediate case, we require that a deviating pair must agree on the subset of layers which form the reason for deviation.
We formally define the three concepts below.

\subsection{\gstability{$\alpha$}}

Informally speaking, a matching~$M$ is \myemph{\gstable{$\alpha$}} if there exist $\alpha$ layers in each of which $M$ is stable. 

\begin{definition}[global stability]\label{def:global}
A matching $M$ is \emph{\gstable{$\alpha$}} if there exists a set~$S\subseteq [\ell ]$ of $\alpha$~layers,
such that for each layer~$i\in S$ and for each unmatched pair~$\{u,w\}\in \pairset\setminus M$
at least one of the two following conditions holds:
   \begin{compactenum}[(1)]
    \item pair~$\{u,M(u)\}$ dominates $\{u,w\}$ in layer~$i$, 
    or 
    \item 
    pair~$\{w, M(w)\}$ dominates $\{w, u\}$ in layer~$i$. 
  \end{compactenum}
\end{definition}

The following example describes a scenario where the above concept of multi-layer stability appears to be useful.

\begin{example}\label{ex:global} 
Assume that the preferences of the agents depend on external circumstances which are not known a priori. 
Assume that each layer represents a different possible state of the universe. If we want to find a matching that is stable in as many states of the universe as possible, then we need to find an \gstable{$\alpha$} matching for the highest possible value of $\alpha$.    
\hfill $\diamond$
\end{example}

Already for \gstability{$\alpha$} we see substantial differences compared to the original concept of stability for a single layer. 
It is guaranteed that such a matching always exists for $\alpha = 1$; indeed this would be a matching that is stable in an arbitrary layer.  
However, one can observe that as soon as $\alpha > 1$ an \gstable{$\alpha$} matching might not exist (see \cref{ex:matching_istab_bstab_not_gstab}).

\subsection{\abstability{$\alpha$}}
While $\alpha$-layer global stability requires that the agents globally agree on a certain subset of $\alpha$~layers for which the matching should be stable,
pair stability forbids each unmatched pair to block more than a certain number of layers.
The formal definition, using the domination concept, is as follows:
\begin{definition}[pair stability]\label{def:abstability}
  A matching $M$ is \emph{\abstable{$\alpha$}} if for each unmatched pair~$\{u,w\}\in (\pairset) \setminus M$,
  there is a set~$S\subseteq [\ell]$ of $\alpha$~layers such that
  for each layer~$i\in S$ at least one of the following conditions holds:
  \begin{compactenum}[(1)]
    \item pair~$\{u,M(u)\}$ dominates $\{u,w\}$ in layer~$i$, 
    or 
    \item 
    pair~$\{w, M(w)\}$ dominates $\{w, u\}$ in layer~$i$. 
  \end{compactenum}
\end{definition}


\cref{def:abstability} can be equivalently formulated using a generalization of the concept of blocking pairs.
Let $\beta\in [\ell]$ be an integer bound. We say that a pair $\{u,w\}\in (\pairset)\setminus M$ is \myemph{$\beta$-blocking}~$M$
if there exists a subset~$S \subseteq \{1,2,\ldots, \ell\}$ of $\beta$ layers such that for each $i \in S$, pair~$\{u,w\}$ is blocking~$M$ in layer~$i$.

\begin{proposition}\label{prop:pair-stable-alternative-def}
  A matching~$M$ is \abstable{$\alpha$} if and only if no unmatched pair~$p$ is $(\ell-\alpha+1)$-blocking~$M$.
\end{proposition}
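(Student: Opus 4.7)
The plan is to reduce the stated equivalence to a per-layer dichotomy followed by elementary set-cardinality counting. First I would fix an unmatched pair $\{u,w\} \in (U \star W) \setminus M$ and observe that, because preference lists are strict linear orders and both $M(u)$ and $M(w)$ are defined (the matchings of interest are bijections, and if an agent were unmatched the argument is handled separately as a degenerate case), in any single layer $i \in [\ell]$ exactly one of the following holds: (a) $\{u,w\}$ is blocking $M$ in layer $i$, or (b) at least one of the two conditions in \cref{def:abstability} is satisfied in layer $i$. Indeed, a blocking pair requires $\{u,w\}$ to dominate its matched counterparts on \emph{both} sides; by strictness its negation is precisely that on at least one of the two sides the matched pair dominates $\{u,w\}$.

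Given this per-layer dichotomy, I would set $B(u,w) := \{i \in [\ell] : \{u,w\} \text{ is blocking } M \text{ in layer } i\}$, so that $[\ell] \setminus B(u,w)$ is exactly the set of layers in which condition (1) or (2) of \cref{def:abstability} holds. Hence $M$ is $\alpha$-layer pair stable if and only if, for every unmatched pair $\{u,w\}$, there is a size-$\alpha$ subset of $[\ell] \setminus B(u,w)$, equivalently $|[\ell] \setminus B(u,w)| \geq \alpha$, equivalently $|B(u,w)| \leq \ell - \alpha$. On the other hand, by the definition of $\beta$-blocking, the pair $\{u,w\}$ is $(\ell-\alpha+1)$-blocking precisely when $|B(u,w)| \geq \ell - \alpha + 1$. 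These two conditions are exact contrapositives, and so the equivalence follows upon universal quantification over all unmatched pairs.

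The only step requiring any genuine care is the per-layer dichotomy, since it is the single place where strictness of the preference lists is invoked; everything else is set-cardinality bookkeeping collapsing to the identity $|B(u,w)| \leq \ell - \alpha \iff |B(u,w)| < \ell - \alpha + 1$. Consequently I expect no real obstacle to the proof beyond writing the dichotomy carefully.
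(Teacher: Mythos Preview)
Your proposal is correct and follows essentially the same approach as the paper's proof: both hinge on the per-layer dichotomy that an unmatched pair either blocks in a given layer or satisfies one of the two domination conditions there, and then reduce the equivalence to the cardinality identity $|B(u,w)| \le \ell-\alpha \iff |B(u,w)| < \ell-\alpha+1$. Your presentation is slightly more direct (explicitly counting layers) where the paper argues via a pigeonhole intersection and a contradiction, but the underlying argument is the same.
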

\begin{proof}
  To prove the statement, we show that a matching~$M$ is \myemph{not} \abstable{$\alpha$} if and only if there is an unmatched pair~$p$ that is $(\ell-\alpha+1)$-blocking $M$.
  For the ``if'' direction, assume that $\{u,w\}$ is an unmatched pair and $R\subseteq [\ell]$ is a subset of $\ell-\alpha+1$ layers such that $\{u,w\}$ is blocking every layer in~$R$.
  Now consider an arbitrary subset~$S\subseteq [\ell]$ of size~$\alpha$. 
  By the cardinalities of $R$ and $S$, it is clear that $S\cap R \neq \emptyset$.
  Let $i\in S\cap R$ be such a layer. 
  Then, by assumption, we have that $\{u,w\}$ is blocking $M$ in layer~$i$.
  This means that none of the conditions stated in \cref{def:abstability} holds.
  Thus, $\{u,w\}$ is an unmatched pair witnessing that $M$ is not \abstable{$\alpha$}.

  For the ``only if'' direction, assume that $M$ is not \abstable{$\alpha$} and let $\{u,w\}$ be an unmatched pair that witnesses the non-\abstability{$\alpha$} of $M$.
  We claim that $\{u,w\}$ is $(\ell-\alpha+1)$-blocking $M$.
  Towards a contradiction, suppose that $\{u,w\}$ is not $(\ell-\alpha+1)$-blocking $M$.
  Then, there must be a subset~$S\subseteq [\ell]$ of at least $\alpha$ layers where the pair $\{u,w\}$ is not blocking $M$ in each layer in $S$.
  Equivalently, we can say that for each layer~$i\in S$, $\{u,M(u)\}$ dominates $\{u,w\}$ in layer~$i$ or $\{w,M(w)\}$ dominates $\{u,w\}$ in layer~$i$---a contradiction to $\{u,w\}$ being a witness.
\end{proof}

The following example motivates \abstability{$\alpha$}.

\begin{example}\label{ex:pair} 
Consider the case when the preferences of the agents depend on a context, yet a context is pair-specific. For instance, in matchmaking a woman may have different preferences over men depending on which country they will decide to live in. Thus, a pair of a man and a woman is blocking if they agree on certain conditions, and if they will find each other more attractive than their current partners according to the agreed conditions. 
\hfill $\diamond$
\end{example}

In \cref{sec:relation}, we show that \gstability{$\alpha$} implies \abstability{$\alpha$} (\cref{prop:relation1}). This, among other things, implies that for $\alpha=1$
an \abstable{$\alpha$} matching always exists. 
However, as soon as $\alpha \ge 2$ the existence is no longer guaranteed (see \cref{ex:matching_istab_bstab_not_gstab}).

\subsection{\aistability{$\alpha$}}

We move to the third and last concept of stability.

\begin{definition}[individual stability]\label{def:individual}
  A matching~$M$ is \myemph{\aistable{$\alpha$}} 
  if for each unmatched pair~$\{u,w\}\in (\pairset )\setminus M$
  there is a set~$S \subseteq [\ell ]$ of $\alpha$ layers
  such that at least one of the following conditions holds: 
  \begin{compactenum}[(1)]
    \item pair~$\{u,M(u)\}$ dominates $\{u,w\}$ in each layer of $S$, 
    or 
    \item 
    pair~$\{w, M(w)\}$ dominates $\{w, u\}$ in each layer of $S$. 
  \end{compactenum}
\end{definition}

The following example illustrates a potential application in the domain of partnership agencies. 

\begin{example}\label{ex:individual} 
Assume that each layer describes a single criterion for preferences. 
The preferences of each agent may differ depending on the criterion.
For instance, the two sets of agents can represent, respectively, men and women, as in the traditional stable marriage problem.
Different criteria may correspond, for instance, to the intelligence, sense of humor, physical appearance etc. Assume that an agent $a$ will have no incentive to break his or her relationship with~$b$, and to have an affair with~$c$ if he or she prefers~$b$ to~$c$ according to at least $\alpha$ criteria. In order to match men with women so that they form stable relationships, one needs to find an \aistable{$\alpha$} matching.  
\hfill $\diamond$
\end{example}

\cref{def:individual} can be equivalently formulated using a generalization of the concept of dominating pairs.
Let $\beta\in [\ell]$ be an integer bound.
We say that a pair~$\{u,w\}$ is \myemph{$\beta$-dominating} $\{u,w'\}$ if there is a subset~$R\subseteq [\ell]$ of $\beta$~layers such that for each $i\in R$ the pair~$\{u,w\}$ dominates $\{u,w'\}$ in layer~$i$.

\begin{proposition}\label{prop:individual-stable-alternative-def}
  A matching~$M$ is \aistable{$\alpha$} if and only if no unmatched pair~$\{u,w\}$ exists that is both $(\ell-\alpha+1)$-dominating~$\{u,M(u)\}$ and $(\ell-\alpha+1)$-dominating~$\{w,M(w)\}$.
\end{proposition}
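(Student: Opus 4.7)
The plan is to mirror the proof of \cref{prop:pair-stable-alternative-def} by translating the existential quantifier over the set~$S$ of $\alpha$ layers into a simple cardinality condition on two sets of layers. For an unmatched pair $\{u,w\}$, I would introduce the sets
\[
A \;=\; \{\, i \in [\ell] : \{u,M(u)\}\text{ dominates }\{u,w\}\text{ in layer }i \,\},
\]
\[
B \;=\; \{\, i \in [\ell] : \{w,M(w)\}\text{ dominates }\{w,u\}\text{ in layer }i \,\}.
\]
Since preferences are strict linear orders and $w \neq M(u)$ (as $\{u,w\} \notin M$ while $\{u,M(u)\} \in M$), in every layer~$i$ exactly one of the pairs $\{u,w\}$ and $\{u,M(u)\}$ dominates the other. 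Hence $[\ell]\setminus A$ is precisely the set of layers in which $\{u,w\}$ dominates $\{u,M(u)\}$; analogously for $B$.

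Next I would unfold \cref{def:individual}: the pair $\{u,w\}$ fails to witness non-stability iff there is an $\alpha$-element subset $S\subseteq[\ell]$ with $S\subseteq A$ or $S\subseteq B$, and this in turn is equivalent to $|A|\ge\alpha$ or $|B|\ge\alpha$. Consequently, $M$ is \emph{not} \aistable{$\alpha$} if and only if there exists an unmatched pair $\{u,w\}$ with $|A|\le\alpha-1$ and $|B|\le\alpha-1$ simultaneously.

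Finally, I would translate these cardinality conditions back into the language of $\beta$-domination. By the complementarity observation, $|A|\le\alpha-1$ is equivalent to $|[\ell]\setminus A|\ge\ell-\alpha+1$, which by definition is the statement that $\{u,w\}$ is $(\ell-\alpha+1)$-dominating $\{u,M(u)\}$; an identical argument applies to $B$ and $\{w,M(w)\}$. Thus the characterization of non-stability obtained in the previous step is exactly the negation of the right-hand side of the proposition, completing both directions simultaneously. I do not anticipate any real obstacle here; the only point worth flagging explicitly is the per-layer complementarity between ``dominates'' and ``is dominated by,'' which is where strictness of the preference lists is used.
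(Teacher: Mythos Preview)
Your proposal is correct and essentially follows the same approach as the paper's own proof. The paper argues the two directions separately (using a pigeonhole-style intersection argument for one direction and the complementarity observation for the other), whereas you package both directions into a single cardinality equivalence $|A|\ge\alpha \vee |B|\ge\alpha$; this is a slightly cleaner presentation of the same underlying idea, and the one nontrivial ingredient---the per-layer complementarity coming from strictness of the preference lists---is exactly what the paper also relies on.
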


\begin{proof}
   To prove the statement, we show that a matching~$M$ is \myemph{not} \aistable{$\alpha$} if and only if there is an unmatched pair~$p$ that is $(\ell-\alpha+1)$-dominating $\{u,M(u)\}$ and $(\ell-\alpha+1)$-dominating $\{w,M(w)\}$.
  For the ``if'' direction, assume that $\{u,w\}$ is an unmatched pair and $R_1, R_2\subseteq [\ell]$ are two (possibly different) subsets of $\ell-\alpha+1$ layers each, such that $\{u,w\}$ is dominating $\{u,M(u)\}$ in each layer~$i\in R_1$
  and is dominating $\{w,M(w)\}$ in each layer~$j\in R_2$.
  Now consider an arbitrary subset~$S\subseteq [\ell]$ of size $\alpha$. 
  By the cardinalities of $R_1$, $R_2$, and $S$ , it is clear that $S\cap R_1 \neq \emptyset$ and $S\cap R_2\neq \emptyset$.
  Let $i\in S\cap R_1$ and $j\in S\cap  R_2$ be two layers in the intersections.
  Then, by assumption, we have that $\{u,w\}$ is dominating $\{u,M(u)\}$ in layer~$i$ and $\{u,w\}$ is dominating~$\{w,M(w)\}$ in layer~$j$.
  This means that none of the conditions stated in \cref{def:individual} holds.
  Thus, $\{u,w\}$ is an unmatched pair witnessing that~$M$ is not \aistable{$\alpha$}.
  
  For the ``only if'' direction, assume that $M$ is not \aistable{$\alpha$} and let $\{u,w\}$ be an unmatched pair that witnesses the non-\aistability{$\alpha$} of $M$.
  We claim that $\{u,w\}$ is $(\ell-\alpha+1)$-dominating $\{u,M(u)\}$ and 
  is $(\ell-\alpha+1)$-dominating $\{w,M(w)\}$.
  Towards a contradiction, first suppose that~$\{u,w\}$ is \emph{not} $(\ell-\alpha+1)$-dominating $\{u,M(u)\}$, meaning that there are at most $\ell-\alpha$ layers where $\{u,w\}$ dominates $\{u,M(u)\}$.
  This implies that there is a subset~$S\subseteq [\ell]$ of $\alpha$ layers such that 
  for each $i \in S$, the pair~$\{u,M(u)\}$ is dominating $\{u,w\}$, a contradiction to $\{u,w\}$ being a witness of the non-\aistability{$\alpha$} of $M$.
  Analogously, if  $\{u,w\}$ was \emph{not} $(\ell-\alpha+1)$-dominating $\{w,M(w)\}$, then we could obtain the same contradiction.
\end{proof}

For $\alpha=1$ an \aistable{$\alpha$} matching always exists (it will follow from \cref{prop:relation1,prop:1-pair<=>1-individual}); however, this is no longer the case when $\alpha \ge 2$ (see \cref{prop:relation2} and \cref{ex:matching_istab_bstab_not_gstab}). 

Observe that according to \aistability{$\alpha$} the preferences of the agents can be represented as \myemph{sets} of linear orders: it does not matter which preference order comes from which layer. This is not the case for the other two concepts. 

\subsection{Central computational problems}
In this paper, we study the algorithmic complexity of finding matchings that are stable according to the above definitions. To this end, we first investigate 
how the three concepts relate to each other.
Next, we formally define the search problem of finding an \gstable{$\alpha$} matching.

\searchprob{\GSM}
{Two disjoint sets of $n$~agents each, $U$ and $W$,
$\ell$ preference profiles, and an integer bound~$\alpha \in [\ell]$.}
{Return an \gstable{$\alpha$} matching if one exists, or claim there is no such.}

\noindent The other two problems, \BSM and \ISM, are defined analogously.

\section{Relations Between the Multi-Layer Concepts of Stability}
\label{sec:relation}
Below we establish relations among the three concepts. We start by showing that \abstability{$\alpha$} is a weaker notion than \gstability{$\alpha$} and \aistability{$\alpha$}.

\begin{proposition}\label{prop:relation1}
  An \gstable{$\alpha$} matching is \abstable{$\alpha$}.
\end{proposition}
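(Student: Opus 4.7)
The plan is to observe that the only difference between \gstability{$\alpha$} and \abstability{$\alpha$} is the order of the quantifiers over the set~$S$ of layers and the unmatched pairs: global stability demands a single witness set~$S$ of $\alpha$ layers that works simultaneously for every unmatched pair, whereas pair stability allows the choice of~$S$ to depend on the unmatched pair under consideration. Consequently the implication should be immediate by exhibiting the same witness.

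Concretely, I would start by fixing an \gstable{$\alpha$} matching~$M$ and letting $S\subseteq [\ell]$ be the set of $\alpha$ layers guaranteed by \cref{def:global}. Then I would pick an arbitrary unmatched pair~$\{u,w\}\in \pairset \setminus M$ and claim that this same set~$S$ serves as a witness for~$\{u,w\}$ in \cref{def:abstability}. Verification is a one-liner: by the definition of global stability applied to the pair~$\{u,w\}$ and to each layer~$i\in S$, either $\{u,M(u)\}$ dominates $\{u,w\}$ in layer~$i$, or $\{w,M(w)\}$ dominates $\{w,u\}$ in layer~$i$—which is exactly what \cref{def:abstability} asks of the witness set~$S$ for~$\{u,w\}$. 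Since $\{u,w\}$ was arbitrary, $M$ is \abstable{$\alpha$}.

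There is essentially no obstacle here; the only thing worth double-checking is that $|S|=\alpha$ in both definitions and that the two disjunctive conditions match verbatim, so that reusing the global witness is literally allowed in the pair-stability definition. I would keep the proof to two or three sentences and, if desired, remark that the converse fails, pointing forward to \cref{ex:matching_istab_bstab_not_gstab} to emphasise that \abstability{$\alpha$} is strictly weaker.
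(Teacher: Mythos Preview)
Your proof is correct. It takes a slightly different route from the paper's own argument, though both are short.

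You argue directly from \cref{def:global} and \cref{def:abstability}: the global witness set~$S$ can simply be reused for every unmatched pair, because the two definitions differ only in the quantifier order over~$S$ and~$\{u,w\}$. The paper instead argues by contradiction through the alternative characterisation of pair stability (\cref{prop:pair-stable-alternative-def}): it assumes an $(\ell-\alpha+1)$-blocking pair exists, takes its witness set~$S'$ of size $\ell-\alpha+1$, and uses $|S|+|S'|=\ell+1$ to force $S\cap S'\neq\emptyset$, yielding a layer in which~$M$ is both stable and blocked. Your approach is more elementary and transparent---it makes the quantifier-swap nature of the implication explicit and needs no auxiliary proposition. The paper's approach has the minor advantage of exercising \cref{prop:pair-stable-alternative-def}, which is the workhorse formulation used repeatedly later; but for this particular proposition your direct argument is cleaner.
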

\begin{proof}
Let $M$ be an \gstable{$\alpha$} matching and let $S \subseteq \{1, \ldots, \ell\}$ be such that $|S| = \alpha$ and
that for each $i \in S$, matching~$M$ is stable in layer~$i$. 
For the sake of contradiction let us assume that $M$ is not
\abstable{$\alpha$}.
By \cref{prop:pair-stable-alternative-def}, let $\{u,w\}$ be an $(\ell-\alpha+1)$-blocking pair for $M$. 
Let $S' \subseteq \{1, \ldots, \ell\}$
be such that $|S'| = \ell - \alpha + 1$ and that for each $i \in S'$, 
pair~$\{u,w\}$ blocks $M$ in layer $i$. Since $|S'| + |S| = \ell + 1$, we get that $S \cap S' \neq \emptyset$.
Let $i \in S \cap S'$. This gives a contradiction since on the one hand $M$ is stable in layer~$i$, and on the other hand $\{u,w\}$ blocks $M$ in layer~$i$.
%
%
\end{proof}

\begin{proposition}\label{prop:relation2}
  An \aistable{$\alpha$} matching is \abstable{$\alpha$}.
\end{proposition}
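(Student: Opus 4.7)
The plan is to argue by contraposition, using the alternative characterizations supplied by Propositions~\ref{prop:pair-stable-alternative-def} and~\ref{prop:individual-stable-alternative-def}. Concretely, I would assume that $M$ is \emph{not} $\alpha$-layer pair stable and derive that $M$ is \emph{not} $\alpha$-layer individually stable either.

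First, Proposition~\ref{prop:pair-stable-alternative-def} applied to the failure of pair stability yields an unmatched pair $\{u,w\}$ together with a set $R \subseteq [\ell]$ of $\ell - \alpha + 1$ layers in each of which $\{u,w\}$ blocks~$M$. Since $|U|=|W|=n$ forces both $u$ and $w$ to be matched under $M$, the defining clauses of ``blocking in layer~$i$'' collapse to the statement that, for every $i \in R$, the pair $\{u,w\}$ \emph{simultaneously} dominates $\{u,M(u)\}$ and $\{w,M(w)\}$ in layer~$i$.

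Consequently, this same set $R$ of size $\ell - \alpha + 1$ witnesses at once that $\{u,w\}$ is $(\ell-\alpha+1)$-dominating $\{u,M(u)\}$ and that $\{u,w\}$ is $(\ell-\alpha+1)$-dominating $\{w,M(w)\}$. Feeding this into Proposition~\ref{prop:individual-stable-alternative-def} directly contradicts the $\alpha$-layer individual stability of $M$, completing the contraposition.

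I do not anticipate any genuine obstacle here. The only point that requires a moment's thought is conceptual: blocking in a single layer bundles both required dominations into that \emph{same} layer, whereas the individual-stability alternative characterization a priori permits the two dominations to be witnessed by different layer sets. The implication goes through because the blocking-pair witness already supplies the most restrictive choice $R_1 = R_2 = R$ for free.
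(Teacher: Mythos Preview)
Your proof is correct and essentially identical to the paper's: both invoke Proposition~\ref{prop:pair-stable-alternative-def} to extract an $(\ell-\alpha+1)$-blocking pair, observe that blocking in a layer entails domination of both matched pairs in that same layer, and then feed the resulting $(\ell-\alpha+1)$-dominations into Proposition~\ref{prop:individual-stable-alternative-def}. One small remark: the fact that $u$ and $w$ are matched is not forced by $|U|=|W|$ alone but by the paper's convention (stated in the introduction) that a matching is a bijection; the paper's own proof relies on this just as silently as you do.
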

\begin{proof}
Consider an \aistable{$\alpha$} matching $M$.
Towards a contradiction, suppose that $M$ is not \abstable{$\alpha$}. 
By \cref{prop:pair-stable-alternative-def}, this means that there exists an unmatched pair~$\{u, w\}$ and a subset~$S'\subseteq [\ell]$ of $\ell - \alpha + 1$ layers such that  $\{u, w\}$ is blocking $M$ in each layer from $S'$. 
This means that $\{u, w\}$ is both $(\ell-\alpha+1)$-dominating $\{u, M(u)\}$
and $(\ell-\alpha+1)$-dominating $\{w,M(w)\}$.
Then, by \cref{prop:individual-stable-alternative-def}, $M$ is not \aistable{$\alpha$}, a contradiction.
\end{proof}

\cref{ex:matching_istab_bstab_not_gstab}, below, shows a matching which is \abstable{$\alpha$}, but which is not \gstable{$\alpha$}. 
This example, together with \cref{prop:relation1}, also shows that \gstability{$\alpha$} is strictly stronger than \abstability{$\alpha$}. 

\begin{example}\label{ex:matching_istab_bstab_not_gstab} 
Consider an instance with six agents and two layers of preference profiles.

{\centering
\begin{tikzpicture}
  \node at (0, -0.5) (Layer1) {\Large $P_1$:};
  \foreach \i in {1, 2, 3}
  {
    \foreach \j/\p/\o in {u/0/left,w/1/right} {
      \node[draw, circle, minimum size=3ex, inner sep=1pt] at (\i*\dist, -\p*\xsc) (n\j\i) {$\j_\i$};
    }
  }

  \foreach \n / \i / \o / \a/\b/\c   in {u1/w/left/1/2/3, u2/w/left/2/1/3, u3/w/left/3/1/2}{
    \gettikzxy{(n\n)}{\xx}{\yy};
    \node at (\xx,\yy+\ysc*35) {$\i_\a$};
    \node at (\xx,\yy+\ysc*25) {$\i_\b$};
    \node at (\xx,\yy+\ysc*15) {$\i_\c$};
  }  
  \foreach \n / \i / \o / \a/\b/\c  in {%
    w1/u/right/2/1/3, w2/u/right/2/1/3, w3/u/right/3/1/2}{
    \gettikzxy{(n\n)}{\xx}{\yy};
    \node at (\xx,\yy-\ysc*18) {$\i_\a$};
    \node at (\xx,\yy-\ysc*28) {$\i_\b$};
    \node at (\xx,\yy-\ysc*38) {$\i_\c$};
  }
  \foreach \s/\t in {u1/w1,u2/w2,u3/w3} {
    \draw[blackline] (n\s) -- (n\t);
  }
\end{tikzpicture}
~~\qquad
\begin{tikzpicture} 
  \node at (0, -0.5) (Layer1) {\Large $P_2$:};
  \foreach \i in {1, 2, 3}
  {
    \foreach \j/\p/\o in {u/0/left,w/1/right} {
      \node[draw, circle, minimum size=3ex, inner sep=1pt] at (\i*\dist, -\p*\xsc) (n\j\i) {$\j_\i$};
    }
  }

  \foreach \n / \i / \o / \a/\b/\c   in {u1/w/left/2/1/3, u2/w/left/3/1/2, u3/w/left/1/3/2} {
    \gettikzxy{(n\n)}{\xx}{\yy};
    \node at (\xx,\yy+\ysc*35) {$\i_\a$};
    \node at (\xx,\yy+\ysc*25) {$\i_\b$};
    \node at (\xx,\yy+\ysc*15) {$\i_\c$};
  }  
  \foreach \n / \i / \o / \a/\b/\c  in {%
     w1/u/right/3/1/2, w2/u/right/1/2/3, w3/u/right/2/1/3}{
    \gettikzxy{(n\n)}{\xx}{\yy};
    \node at (\xx,\yy-\ysc*18) {$\i_\a$};
    \node at (\xx,\yy-\ysc*28) {$\i_\b$};
    \node at (\xx,\yy-\ysc*38) {$\i_\c$};
  }
 
  \foreach \s/\t in {u1/w2,u2/w3,u3/w1} {
    \draw[blackline] (n\s) -- (n\t);
  }
\end{tikzpicture}
\par}

\noindent
Observe that matching $M = \{ \{u_1, w_1\},  \{u_2, w_3\},  \{u_3, w_2\}\}$ is \aistable{$1$} and, thus \abstable{$1$}. 
However,~$M$ is blocked by pair $\{u_2, w_1\}$ in the first layer and by $\{u_1, w_2\}$ in the second. Thus,~$M$ is not \gstable{1}.
Indeed the only \gstable{$1$} matchings are indicated by the solid lines, 
which are also \aistable{$1$} (and thus \abstable{$1$}).

As soon as $\alpha\ge 2$, \abstability{$\alpha$} is not guaranteed to exist, 
even if  $\ell>\alpha$.
To see this we augment the instance with one more layer whose preference lists are identical to the first layer given in \cref{ex:first_example}.
One can verify that for each of all six possible matchings, there is always an unmatched pair that is blocking at least two layers.
\hfill $\diamond$
\end{example}

For $\alpha=1$, we observe that \abstability{$1$} is equivalent to \aistability{$1$}.

\begin{proposition}\label{prop:1-pair<=>1-individual}
  A matching is \abstable{$1$} if and only if it is \aistable{$1$}.
\end{proposition}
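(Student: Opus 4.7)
The plan is to reduce both stability notions to their alternative characterizations (\cref{prop:pair-stable-alternative-def,prop:individual-stable-alternative-def}) and verify that, when $\alpha=1$, the two forbidden conditions collapse to one and the same statement about unmatched pairs. One direction (individual stable $\Rightarrow$ pair stable) is already handed to us by \cref{prop:relation2}; the content of this proposition is therefore the converse direction, and my plan handles both uniformly.

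Concretely, I would first unfold the two characterizations. By \cref{prop:pair-stable-alternative-def} with $\alpha=1$, $M$ is $1$-layer pair stable if and only if no unmatched pair is $\ell$-blocking $M$, i.e., no unmatched pair $\{u,w\}$ blocks $M$ in every single layer. By \cref{prop:individual-stable-alternative-def} with $\alpha=1$, $M$ is $1$-layer individually stable if and only if there is no unmatched pair $\{u,w\}$ such that $\{u,w\}$ is both $\ell$-dominating $\{u,M(u)\}$ and $\ell$-dominating $\{w,M(w)\}$, i.e., such that in every layer $u$ prefers $w$ to $M(u)$ and in every layer $w$ prefers $u$ to $M(w)$.

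The key step is the observation that these two forbidden conditions are literally the same. By the definition of a blocking pair, $\{u,w\}$ blocks $M$ in layer $i$ is exactly the conjunction "$\{u,w\}$ dominates $\{u,M(u)\}$ in layer $i$ and $\{u,w\}$ dominates $\{w,M(w)\}$ in layer $i$" (since both $u$ and $w$ are matched under $M$). Taking the conjunction over all layers $i\in[\ell]$ shows that "$\{u,w\}$ is $\ell$-blocking $M$" is equivalent to "$\{u,w\}$ is $\ell$-dominating $\{u,M(u)\}$ and $\ell$-dominating $\{w,M(w)\}$". Contrapositively, the absence of such a pair is equivalent in the two characterizations, which yields the desired biconditional.

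There is no real obstacle here: the statement is essentially a syntactic unfolding, enabled by the fact that at $\alpha=1$ the parameter $\ell-\alpha+1$ equals $\ell$, so the existential quantifier over subsets of layers hidden in $\beta$-blocking and in $\beta$-dominating both become a universal "in every layer." The only minor care needed is to note that $M$ is a perfect matching over $U\cup W$, so that $M(u)$ and $M(w)$ are well-defined and the first clause in the definition of a blocking pair (the ``$u$ is unmatched'' disjunct) is vacuous.
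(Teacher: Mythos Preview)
Your proposal is correct and takes essentially the same approach as the paper's proof: both reduce to the alternative characterizations (\cref{prop:pair-stable-alternative-def,prop:individual-stable-alternative-def}) at $\alpha=1$, where $\ell-\alpha+1=\ell$, and observe that $\ell$-blocking is equivalent to $\ell$-dominating on both sides. The paper cites \cref{prop:relation2} for one direction and argues the converse by contradiction, whereas you handle both directions uniformly via the syntactic equivalence; this is a cosmetic difference only.
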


\begin{proof}
  By \cref{prop:relation2}, we know that \aistability{$1$} implies \abstability{$1$}.
  It remains to show the other direction.
  Let $M$ be a \abstable{$1$} matching.
  Suppose, for the sake of contradiction, that $M$ is not \aistable{$1$}.
  By \cref{prop:individual-stable-alternative-def}, this means that there is an unmatched pair~$\{u,w\}$ that is both $\ell-1+1=\ell$-dominating $\{u,M(u)\}$
  and $\ell-1+1=\ell$-dominating $\{w,M(w)\}$. 
  This implies that the pair~$\{u,w\}$ is indeed $\ell$-blocking $M$, which by \cref{prop:pair-stable-alternative-def}, is a contradiction to $M$ being \abstable{$1$}.
\end{proof}

\cref{ex:matching_bstab_not_istab} shows that for $\alpha>1$, 
individual stability and pair stability are not equivalent, 
neither is individual stability equivalent to global stability.

\begin{example}\label{ex:matching_bstab_not_istab} 
Consider the example given in \cref{sec:intro}.
Recall that the first layer admits exactly one stable matching, namely $M_1=\{\{u_1,w_1\}, \{u_2, w_2\}\}$ (depicted by solid lines).
The second layer also admits exactly one (different) stable matching, namely $M_2=\{\{u_1,w_2\}, \{u_2,w_1\}\}$ (also depicted by solid lines).
The third layer has two stable matchings, $M_1$ and $M_2$ (depicted by solid lines and dashed lines, resp.).

Thus, both $M_1$ and $M_2$ are \gstable{$2$} (and \abstable{$2$}).
However, $M_1$ is \aistable{$2$} while $M_2$ is not.
To see why $M_2$ is not \aistable{$2$}, 
we can verify that the unmatched pair~$p=\{u_1, w_1\}$ dominates $\{u_1, w_2\}$ in the first and the third layer 
and it dominates $\{u_2,w_1\}$ in the first two layers.
By \cref{prop:individual-stable-alternative-def}, $M_2$ is not \aistable{$2$} since $\ell-\alpha+1=2$. 

If we restrict the example to the last two layers only, 
then matching~$M_2$ is also evidence that an \gstable{$\ell$} (which, by~\cref{prop:g_stability_equiv_pair_stability}, implies \abstability{$\ell$}) is not \aistable{$\ell$}.
\hfill $\diamond$
\end{example}

For $\alpha=\ell$ global stability and pair stability are equivalent.

\begin{proposition}\label{prop:g_stability_equiv_pair_stability}
For $\alpha=\ell$, a matching is \gstable{$\alpha$} if and only if it is \abstable{$\alpha$}.
\end{proposition}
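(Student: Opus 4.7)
The plan is to observe that when $\alpha = \ell$ both stability notions collapse to the same syntactic condition, namely that $M$ is stable in every one of the $\ell$~layers; the equivalence then falls out immediately. I would present the two directions symmetrically via this common condition, so no clever construction is needed.

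For the forward direction, suppose $M$ is \gstable{$\ell$}. By \cref{def:global}, there is a set $S \subseteq [\ell]$ with $|S| = \ell$ such that $M$ is stable in every layer of $S$. Since $|S| = \ell$, the only possibility is $S = [\ell]$, so $M$ is stable in every layer. In particular, no unmatched pair blocks $M$ in any single layer, so no unmatched pair is $1$-blocking $M$. Since $\ell - \alpha + 1 = 1$ when $\alpha = \ell$, \cref{prop:pair-stable-alternative-def} then yields that $M$ is \abstable{$\ell$}.

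For the reverse direction, assume $M$ is \abstable{$\ell$}. By \cref{prop:pair-stable-alternative-def}, no unmatched pair is $(\ell - \ell + 1) = 1$-blocking $M$; equivalently, no unmatched pair blocks $M$ in any single layer. Hence $M$ is stable in each layer $i \in [\ell]$, and taking $S = [\ell]$ in \cref{def:global} witnesses that $M$ is \gstable{$\ell$}.

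There is essentially no obstacle here: the work is entirely in unpacking the definitions and using the alternative characterization of pair stability already established in \cref{prop:pair-stable-alternative-def}. The only subtlety worth flagging is the cardinality observation that forces $S = [\ell]$ in the definition of global stability when $\alpha = \ell$, which is what removes the ``local choice of layers'' freedom that otherwise distinguishes the two notions.
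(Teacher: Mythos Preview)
Your proof is correct and essentially the same as the paper's. The only cosmetic difference is that for the forward direction the paper invokes the already-proven \cref{prop:relation1} (that \gstability{$\alpha$} implies \abstability{$\alpha$} for any $\alpha$) rather than unpacking the definitions again, while your reverse direction matches the paper's almost verbatim.
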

\begin{proof}
By \cref{prop:relation1}, \gstability{$\ell$} implies \abstability{$\ell$}. Now, assume that a matching $M$ is \abstable{$\ell$}.
For the sake of contradiction, suppose that $M$ is not \gstable{$\ell$}. 
This means that there exists a pair, say $\{u, w\}$, and a layer, say $i$, such that $\{u, w\}$ is blocking in layer~$i$. Thus, $\{u, w\}$ is $1$-blocking~$M$, and so $M$ cannot satisfy \abstability{$\ell$}. This gives a contradiction.
\end{proof}

It is somehow counter-intuitive that even an \gstable{$\ell$} matching (\emph{i.e.}\ a matching that is stable in each layer) may not be \aistable{$\ell$} (see \cref{ex:matching_bstab_not_istab}). 
By \cref{prop:i_stability_stronger_g_stability} we can thus infer that \gstability{$\ell$} is strictly weaker than \aistability{$\ell$}.

\begin{proposition}\label{prop:i_stability_stronger_g_stability}
  For $\alpha=\ell$, an \aistable{$\alpha$} matching is \gstable{$\alpha$}.
\end{proposition}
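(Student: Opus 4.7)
The plan is to observe that this is essentially immediate from results already established in this section, so the cleanest proof is by composition of \cref{prop:relation2} and \cref{prop:g_stability_equiv_pair_stability}. Concretely: if $M$ is \aistable{$\ell$}, then by \cref{prop:relation2} applied with $\alpha = \ell$, $M$ is \abstable{$\ell$}; and by \cref{prop:g_stability_equiv_pair_stability}, \abstability{$\ell$} and \gstability{$\ell$} coincide, so $M$ is \gstable{$\ell$}.

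For completeness, I would also sketch the direct argument, since it illuminates why the equivalence requires $\alpha=\ell$. Assume $M$ is \aistable{$\ell$} but, toward contradiction, not \gstable{$\ell$}. Since $\alpha=\ell$, being \gstable{$\ell$} means $M$ is stable in \emph{every} layer, so non-\gstability{$\ell$} yields some layer~$i$ and some unmatched pair $\{u,w\}$ that blocks $M$ in layer~$i$. By definition of blocking, $\{u,w\}$ dominates both $\{u,M(u)\}$ and $\{w,M(w)\}$ in layer~$i$, so in particular $\{u,w\}$ is $1$-dominating each of them. Since $\ell-\alpha+1 = 1$ when $\alpha=\ell$, \cref{prop:individual-stable-alternative-def} is directly violated, contradicting \aistability{$\ell$}.

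There is no real obstacle here; the only thing to be careful about is citing the correct specialization $\ell-\alpha+1=1$ (so that a single blocking layer is enough to witness the $1$-domination of both endpoints) and noting that this step is what fails when $\alpha<\ell$, which explains why an analogous implication need not hold in the general case (as illustrated by \cref{ex:matching_bstab_not_istab}). I would write the proof as a two-line composition of the two earlier propositions, optionally followed by the direct argument as a remark.
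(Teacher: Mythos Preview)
Your proposal is correct and matches the paper's approach exactly: the paper's proof is a single sentence invoking \cref{prop:relation2} and \cref{prop:g_stability_equiv_pair_stability} for $\alpha=\ell$. Your additional direct argument via \cref{prop:individual-stable-alternative-def} with $\ell-\alpha+1=1$ is also correct and is a reasonable remark, though the paper does not include it.
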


\begin{proof}
  \cref{prop:relation2} and \cref{prop:g_stability_equiv_pair_stability} imply the statement since $\alpha=\ell$.
\end{proof}

By \cref{ex:matching_bstab_not_istab}, \gstability{$\ell$} does not imply \aistability{$\ell$}.
However, it implies \aistability{$\lceil\nicefrac{\ell}{2}\rceil$}.

\begin{proposition}\label{prop:lglobal->lhalfindividual}
  Every \gstable{$\ell$} matching is \aistable{$\lceil\nicefrac{\ell}{2}\rceil$}.
  There are instances where \gstable{$\ell$} matchings are not \aistable{$(\lceil\nicefrac{\ell}{2}\rceil+1)$}.
\end{proposition}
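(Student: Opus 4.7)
The plan is to prove the two statements separately: the inclusion is a short pigeon-hole argument on top of the alternative characterization given in Proposition~\ref{prop:individual-stable-alternative-def}, while the separation requires building an explicit four-agent instance generalizing \cref{ex:matching_bstab_not_istab}.

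For the first assertion, I would argue by contradiction. Suppose $M$ is $\ell$-layer globally stable but not $\lceil \nicefrac{\ell}{2} \rceil$-layer individually stable. By Proposition~\ref{prop:individual-stable-alternative-def}, there is an unmatched pair $\{u,w\}$ together with subsets $R_1, R_2 \subseteq [\ell]$, each of size $\ell - \lceil \nicefrac{\ell}{2} \rceil + 1 = \lfloor \nicefrac{\ell}{2} \rfloor + 1$, such that $\{u,w\}$ dominates $\{u,M(u)\}$ in every layer of $R_1$ and dominates $\{w,M(w)\}$ in every layer of $R_2$. Since $|R_1| + |R_2| = 2\lfloor \nicefrac{\ell}{2} \rfloor + 2 \geq \ell + 1$, we must have $R_1 \cap R_2 \neq \emptyset$. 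Any $k \in R_1 \cap R_2$ exhibits $\{u,w\}$ as a blocking pair for $M$ in layer $k$, contradicting global stability in that layer.

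For the second assertion, I would build a four-agent instance with $U = \{u_1, u_2\}$, $W = \{w_1, w_2\}$ and consider the matching $M = \{\{u_1, w_2\}, \{u_2, w_1\}\}$. Partition $[\ell]$ into two disjoint blocks $A, B$ of size $\lfloor \nicefrac{\ell}{2} \rfloor$ each, plus (if $\ell$ is odd) one leftover layer. In every layer, set $u_2$'s list to $w_1 \pref w_2$ and $w_2$'s list to $u_1 \pref u_2$, which places the $M$-partners on top and rules out any blocking pair other than $\{u_1, w_1\}$. In each layer of $A$, set $w_1 \pref_{u_1} w_2$ but $u_2 \pref_{w_1} u_1$, so $\{u_1,w_1\}$ does not block. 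In each layer of $B$, set $u_1 \pref_{w_1} u_2$ but $w_2 \pref_{u_1} w_1$. In the leftover layer, set $w_2 \pref_{u_1} w_1$ and $u_2 \pref_{w_1} u_1$. Then $M$ is stable in every layer, hence $\ell$-layer globally stable. Yet the unmatched pair $\{u_1, w_1\}$ is $\lfloor \nicefrac{\ell}{2} \rfloor$-dominating $\{u_1, M(u_1)\}$ (witnessed by $A$) and $\lfloor \nicefrac{\ell}{2} \rfloor$-dominating $\{w_1, M(w_1)\}$ (witnessed by $B$), so, since $\ell - (\lceil \nicefrac{\ell}{2} \rceil + 1) + 1 = \lfloor \nicefrac{\ell}{2} \rfloor$, Proposition~\ref{prop:individual-stable-alternative-def} implies $M$ is not $(\lceil \nicefrac{\ell}{2} \rceil + 1)$-layer individually stable.

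The main obstacle is the bookkeeping in the second part: one has to pick $A$ and $B$ disjoint (possible since $2\lfloor \nicefrac{\ell}{2} \rfloor \leq \ell$) and simultaneously balance the preferences so that the two halves each provide the needed domination while denying $\{u_1, w_1\}$ any blocking layer. Because the lists of $u_2$ and $w_2$ are fixed to rank their $M$-partners first in every layer, those two agents never participate in a blocking pair, which isolates the argument to verifying the single pair $\{u_1,w_1\}$ layer by layer. The first part is routine once the alternative characterization of individual stability is applied.
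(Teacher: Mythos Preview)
Your argument is correct. The first part is essentially identical to the paper's: both invoke Proposition~\ref{prop:individual-stable-alternative-def} and use the pigeonhole count $2(\lfloor \nicefrac{\ell}{2}\rfloor+1)>\ell$ to force a common layer in which the unmatched pair blocks.

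For the second part you take a somewhat different route. The paper simply exhibits one concrete four-layer instance (with $\ell=4$, $\lceil\nicefrac{\ell}{2}\rceil+1=3$) and checks the matching $\{\{u_1,w_1\},\{u_2,w_2\}\}$ by hand. Your construction is parametrized in~$\ell$: you split the layers into two blocks~$A,B$ of size~$\lfloor\nicefrac{\ell}{2}\rfloor$ (plus a possible leftover) and tailor the preferences of $u_1,w_1$ per block so that $\{u_1,w_1\}$ dominates in exactly~$\lfloor\nicefrac{\ell}{2}\rfloor$ layers on each side without ever blocking. This is a mild but genuine strengthening over the paper's proof, since it shows the bound~$\lceil\nicefrac{\ell}{2}\rceil$ is tight for every $\ell\ge 2$, not just for $\ell=4$. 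The trade-off is that the paper's single explicit instance is quicker to verify, while yours requires the layer-by-layer case check you outlined; both are short.
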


\begin{proof}
  For the first statement, let $M$ be an \gstable{$\ell$} matching.
  Suppose, for the sake of contradiction, that $M$ is not \aistable{$\lceil\nicefrac{\ell}{2}\rceil$}.
  Let  $\beta=\ell-\lceil \nicefrac{\ell}{2} \rceil + 1$, which is $\lfloor\nicefrac{\ell}{2}\rfloor+1$.
  By \cref{prop:individual-stable-alternative-def}, let $\{u,w\}$ be an unmatched pair that is both $\beta$-dominating $\{u, M(u)\}$ and $\beta$-dominating $\{w, M(w)\}$.
  Since $2\cdot \beta > \ell$, there is at least one layer~$i$ where $\{u,w\}$ is dominating both $\{u, M(u)\}$ and $\{w, M(w)\}$, meaning that $\{u,w\}$ is blocking layer~$i$, a contradiction to $M$ being \gstable{$\ell$}.

  To see why \gstability{$\ell$} may not imply \aistability{$(\lceil\nicefrac{\ell}{2} \rceil +1)$}, consider the following instance with four agents and $\ell=4$~layers.

  \noindent
  {\centering
  \begin{tikzpicture} 
  \node at (0.1, -0.5) (Layer1) {\Large $P_1$:};
  \foreach \i in {1, 2}
  {
    \foreach \j/\p/\o in {u/0/left,w/1/right} {
      \node[draw, circle, minimum size=3ex, inner sep=1pt] at (\i*\dist, -\p*\xsc) (n\j\i) {$\j_\i$};
    }
  }

  \foreach \n / \i / \o / \a/\b   in {u1/w/left/1/2, u2/w/left/1/2}{
    \gettikzxy{(n\n)}{\xx}{\yy};
    \node at (\xx,\yy+\ysc*25) {$\i_\a$};
    \node at (\xx,\yy+\ysc*15) {$\i_\b$};
  }  
  \foreach \n / \i / \o / \a/\b  in {%
    w1/u/right/1/2, w2/u/right/1/2}{
    \gettikzxy{(n\n)}{\xx}{\yy};
    \node at (\xx,\yy-\ysc*18) {$\i_\a$};
    \node at (\xx,\yy-\ysc*28) {$\i_\b$};
  }
  \foreach \s/\t in {u1/w1,u2/w2}{
    \draw[blackline] (n\s) -- (n\t);
  } 
\end{tikzpicture}
~~\qquad
\begin{tikzpicture}
  \node at (0.1, -0.5) (Layer1) {\Large $P_2$:};
  \foreach \i in {1, 2}
  {
    \foreach \j/\p/\o in {u/0/left,w/1/right} {
      \node[draw, circle, minimum size=3ex, inner sep=1pt] at (\i*\dist, -\p*\xsc) (n\j\i) {$\j_\i$};
    }
  }

  \foreach \n / \i / \o / \a/\b   in {u1/w/left/1/2, u2/w/left/2/1}{
    \gettikzxy{(n\n)}{\xx}{\yy};
    \node at (\xx,\yy+\ysc*25) {$\i_\a$};
    \node at (\xx,\yy+\ysc*15) {$\i_\b$};
  }  
  \foreach \n / \i / \o / \a/\b  in {%
    w1/u/right/1/2, w2/u/right/1/2}{
    \gettikzxy{(n\n)}{\xx}{\yy};
    \node at (\xx,\yy-\ysc*18) {$\i_\a$};
    \node at (\xx,\yy-\ysc*28) {$\i_\b$};
  }
  \foreach \s/\t in {u1/w1,u2/w2}{
    \draw[blackline] (n\s) -- (n\t);
  } 
\end{tikzpicture}
~~\qquad
\begin{tikzpicture}
  \node at (0.1, -0.5) (Layer1) {\Large $P_3$:};
  \foreach \i in {1, 2}
  {
    \foreach \j/\p/\o in {u/0/left,w/1/right} {
      \node[draw, circle, minimum size=3ex, inner sep=1pt] at (\i*\dist, -\p*\xsc) (n\j\i) {$\j_\i$};
    }
  }

  \foreach \n / \i / \o / \a/\b   in {u1/w/left/2/1, u2/w/left/2/1}{
    \gettikzxy{(n\n)}{\xx}{\yy};
    \node at (\xx,\yy+\ysc*25) {$\i_\a$};
    \node at (\xx,\yy+\ysc*15) {$\i_\b$};
  }  
  \foreach \n / \i / \o / \a/\b  in {%
    w1/u/right/2/1, w2/u/right/2/1}{
    \gettikzxy{(n\n)}{\xx}{\yy};
    \node at (\xx,\yy-\ysc*18) {$\i_\a$};
    \node at (\xx,\yy-\ysc*28) {$\i_\b$};
  }
  \foreach \s/\t in {u1/w1,u2/w2}{
    \draw[blackline] (n\s) -- (n\t);
  } 
\end{tikzpicture}
~~\qquad
\begin{tikzpicture}
  \node at (0.1, -0.5) (Layer1) {\Large $P_4$:};
  \foreach \i in {1, 2}
  {
    \foreach \j/\p/\o in {u/0/left,w/1/right} {
      \node[draw, circle, minimum size=3ex, inner sep=1pt] at (\i*\dist, -\p*\xsc) (n\j\i) {$\j_\i$};
    }
  }

  \foreach \n / \i / \o / \a/\b   in {u1/w/left/2/1, u2/w/left/2/1}{
    \gettikzxy{(n\n)}{\xx}{\yy};
    \node at (\xx,\yy+\ysc*25) {$\i_\a$};
    \node at (\xx,\yy+\ysc*15) {$\i_\b$};
  }  
  \foreach \n / \i / \o / \a/\b  in {%
    w1/u/right/1/2, w2/u/right/2/1}{
    \gettikzxy{(n\n)}{\xx}{\yy};
    \node at (\xx,\yy-\ysc*18) {$\i_\a$};
    \node at (\xx,\yy-\ysc*28) {$\i_\b$};
  }
  \foreach \s/\t in {u1/w1,u2/w2}{
    \draw[blackline] (n\s) -- (n\t);
  } 
\end{tikzpicture}
\par}

\noindent \mbox{$M=\{\{u_1,w_1\}, \{u_2, w_2\}\}$} is the only \gstable{$4$} matching.
However, it is not \aistable{$3$} as the unmatched pair~$\{u_1, w_2\}$ dominates $\{u_1,w_1\}$ in layers $3$ and $4$
and dominates $\{u_2,w_2\}$ in layers~$1$ and $2$.
By \cref{prop:individual-stable-alternative-def}, $M$ is not \aistable{$3$} since $\ell-\alpha+1=2$.
\end{proof}

The relations among the different concepts of multi-layer stability are depicted in \cref{fig:relation_between_stability_concepts}.

\crefname{example}{Ex}{Examples}
\crefname{proposition}{Prop}{Props}

\begin{figure}
\centering
\begin{tikzpicture}[>=stealth']
  \def \scc {2.4}
  \def \ycc {1.2}
  \def \tcc {3.6}
  \node[concept] at (0*\scc,0*\scc) (1global) {$1$-global};
  \node[concept] at (1.5*\scc, -2*\ycc) (1pair) {$1$-pair};
  \node[concept] at (3*\scc, 0*\scc) (1individual) {$1$-individual};

  \node[concept,above = 2ex of 1global, xshift=-8ex] (aglobal) {$\alpha$-global};
  \node[concept,below = 3ex of 1pair] (apair) {$\alpha$-pair};
  \node[concept,above = 2ex of 1individual, xshift=8ex] (aindividual) {$\alpha$-individual};


  \node[concept,above = 4ex of aglobal, xshift=-8ex] (lglobal) {$\ell$-global};
  \node[concept,below = 4ex of apair] (lpair) {$\ell$-pair};
  \node[concept,above = 4ex of aindividual, xshift = 8ex] (lindividual) {$\ell$-individual};

  \draw (1global) edge[->] node[midway,fill=white,inner sep=1pt,sloped, text width=5.5ex*\scc] {\footnotesize \cref{prop:relation1}, \cref{ex:matching_istab_bstab_not_gstab}} (1pair);
  \draw (1global) edge[->] node[midway,fill=white,inner sep=1pt,sloped, text width=6ex*\tcc] {\footnotesize \cref{prop:relation1,prop:1-pair<=>1-individual}, \cref{ex:matching_istab_bstab_not_gstab}} (1individual);
  \draw (1pair) edge[-,double,double distance=2pt] node[midway,fill=white,inner sep=1pt,sloped] {\footnotesize \cref{prop:1-pair<=>1-individual}} (1individual);

  \draw (aglobal) edge[->,bend right] node[midway,fill=white,inner sep=1pt,sloped] {\footnotesize \cref{prop:relation1}, \cref{ex:matching_istab_bstab_not_gstab}} (apair);
  \draw (aindividual) edge[->,bend left] node[midway,fill=white,inner sep=1pt,sloped] {\footnotesize \cref{prop:relation2}, \cref{ex:matching_bstab_not_istab}} (apair);

  \draw (lglobal) edge[-,double,double distance=2pt,bend right] node[midway,fill=white,inner sep=1pt,sloped] {\footnotesize \cref{prop:g_stability_equiv_pair_stability}} (lpair);
  \draw (lglobal) edge[<-] node[midway,fill=white,inner sep=1pt,sloped] {\footnotesize \cref{prop:i_stability_stronger_g_stability}, \cref{ex:matching_bstab_not_istab}} (lindividual);
  \draw (lindividual) edge[->,bend left] node[midway,fill=white,inner sep=1pt,sloped] {\footnotesize \cref{prop:i_stability_stronger_g_stability,prop:g_stability_equiv_pair_stability}, \cref{ex:matching_bstab_not_istab}} (lpair);

  \draw (lglobal)edge[->] node[midway,fill=white,inner sep=1pt,sloped] {\footnotesize \cref{prop:lglobal->lhalfindividual} for $\alpha\le \lceil\nicefrac{\ell}{2} \rceil$} (aindividual);

  \foreach \s / \t in {l/a, a/1} {
    \foreach \n in {global, individual, pair} {
      \draw (\s\n) edge[shorten <= -2pt, shorten >= -2pt, ->] (\t\n);
    }
  }
\end{tikzpicture}
\caption{Relations among the different multi-layer concepts of stability for different values of $\alpha$. Herein, an arc is to be read like an implication: one property implies the other.}\label{fig:relation_between_stability_concepts}
\end{figure}
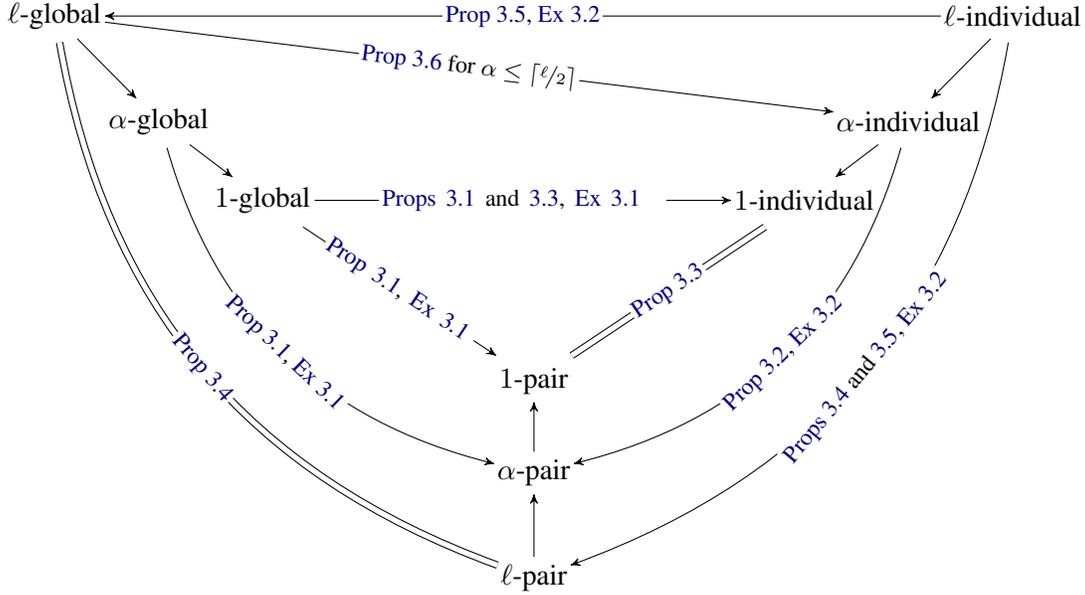

\crefname{example}{Example}{Examples}
\crefname{proposition}{Proposition}{Propositions}

A \gstable{$1$} matching always exists. 
Together with \cref{prop:relation1,prop:relation2}, we obtain the following.

\begin{proposition}\label{prop:1-stable-exists}
  A preference profile with $\ell$ layers always admits a 
  matching, which is \gstable{$1$}, \abstable{$1$}, and \aistable{$1$}.
\end{proposition}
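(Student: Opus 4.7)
The plan is to reduce everything to the classical Gale--Shapley theorem and then invoke the implications already established in this section. First I would pick any layer, say layer~$1$, and apply the original Gale--Shapley algorithm~\cite{GaleShapley1962} to the single-layer preference profile~$P_1$ on the agent sets~$U$ and~$W$. This yields a matching~$M$ that is stable in layer~$1$ in the classical sense, i.e., no unmatched pair blocks~$M$ in layer~$1$.

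Next I would argue that $M$ is \gstable{$1$} directly from \cref{def:global}: take the witness set $S=\{1\}$, which has size~$\alpha=1$; for every unmatched pair~$\{u,w\}\in (\pairset)\setminus M$, classical stability in layer~$1$ guarantees that either $\{u,M(u)\}$ dominates $\{u,w\}$ in layer~$1$ or $\{w,M(w)\}$ dominates $\{w,u\}$ in layer~$1$, which is exactly what the definition of \gstability{$1$} demands.

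Finally, to transfer this to the other two notions, I would chain the previously proved implications. By \cref{prop:relation1}, $M$ being \gstable{$1$} implies that $M$ is \abstable{$1$}. Then by \cref{prop:1-pair<=>1-individual}, \abstability{$1$} and \aistability{$1$} coincide, so $M$ is also \aistable{$1$}. Thus the same matching~$M$ simultaneously satisfies all three notions, which is exactly the claim.

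No step is really an obstacle: the only non-trivial ingredient is the existence of a single-layer stable matching (Gale--Shapley), and the rest is bookkeeping using results already proved in this section. The statement is essentially a corollary of \cref{prop:relation1,prop:1-pair<=>1-individual} together with the classical theorem, so the proof proposal is short and mechanical.
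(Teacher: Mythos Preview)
Your proposal is correct and follows essentially the same route as the paper: take a Gale--Shapley matching in an arbitrary layer to obtain \gstability{$1$}, then push it through the implications established in this section. The only cosmetic difference is that the paper cites \cref{prop:relation1,prop:relation2} whereas you invoke \cref{prop:relation1} and \cref{prop:1-pair<=>1-individual}; your chain is actually the cleaner one, since \cref{prop:relation2} points from individual to pair stability and does not by itself yield \aistability{$1$} from \gstability{$1$}.
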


\section{All-Layers Stability ($\alpha=\ell$)}
\label{sec:all-stab}
In this section, we discuss the special case when $\alpha=\ell$.
It turns out that deciding whether a given instance admits an \aistable{$\ell$} matching can be solved in polynomial time.
For the other two concepts of stability, however, the problem becomes NP-hard even when $\ell=2$.

\subsection{Algorithm for \aistability{$\ell$}}

The algorithm for deciding \aistability{$\ell$} is based on the following simple lemma.

\begin{lemma}\label{lem:ell-individual}
  Let $u\in U$ and $w\in W$ be two agents such that $w$ is the first ranked agent of $u$ in some layer~$i\in [\ell]$,
  and let $u'\in U\setminus \{u\}$ be another agent such that $w$ prefers $u$ over $u'$ in some layer~$j\in [\ell]$.
  Then, no \aistable{$\ell$} matching contains~$\{u',w\}$.
\end{lemma}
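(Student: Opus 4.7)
The plan is to proceed by contradiction, using \cref{prop:individual-stable-alternative-def} as the main tool. Suppose there exists an \aistable{$\ell$} matching~$M$ with $\{u',w\}\in M$, i.e.\ $M(w)=u'$. Since $u\neq u'$, the pair $\{u,w\}$ is unmatched in $M$. For $\alpha=\ell$, the bound $\ell-\alpha+1$ appearing in \cref{prop:individual-stable-alternative-def} becomes~$1$, so it suffices to show that $\{u,w\}$ is $1$-dominating both $\{u,M(u)\}$ and $\{w,M(w)\}$---that is, $\{u,w\}$ dominates each of these two pairs in at least one layer.

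For the first half, I would use layer~$i$. Because $w$ is the top-ranked agent of $u$ in layer~$i$, and because $M(u)\neq w$ (as $w$ is matched to~$u'\neq u$), the agent $u$ strictly prefers $w$ to $M(u)$ in layer~$i$. Hence $\{u,w\}$ dominates $\{u,M(u)\}$ in layer~$i$, witnessing $1$-domination. For the second half, I would use layer~$j$: by hypothesis, $w$ prefers $u$ to $u'=M(w)$ in layer~$j$, so $\{u,w\}$ dominates $\{w,M(w)\}$ in layer~$j$, again witnessing $1$-domination.

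Putting the two halves together, the unmatched pair $\{u,w\}$ is simultaneously $1$-dominating $\{u,M(u)\}$ and $1$-dominating $\{w,M(w)\}$. By \cref{prop:individual-stable-alternative-def} (applied with $\alpha=\ell$), $M$ is not \aistable{$\ell$}, contradicting our assumption. Thus no \aistable{$\ell$} matching can contain the pair $\{u',w\}$.

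The argument is essentially a direct unpacking of the definitions, so I do not expect any real obstacle; the only point needing a little care is ensuring $M(u)$ is well-defined and distinct from~$w$, which follows immediately from $M(w)=u'\neq u$ together with the fact that matchings in the studied problems are bijections between $U$ and $W$.
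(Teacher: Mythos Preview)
Your proof is correct and follows essentially the same approach as the paper's: both argue by contradiction that the unmatched pair~$\{u,w\}$ witnesses a violation of \aistability{$\ell$} via layers~$i$ and~$j$. The only cosmetic difference is that you route the contradiction through \cref{prop:individual-stable-alternative-def} whereas the paper appeals directly to \cref{def:individual}, but the underlying argument is identical.
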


\begin{proof}
  Let $u,w,u'$ be the three agents and let $i,j$ be the two (possibly equal) layers as described by the assumption.
  Suppose towards a contradiction that there is an \aistable{$\ell$} matching with $\{u',w\}\in M$.
  This implies that $\{u,w\}$ is an unmatched pair under $M$.
  However, $w$ prefers $u$ over $u'=M(w)$ in layer~$j$ and $u$ prefers~$w$ over~$M(u)$ in layer~$i$---a contradiction to $M$ being \aistable{$\ell$}.
\end{proof}

\cref{lem:ell-individual} leads to \cref{alg:ell-individual} which looks 
quite similar to the so-called extended Gale-Shapley algorithm by~\citet{Irving1994}.
The crucial difference is that we loop into different layers and we cannot delete a pair~$p$ of agents that does not belong to any stable matching, as it may still serve to block certain matchings.
Instead of deleting such pair, we will \emph{mark} it. 
Herein, \myemph{marking a pair~$\{u,w\}$} means marking the agent~$u$ (resp.~$w$) in the preference list of $w$ (resp.~$u$) in every layer.

\renewcommand\ArgSty{\normalfont}

\begin{algorithm}[t]
  \DontPrintSemicolon
  \caption{Algorithm for finding an \aistable{$\ell$} matching.}
  \label{alg:ell-individual}
  \small
  \SetKwInOut{Input}{Input}
  \Input{\small A set of agents $U \cup W$ and $\ell$ layers of preferences.}
  \Repeat{(some agent's preference list consists of only marked agents)
    \textbf{or}
    (no new pair was marked in the last iteration)
  }
  {
    \ForEach{agent~$u \in U$\label{line:outer_foreach_loop}}
    {
      \ForEach{layer~$i=1,2,\ldots,\ell$\label{line:inner_foreach_loop}}
      {

       $w\leftarrow$ the first ranked agent in $u$'s preference list in layer~$i$\label{line:first_assignment_to_w}

        $r\leftarrow 1$
    
        \Repeat{$\{u,w\}$ is \myemph{not} marked\label{line:repeat_condition}}
        {
          \ForEach{$u'$ with $w\colon u\pref^{(j)}_{w} u'$ for some layer~$j$\label{alg:marking-condition}}{
            \textbf{mark} $\{u',w\}$\label{alg:marking}
          }
          
          $r\leftarrow r+1$

          $w\leftarrow$ the $r^{\text{th}}$ ranked agent in $u$'s preference list in layer~$i$
        }
      }
    }
  }

  \lIf{some agent's preference list consists of only marked agents} 
  {no \aistable{$\ell$} matching exists}
  \lElse{
    \textbf{return} $M=\{\{u,w\}\mid w \leftarrow \text{the first unmarked agent in } u\text{'s}$ preference list in any layer$\}$ as an \aistable{$\ell$} matching\label{line:matching_construction}
  }
\end{algorithm}

The correctness of \cref{alg:ell-individual} follows from \cref{lem:marked-pair-not-ell-stab,lem:unmarked-the-same,lem:M-ell-individual}.

\begin{lemma}\label{lem:marked-pair-not-ell-stab}
  If a pair~$\{u',w\}$ is marked during the execution of \cref{alg:ell-individual}, then no \aistable{$\ell$} matching
  contains this pair.
\end{lemma}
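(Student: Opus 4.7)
My plan is to prove the lemma by induction on the chronological order in which pairs are marked during an execution of \cref{alg:ell-individual}. For the base case, the very first marking must occur at rank $r=1$, so $w$ is the top-ranked agent of $u$ in layer~$i$; the \textsc{ForEach} condition on line~\ref{alg:marking-condition} then matches the hypothesis of \cref{lem:ell-individual} verbatim, yielding the conclusion.

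For the inductive step, suppose all previously marked pairs already satisfy the claim, and consider a fresh marking of $\{u', w\}$ during the body at $(u,i,r)$ with witness layer~$j$ (so $w$ prefers $u$ over $u'$ in layer~$j$). Assume toward contradiction that some $\ell$-individually stable matching $M$ contains $\{u', w\}$; then $M(w)=u'\ne u$, so $\{u,w\}$ is an unmatched pair in $M$. Because $\alpha=\ell$, the stability definition requires either
\begin{compactenum}[(i)]
  \item $u$ prefers $M(u)$ over $w$ in every layer, or
  \item $w$ prefers $u'$ over $u$ in every layer.
\end{compactenum}
The marking condition kills (ii) at layer~$j$, so (i) must hold; applying (i) at layer~$i$ forces $M(u)$ to be ranked strictly above $w$ in $u$'s layer-$i$ list, that is, $M(u)=w_{r'}$ for some $r'<r$ (writing $w_s$ for the $s$-th ranked agent of $u$ in layer~$i$).

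The final move is to exclude every $r'<r$ as a candidate for $M(u)$. This relies on a structural invariant: at the moment the body marks a pair at rank~$r$, every strictly higher-ranked pair $\{u,w_{r'}\}$ with $r'<r$ has already been marked, whence the inductive hypothesis forbids $M(u)=w_{r'}$. Establishing this invariant is the main obstacle of the proof; it amounts to tracking carefully how the Until check on line~\ref{line:repeat_condition} forces marks to accumulate on the higher-ranked positions of $u$'s list before the inner loop ever advances to rank~$r$, together with how marks propagate across layers (since marking a pair affects the preference lists in every layer simultaneously). I would prove the companion \cref{lem:unmarked-the-same}, which expresses the required cross-layer consistency of the unmarked prefix, and then slot it into the induction above to complete the argument.
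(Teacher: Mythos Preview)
Your overall structure---induction on chronological marking order, \cref{lem:ell-individual} for the base case, and in the inductive step the (i)/(ii) dichotomy from $\ell$-individual stability to force $M(u)\in\{w_1,\ldots,w_{r-1}\}$---matches the paper's proof exactly. The gap is in how you propose to finish. You flag the invariant ``every $\{u,w_{r'}\}$ with $r'<r$ is already marked when the rank-$r$ body runs'' as the main obstacle and plan to establish it via cross-layer mark propagation together with \cref{lem:unmarked-the-same}. Neither ingredient is relevant. \cref{lem:unmarked-the-same} concerns the \emph{terminal} state of the algorithm (that once no new marks appear, each agent's first unmarked partner coincides across all layers); it says nothing about which pairs are marked at the moment a particular rank-$r$ body executes, and it plays no role whatsoever in the paper's proof of \cref{lem:marked-pair-not-ell-stab}. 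Cross-layer propagation is likewise beside the point: the invariant you need is a statement purely about the single layer~$i$ being processed.

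The paper dispatches this invariant in one line from the Until test on \cref{line:repeat_condition}: the inner repeat--until loop for $(u,i)$ can only reach the body at rank~$r$ by having failed the Until check after each earlier iteration, so the pairs $\{u,w_{r'}\}$ encountered along the way were already marked before the current marking is made, and the induction hypothesis then excludes each of them from~$M$. Drop the detour through \cref{lem:unmarked-the-same}; the invariant is immediate from the control flow of the inner loop, not from any cross-layer consistency lemma.
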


\begin{proof}
  Each pair is marked within two ``foreach'' loops in \Cref{line:outer_foreach_loop} and  \Cref{line:inner_foreach_loop}, respectively (we will refer to them as the ``outer'' loop and the ``inner'' loop). Let us fix an arbitrary $u \in U$ and $i \in [\ell]$ and consider the pairs which was marked when the outer and the inner loops were run for $u$ and $i$, respectively.
  We show the statement via induction on the sequence of pairs which were marked when $u$ and $i$ was considered for the two loops.
  For the induction to begin, let $\{u',w\}$ with $u' \in U$ and $w\in W$ be the first pair that is marked during the execution.
  This implies that agent~$u$ ranks $w$ in the first position in layer~$i$ and that 
  $w$ prefers $u$ to~$u'$ in some (possibly different) layer.
  By \cref{lem:ell-individual}, no \aistable{$\ell$} matching contains~$\{u',w\}$.

  For the induction assumption, let $\{u',w\}$ be the $m^{\text{th}}$ pair (for given $u$ and $i$) that is marked during the execution and no \aistable{$\ell$} matching contains a pair that is marked prior to $\{u',w\}$.
  Suppose for the sake of contradiction that there is an \aistable{$\ell$} matching~$M$ which contains the marked pair~$\{u',w\}$.
  The fact that $\{u',w\}$ has been marked implies that
  \begin{compactenum}
    \item $u$ ranks $w$ in the $p^{\text{th}}$ position in layer~$i$ for some $p$,
    and
    \item $w$ prefers $u$ over $u'$ in some layer~$j$.
  \end{compactenum}
  However, by the description of the algorithm in layer~$i$ (Lines~\ref{line:first_assignment_to_w}--\ref{line:repeat_condition}) for each agent~$w'$ that $u$ prefers to $w$ in layer~$i$,
  \emph{i.e.}\ $w'\pref^{(i)}_{u} w$,
  we have that $\{u,w'\}$ is marked (see the ``until'' condition in \Cref{line:repeat_condition}).
  The induction assumption implies that $M$ does not contain any $\{u,w'\}$ with $w'\pref^{(i)}_{u} w$.
  Thus, it follows that $u$ prefers~$w$ to $M(u)$ in layer~$i$, \emph{i.e.}\ $w\pref^{(i)}_{u}M(u)$.
  This is a contradiction to $M$ being \aistable{$\ell$} on the unmatched pair~$\{u,w\}$ since there is a layer~$j\in[\ell]$ such that $u\pref^{(j)}_{w}u'=M(w)$.
\end{proof}

The following lemma ensures that in \Cref{line:matching_construction} 
if $w$ is matched to an agent~$u$,
then it is the most preferred unmarked agent of $u$ in \myemph{all} layers.

\begin{lemma}\label{lem:unmarked-the-same}
   If no agent's preference list consists of only marked agents
   and there is an agent~$u$ and two different layers~$i,j\in [\ell]$, $i\neq j$ 
   such that
   the first unmarked agent in the preference list of $u$ in layer~$i$
   differs from the one in layer~$j$,
   then \cref{alg:ell-individual} will mark at least one more pair.
\end{lemma}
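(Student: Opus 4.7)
The plan is to proceed by contradiction: assume that the algorithm will not mark any further pair, and derive a contradiction via a counting argument on the set of first-unmarked agents of each $u \in U$ across all layers. For each $u \in U$, define $\mathrm{FU}(u) \subseteq W$ to be the collection of agents that appear as the first unmarked agent in $u$'s preference list in at least one layer. Since by hypothesis no preference list is fully marked, every one of $u$'s $\ell$ layers supplies at least one first-unmarked candidate, so $|\mathrm{FU}(u)| \ge 1$ for every $u \in U$; the lemma's hypothesis furthermore gives an agent $u^\star \in U$ with $|\mathrm{FU}(u^\star)| \ge 2$ (since the first unmarked agents of $u^\star$ in layers $i$ and $j$ differ).

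The main, and hardest, step is the \textbf{disjointness claim}: under the assumption that no further marking ever occurs, the sets $\{\mathrm{FU}(u)\}_{u \in U}$ are pairwise disjoint. Suppose towards a contradiction that some $w \in W$ belongs to $\mathrm{FU}(u_1) \cap \mathrm{FU}(u_2)$ for two distinct $u_1, u_2 \in U$. Then both $\{u_1, w\}$ and $\{u_2, w\}$ are currently unmarked. Let $k_1$ be a layer in which $w$ is the first unmarked agent for $u_1$; in the next iteration the algorithm processes $u_1$ in layer $k_1$ and reaches $w$, at which point \cref{alg:marking-condition,alg:marking} would mark $\{u_2, w\}$ unless $u_2$ is \emph{not} less preferred than $u_1$ by $w$ in any layer, i.e.\ $u_2 \pref_{w}^{(j)} u_1$ for every $j \in [\ell]$. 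The symmetric argument, carried out from $u_2$'s processing of $w$ in the corresponding layer~$k_2$, yields $u_1 \pref_{w}^{(j)} u_2$ for every $j \in [\ell]$, contradicting the strictness of the preferences.

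Combining the disjointness claim with $|W| = n$ gives $\sum_{u \in U} |\mathrm{FU}(u)| \le n$, whereas the lower bounds $|\mathrm{FU}(u^\star)| \ge 2$ and $|\mathrm{FU}(u)| \ge 1$ for all other $u$ force $\sum_{u \in U} |\mathrm{FU}(u)| \ge 2 + (n - 1) = n + 1$, a contradiction. Hence the assumption that no further marking takes place is untenable, so the algorithm must mark at least one additional pair, as claimed. The delicate part is the disjointness claim, in which the ``no new marking'' hypothesis has to be applied \emph{symmetrically} to $u_1$ and $u_2$ in order to extract mutually contradictory strict preferences of $w$ over $u_1$ and $u_2$ in every layer simultaneously; everything else is a routine pigeonhole count.
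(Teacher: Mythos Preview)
Your proof is correct and follows essentially the same approach as the paper's: both argue by contradiction that ``no further marking'' together with the hypothesis $|\mathrm{FU}(u^\star)|\ge 2$ violates a pigeonhole count between $U$ and $W$. The paper phrases the key step slightly differently---it observes that whenever $w$ is the first unmarked agent for $u$ in some layer, then (because the marking step at \cref{alg:marking} produces nothing new) $u$ must be the \emph{last} unmarked agent in $w$'s list in every layer, and then notes that this map $W\to U$ cannot be surjective if $u^\star$ has two preimages. Your disjointness claim is exactly the same observation unpacked symmetrically (if both $u_1$ and $u_2$ were last unmarked for $w$ in every layer, strictness fails), and your explicit summation $\sum_u |\mathrm{FU}(u)|\le |W|$ is arguably a cleaner way to finish than the paper's prose.
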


\begin{proof}
  Suppose towards a contradiction that no new pair is marked, 
  but there is an agent~$u\in U$ 
  such that the first agent unmarked by~$u$ is different in different layers, say $w$ and $w'$ in layers~$i$ and $j$, with $w\neq w'$ and $i\neq j$.
  Since no new pair will be marked, $u$ is the last unmarked agent in the preference lists of $w$ and $w'$ in all layers (see \Cref{alg:marking} of \cref{alg:ell-individual}).
  Since $|U|=|W|$ there is a different agent~$u'\in U\setminus \{u\}$ 
  such that for each agent~$w\in W$ we have that $u'$ is \myemph{not} the last unmarked agent in the preference list of $w$ in any layer.
  Since no agent's preference list consists of only marked agents, 
  the preference list of $u'$ (in some layer) contains an agent which is unmarked.
  Denote this agent as $w'$.
  Again, since no new pair will be marked, 
  in the preference list of $w''$, agent~$u'$ is the last unmarked agent---a contradiction.
\end{proof}


When \cref{alg:ell-individual} terminates and no agent contains a preference list that consists of only marked agents,
then we can construct an \aistable{$\ell$} matching by assigning to each agent~$u$ its first unmarked agent in any preference list (note that \cref{lem:unmarked-the-same} ensures on termination that for each agent it holds that in 
all layers is first unmarked agent is the same). 

\begin{lemma}\label{lem:M-ell-individual}
  If upon termination no agent's preference list consists of only marked agents, 
  then the matching~$M$ computed by \cref{alg:ell-individual} 
  is \aistable{$\ell$}.
\end{lemma}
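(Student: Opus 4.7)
The plan is to first verify that the set~$M$ returned in \Cref{line:matching_construction} is indeed a perfect bijection between $U$ and $W$, and then to check $\ell$-layer individual stability by a case split on whether the unmatched pair in question is marked.

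For the matching claim, I would invoke \cref{lem:unmarked-the-same} to conclude that, at termination, the first unmarked agent in $u$'s list in layer~$i$ does not depend on~$i$; call it $M(u)$. So $M$ is a well-defined function $U \to W$. To show injectivity, suppose for contradiction that $M(u) = M(u') = w$ for two distinct agents $u, u' \in U$. Since marking a pair $\{a, b\}$ is symmetric (it marks $b$ in $a$'s list and $a$ in $b$'s list), both $u$ and $u'$ remain unmarked in $w$'s list in every layer. Yet in the final iteration the outer loop processed $u$ while $w$ was the first unmarked partner of $u$ in some layer; had $u \succ_w^{(j)} u'$ held in some~$j$, then Line~\ref{alg:marking-condition} would have marked $\{u', w\}$. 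Since no new pair is marked in the final iteration, this forces $u' \succ_w^{(j)} u$ in every layer~$j$. Swapping the roles of $u$ and $u'$ gives the reverse strict inequality in every layer, a contradiction. Hence $M$ is injective, and since $|U| = |W| = n$, it is a perfect matching.

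For stability, I would take an arbitrary unmatched pair $\{u, w\}$ and verify \cref{def:individual} with $\alpha = \ell$ by distinguishing whether $\{u, w\}$ is marked. If $\{u, w\}$ is marked, then $w$ is marked in $u$'s list, while $M(u)$ is, by construction, the first unmarked entry of $u$'s list in every layer; hence $M(u) \succ_u^{(i)} w$ for all~$i$, establishing the first clause of \cref{def:individual} with $S = [\ell]$. If $\{u, w\}$ is unmarked, then by the symmetric marking rule $u$ is unmarked in $w$'s list, and since $M(w)$ is the first unmarked entry of $w$'s list in every layer, $M(w) \succ_w^{(i)} u$ for all~$i$, which is the second clause.

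The main obstacle is the injectivity step of the matching claim: it relies crucially on the bidirectional nature of the marking rule, on \cref{lem:unmarked-the-same}, and on the fact that termination requires a full iteration with no new markings. Once injectivity is in hand, the stability verification is essentially a one-line reading of the definition of~$M$ together with the semantics of ``marked'' versus ``unmarked''.
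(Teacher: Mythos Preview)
Your stability argument has a genuine gap in both cases of the split. In Case~1 you infer ``$M(u) \succ_u^{(i)} w$ for all $i$'' from the facts that $\{u,w\}$ is marked and $M(u)$ is the first unmarked entry of $u$'s list. But ``first unmarked'' only guarantees that every agent \emph{preceding} $M(u)$ in $u$'s layer-$i$ list is marked; a marked $w$ may sit anywhere, in particular \emph{before} $M(u)$ in some layer. In Case~2 you assert that $M(w)$ is the first unmarked entry of $w$'s list, yet the matching in \cref{line:matching_construction} is built purely from the $U$-side and the algorithm is not symmetric in $U$ and $W$, so this is unjustified. Already a single-layer instance with $u_1,u_2\colon w_1 \succ w_2$ and $w_1,w_2\colon u_1 \succ u_2$ breaks both cases: the algorithm marks only $\{u_2,w_1\}$ and returns $M = \{\{u_1,w_1\},\{u_2,w_2\}\}$; for the marked pair $\{u_2,w_1\}$ one has $w_1 \succ_{u_2} M(u_2)$, refuting Case~1, and for the unmarked pair $\{u_1,w_2\}$ the first unmarked agent in $w_2$'s list is $u_1 \neq M(w_2)$, refuting Case~2.

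The paper does not try to read off, from the marked/unmarked status alone, which side of the pair is content. It argues by contradiction: if $\{u,w\}$ witnesses a failure of $\ell$-individual stability, then $w \succ_u^{(i)} M(u)$ for some layer $i$ forces $\{u,w\}$ to be marked (as $M(u)$ is the first unmarked in layer~$i$), so the innermost loop was executed with this $w$ while processing $u$; combining this with $u \succ_w^{(j)} M(w)$, \cref{alg:marking} would then have marked $\{M(w),w\}$ --- contradicting that $\{M(w),w\}\in M$ is unmarked by construction. The idea you are missing is that \emph{both} preference relations of the would-be destabilising pair must be exploited simultaneously; the marked/unmarked dichotomy by itself does not determine which of the two clauses of \cref{def:individual} holds.
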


\begin{proof}
  Towards a contradiction suppose that $M$ returned by \cref{alg:ell-individual} is not \aistable{$\ell$}.
  That is, there is an unmatched pair~$\{u,w\}\in (\pairset)\setminus M$ with $u\in U$ and $w\in W$
  and two layers~$i, j\in [\ell]$ such that \mbox{$u\colon w \pref^{(i)}_{u} M(u)$} and \mbox{$w\colon u \pref^{(j)}_{w} M(w)$}.
  Observe that agent~$u$ is matched with the first agent, denoted as~$x$, in the preference list of $u$ such that the pair $\{u,x\}$ is not marked, and so we infer that $\{u, w\}$ is marked.
  Thus, the innermost loop of the algorithm has been run for the pair $\{u, w\}$ (see \Cref{line:repeat_condition}).
  By \Cref{alg:marking} of \cref{alg:ell-individual}
  for all agents~$u'$ where $w\colon u\pref^{(j)}_{w} u'$ for some layer~$j'\in [\ell]$, 
  the pair~$\{u',w\}$ is marked.
  This includes the pair~$\{M(w), w\}$ since $w\colon u \pref^{(j)}_{w} M(w)$---a contradiction to \cref{lem:marked-pair-not-ell-stab}.
\end{proof}

Finally, we obtain that \cref{alg:ell-individual} computes an \aistable{$\ell$}
 matching if one exists.

\begin{theorem}\label{thm:l_individual_polynomial}
  For $\alpha=\ell$, \cref{alg:ell-individual} solves \ISM in $O(\ell \cdot n^2)$~time. 
\end{theorem}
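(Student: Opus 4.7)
My plan is to assemble \cref{lem:marked-pair-not-ell-stab,lem:unmarked-the-same,lem:M-ell-individual} into a correctness statement and then bound the running time by amortising over pair-marking events. For correctness I distinguish the two ways in which the outer \texttt{repeat} can terminate. If some agent's preference list contains only marked entries, then by \cref{lem:marked-pair-not-ell-stab} every pair incident to that agent is forbidden in any \aistable{$\ell$} matching, so the agent cannot be matched in any \aistable{$\ell$} matching, and returning ``no'' is sound. Otherwise the loop exits because no new pair was marked in the final iteration. Then \cref{lem:unmarked-the-same} guarantees that for every agent the first unmarked agent in its list is identical across all $\ell$ layers, so the set $M$ assembled in \Cref{line:matching_construction} is well defined.

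To see that $M$ is actually a bijection (hence a matching), I would argue from the symmetry of \Cref{alg:marking}: if $u$ is the first unmarked agent in $w$'s list in every layer, then every $u'\neq u$ that appears above $u$ somewhere on $w$'s list has already been marked when processing $w$ in the layer witnessing $u\succ^{(j)}_w u'$, and every $u'$ appearing below $u$ somewhere gets marked whenever $u$'s pointer reaches $w$. Combined with $|U|=|W|$, this forces the first-unmarked relation to be reciprocal and so $M$ is a perfect matching. \cref{lem:M-ell-individual} then completes correctness by showing $M$ is \aistable{$\ell$}.

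For the running time I would implement the algorithm with a per-pair mark flag and, for each $(u,i)$, a monotone pointer into $u$'s list in layer $i$ that only advances past already-marked entries. Since each pointer moves at most $n$ times and there are $\ell n$ pointers, the total pointer work across the entire execution is $O(\ell n^2)$. Each of the at most $n^2$ pairs is marked at most once, and each marking event in \Cref{alg:marking} scans a suffix of $w$'s preference list in a single layer, so charging the scan at $(u,i,w)$ to the pair whose marking it produces (using one layer of each of $w$'s lists) yields $O(\ell)$ amortised cost per mark, hence $O(\ell n^2)$ in total. The main obstacle is the amortised accounting across multiple outer iterations, since the outer \texttt{repeat} may run $\Theta(n^2)$ times in the worst case; I would resolve it with a potential argument whose potential is the number of unmarked pairs, so that every unit of work is paid for either by a pointer advancement or by a fresh marking, both of which are globally bounded, giving the claimed $O(\ell\cdot n^2)$ overall bound.
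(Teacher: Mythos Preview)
Your proposal follows the paper's argument closely: both invoke \cref{lem:marked-pair-not-ell-stab} for the ``no'' branch, \cref{lem:M-ell-individual} for the ``yes'' branch, and an amortised pointer-advancement argument for the $O(\ell\cdot n^2)$ bound. The paper's running-time justification is terser (``touch each pair at most twice''), but your potential-function account and the paper's bookkeeping are the same idea.

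Two remarks. First, you are right to notice that one must check $M$ is actually a perfect matching; the paper glosses over this and relies implicitly on \cref{lem:unmarked-the-same}. However, your bijection sketch appeals to ``processing $w$ in the layer witnessing $u\succ^{(j)}_w u'$,'' which is not quite how the algorithm works: the outer loop ranges only over $u\in U$, so agents in $W$ are never ``processed'' directly. The correct mechanism is that when $u$'s pointer in layer~$i$ passes through $w$, \Cref{alg:marking} marks every $\{u',w\}$ with $u\succ^{(j)}_w u'$ for some~$j$; this is what forces $u$ to be the last unmarked agent on $w$'s lists at termination (this is exactly the observation inside the proof of \cref{lem:unmarked-the-same}). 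From there, the pigeonhole step you give does yield bijectivity. Second, your worry that the outer \texttt{repeat} may run $\Theta(n^2)$ times is well placed, and your resolution via a potential equal to the number of unmarked pairs is the right fix; the paper's two-touches-per-pair argument encodes the same amortisation.
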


\begin{proof}
  Let $I=(U,W,P_1,P_2,\ldots,P_{\ell})$ with $2\cdot n$ agents be the input of  \cref{alg:ell-individual}. 
  By \cref{lem:marked-pair-not-ell-stab}, no \aistable{$\ell$} matching contains a marked pair.
  If there is an agent whose preference list consists of only marked agents, 
  then we can immediately conclude that the given instance is a no-instance.

  Otherwise, \cref{lem:M-ell-individual} proves that the algorithm returns an \aistable{$\ell$} matching.

  It remains to show that the algorithm terminates and has running time~$O(\ell \cdot n^2)$.
  Since there are in total $O(n^2)$ pairs, the algorithm will eventually terminate, either because some agent's preference list consists of only marked agents or because 
  no new new pair will be marked. 

  By using a list that points to the first unmarked agent of each agent~$u$ in each layer,
  and by using a table that stores pairs which are already marked and reconsidered by \Cref{alg:marking}, 
  the algorithm needs to ``touch'' each pair at most twice, once when it is not yet marked and a second time when it is already marked.
\end{proof}


\subsection{NP-hardness for \gstability{$\ell$} and \abstability{$\ell$}}
In contrast to \aistability{$\ell$}, in this section we show that deciding \GSM is NP-hard as soon as $\ell=2$.
We establish this by reducing the NP-complete 3-SAT problem~\cite{GJ79} to the decision variant of \GSM.
  The idea behind this reduction is to introduce for each variable four agents that admit exactly two possible globally stable matchings, one for each truth value.
  Then, we construct a satisfaction gadget for each clause by introducing six agents.
  These agents will have three possible globally stable matchings.
  We use a layer for each literal contained in the clause to enforce that setting the literal to false will exclude exactly one of the three globally stable matchings.
  Therefore, unless one of the literals in the clause is set to true, no globally stable matching remains.

  Using the above idea, we can already show hardness of deciding \gstability{$\ell$} for $\ell=3$.
  With some tweaks and using a restricted variant of 3-SAT~\cite{GJ79}  (see \cref{lem:np-hard-3SAT-variant}), we can strengthen our hardness result to hold even for $\ell=2$.
  From here on, we call a clause \myemph{monotone} if the contained literals are either all positive or all negative.

\begin{lemma}\label{lem:np-hard-3SAT-variant}
  3-SAT is NP-hard even if each clause has either two or three literals,  and \myemph{no} size-three clause is monotone
  while all size-two clauses are monotone
\end{lemma}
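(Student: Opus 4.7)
The plan is to reduce from the standard NP-hard variant of 3-SAT in which every clause contains exactly three literals. For each clause $C$ of the given instance I would examine whether $C$ is monotone: if $C$ already contains both a positive and a negative literal, I leave it untouched, since it is then a non-monotone size-three clause and already meets the restriction. Only the monotone 3-clauses need rewriting, and each such $C$ will be split into one non-monotone 3-clause plus one monotone 2-clause with the help of a fresh auxiliary variable.

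Concretely, for an all-positive clause $C = (x_1 \vee x_2 \vee x_3)$ I would introduce a fresh variable $y_C$ and replace $C$ by the two clauses $C_1 = (x_1 \vee x_2 \vee \neg y_C)$ and $C_2 = (y_C \vee x_3)$. Here $C_1$ mixes two positive literals with one negative literal and is therefore a non-monotone 3-clause, while $C_2$ consists of two positive literals and is a monotone 2-clause. Symmetrically, for an all-negative clause $C = (\neg x_1 \vee \neg x_2 \vee \neg x_3)$ I would use $C_1 = (\neg x_1 \vee \neg x_2 \vee y_C)$ and $C_2 = (\neg y_C \vee \neg x_3)$, again yielding a non-monotone 3-clause paired with a monotone 2-clause. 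Each fresh variable appears only in the two clauses derived from its associated $C$.

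Correctness amounts to checking that $C_1 \wedge C_2$ is equisatisfiable with $C$ modulo $y_C$. For the positive case, any assignment satisfying $C$ extends to one satisfying $C_1 \wedge C_2$ by setting $y_C$ to true when $x_3$ is false and to false when $x_3$ is true; conversely, if $x_1, x_2, x_3$ are all false, then $C_1$ forces $y_C$ to be false while $C_2$ forces $y_C$ to be true, a contradiction. The all-negative case is symmetric. The reduction is clearly polynomial, and the resulting instance satisfies all three required restrictions by construction. The only delicate point is the choice of polarity of the auxiliary literal, so that the new 3-clause becomes non-monotone while the new 2-clause remains monotone; flipping the sign of $y_C$ would destroy exactly this balance. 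Since the starting 3-SAT instance contains no 2-clauses to begin with, no further effort is needed to safeguard monotonicity of pre-existing size-two clauses.
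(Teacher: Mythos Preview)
Your proposal is correct. The construction and the equisatisfiability argument are sound, and the resulting instance has exactly the required form.

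However, your route differs from the paper's. The paper does not split clauses; instead it introduces one helper variable~$z_i$ \emph{per original variable}~$x_i$, adds the two monotone size-two clauses $(x_i \vee z_i)$ and $(\overline{x}_i \vee \overline{z}_i)$ (which together force $z_i \leftrightarrow \overline{x}_i$), and then in every monotone size-three clause replaces one literal by the equivalent literal on the corresponding~$z_i$ to break monotonicity. So the paper fixes the problem by a global variable-doubling trick, while you fix it by a local Tseitin-style split, introducing one auxiliary per \emph{monotone clause} and replacing that clause by a non-monotone 3-clause plus a monotone 2-clause. Both are polynomial and both achieve the stated restrictions; your approach is a bit more economical when there are few monotone clauses, whereas the paper's is uniform over all variables and does not need the per-case analysis of how to choose the polarity of the fresh literal.
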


\begin{proof} 
  We start with a 3-SAT instance and do the following.
  For each variable~$x_i$ introduce a helper variable~$z_i$,
  and make sure that the helper variable~$z_i$ is set to false if and only if the original variable~$x_i$ is set to true.
  To achieve this, 
  we add to the instance two new clauses $(x_i\vee z_i)$ and $(\overline{x}_i \vee \overline{z_i})$.
  Finally, for each original clause (note that it has size three) that contains only positive literals (resp.\ only negative literals), say~$x_i \vee x_j \vee x_k$ (resp.\ $\overline{x}_i \vee \overline{x}_j \vee \overline{x}_k$), we replace an arbitrary literal, say~$x_i$ (resp.\ $\overline{x}_i$), with $\overline{z}_i$ (resp.\ $z_i$).
  Observe that in the new instance, each original clause has size three and contains at least one negative and at least one positive literal,
  and that the newly introduced clauses have size two and are monotone.
  It is straightforward to see that the original instance is a yes-instance if and only if the new instance is a yes-instance.
\end{proof}
  
 Now, we are ready to present one of our main results.


\begin{theorem}\label{thm:global-hard-alpha=ell=constant}
  \GSM is NP-hard even if $\alpha=\ell=2$.
\end{theorem}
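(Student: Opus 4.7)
My plan is to reduce from the restricted 3-SAT variant established in \cref{lem:np-hard-3SAT-variant}, where every size-three clause contains both a positive and a negative literal and every size-two clause is monotone. The reduction produces an instance of \GSM with exactly two layers. The natural way to exploit the $\ell=2$ budget is to dedicate \emph{layer 1} to positive literals and \emph{layer 2} to negative literals: stability in layer~$i$ will encode the requirement that at least one literal of polarity~$i$ in each clause is satisfied, and the non-monotonicity guarantee on size-three clauses ensures that both layers receive information from each such clause.

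The building blocks I would design are as follows. First, for each variable~$x_i$ I introduce a \emph{variable gadget} on four agents whose preference lists coincide in both layers and are chosen so that the gadget, viewed in isolation, admits exactly two stable matchings $M_i^{\true}$ and $M_i^{\false}$ corresponding to the two truth assignments of~$x_i$. Second, for each clause~$C_j$ I introduce a \emph{clause gadget} on six agents admitting exactly three stable matchings, one per literal of~$C_j$, designed so that the ``literal~$\ell$'' matching is destroyed (blocked in the corresponding layer) whenever the variable gadget of~$\ell$ sits in the falsifying configuration. Concretely, I route one edge of each clause gadget into the variable gadget of each literal, with preferences tuned so that a blocking pair in the appropriate layer arises exactly when the literal is false. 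For a size-three non-monotone clause the three options get separated across the two layers (at least one per layer), so that if every literal is false, every layer witnesses a blocking pair; for a monotone size-two clause, only the relevant single layer is active, which is consistent because both literals have the same polarity.

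The main obstacle will be engineering the preferences so that the gadgets \emph{compose} correctly under \gstability{$2$}. Three issues require care: (i) isolating each variable gadget to exactly two stable matchings in each layer simultaneously, which is nontrivial because a globally stable matching must pick the \emph{same} set of two layers for the whole instance and here $S=\{1,2\}$; (ii) preventing ``long-range'' blocking pairs between distant gadgets, which I would rule out by making cross-gadget preferences uniformly pessimistic so that only intended near-gadget pairs can block; and (iii) squeezing the three-layer intuition (one layer per literal) into only two layers, which is precisely where \cref{lem:np-hard-3SAT-variant} is invoked to cap the polarity count per clause. Each of these is a concrete preference-list design task rather than a conceptual difficulty, but the case analysis for verifying absence of blocking pairs in both layers simultaneously is the part where bookkeeping will be heaviest.

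Finally I would prove correctness in both directions. For the forward direction, given a satisfying assignment~$\tau$, I define a matching by taking $M_i^{\tau(x_i)}$ in each variable gadget and, in each clause~$C_j$, the option in the clause gadget corresponding to some literal satisfied by~$\tau$; verifying that no unmatched pair blocks in either of the two layers reduces to a local check inside each gadget plus the uniformly pessimistic cross-gadget preferences. For the backward direction, given any \gstable{$2$} matching~$M$, I read off a truth assignment from the variable gadgets (the only two locally stable configurations), and for each clause gadget argue that $M$ must use one of the three intended options; by construction this option corresponds to a literal whose variable gadget is in the satisfying configuration, producing a satisfying assignment for the 3-SAT instance. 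Combined with \cref{lem:np-hard-3SAT-variant}, this establishes NP-hardness of \GSM already for $\alpha=\ell=2$.
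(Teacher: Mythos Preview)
Your plan is essentially the paper's approach: reduce from the restricted 3-SAT of \cref{lem:np-hard-3SAT-variant}, build four-agent variable gadgets with two stable configurations and six-agent clause gadgets with three, and wire each literal to kill exactly one clause-option in one of the two layers; the paper's construction and correctness argument follow exactly this outline (the paper puts the first two literals' links in layer~1 and the third literal's link in layer~2, rather than splitting strictly by polarity, but either allocation works once the non-monotone/monotone split of \cref{lem:np-hard-3SAT-variant} is in place). One sentence to tighten: ``stability in layer~$i$ encodes that at least one literal of polarity~$i$ in each clause is satisfied'' is not literally what you need and would break the forward direction (a clause may be satisfied while all its negative literals are false); what you actually argue later---and what the paper proves---is that a falsified literal forbids \emph{its own} option in its designated layer without touching the other options, so picking any satisfied literal's option yields stability in both layers.
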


\begin{proof}
  We provide a polynomial-time reduction from
  an NP-complete restricted variant of 3-SAT as given by \cref{lem:np-hard-3SAT-variant} to the decision 
  version of \GSM.
  Further, without loss of generality we assume that no clause contains two literals of the form $x$ and $\overline{x}$ as it will be satisfied anyway and can be ignored from the input instance.

  Let $(X, \mathcal{C})$ be an instance of the aforementioned 3-SAT variant with $X=\{x_1,x_2,\ldots, x_n\}$ being the set of variables and $\mathcal{C}=\{C_1,C_2,\ldots, C_m\}$ being the set of clauses of size at most three each.
  To unify the expression, for each size-three clause~$C_j=\ell^1_j\vee \ell^2_j \vee \ell^3_j$
  we order the literals so that the first literal is positive and the second one is negative.
  For each size-two clause~$C_j= \ell^{1}_j \vee \ell^{2}_j$ (note that it is monotone),
   we order the literals arbitrarily, and
   we call it a \myemph{positive} clause if it contains only positive literals, otherwise we call it a \myemph{negative} clause.


  For each variable~$x_i\in X$, we create four \myemph{variable agents}~$x_i, \overline{x}_i, y_i, \overline{y}_i$ (we will make it clear when using $x_i$ and $\overline{x}_i$ whether we are referring to the literals or the variable agents).
  We will construct the preference lists of the variable agents so that each globally stable matching contains either $M^{\text{true}}_i\coloneqq \{\{x_i, y_i\}, \{\overline{x}_i, \overline{y}_i\}\}$ or $M^{\text{false}}_i\coloneqq \{\{x_i, \overline{y}_i\}, \{\overline{x}_i, y_i\}\}$.
  Briefly put, using~$M^{\text{true}}_i$ and~$M^{\text{false}}_i$ will correspond to setting the variable~$x_i$ to true or false, respectively.

  For each clause~$C_j\in \mathcal{C}$, we create six \myemph{clause agents}~$a_j,b_j,c_j,d_j,e_j,f_j$.
  We will construct preference lists for these clause agents so that 
for each size-three-clause, there are exactly three different ways in which these agents are matched in a globally stable matching,
and for each size-two-clause, there are exactly two such ways.
We use two layers to enforce that setting a different literal contained in the clause~$C_j$ to false excludes exactly one of these ways.

  In total, we have $4n+6m$ agents and we divide them into two groups~$U$ and $W$ with $U=\{x_i, \overline{x}_i\mid i \in [n]\}\} \cup \{a_i, b_i, c_j \mid j \in [m]\}$ and $W=\{y_i, \overline{y}_i \mid i \in [n]\}\cup \{d_j, e_j, f_j \mid j \in [m]\}$.

\paragraph{Preference lists of the variable agents.} 
  The preference lists of the variable agents restricted to the variable agents have the same pattern. We use the symbol~``$\cdots$'' to denote some arbitrary order of the remaining agents (that is, agents which were not yet explicitly mentioned in the preference list).
    \begin{alignat*}{4}
    &  \text{Layer~$(1)\colon \forall i \in [n]$:} \qquad
       & x_i\colon& y_i \pref {\color{winered}D_i} \pref \overline{y}_i \pref \cdots, &\quad\quad  y_i\colon& \overline{x}_i \pref {\color{winered}A_i} \pref x_i \pref \cdots,\\
    && \overline{x}_i \colon& \overline{y}_i \pref y_i \pref \cdots,  &  \overline{y}_i\colon& x_i\pref \overline{x}_i  \pref \cdots\text{.}\\[1ex]
    & \text{Layer~$(2)\colon \forall i \in [n]$:}\qquad
     &   x_i\colon& \overline{y}_i \pref  y_i \pref \cdots, &\quad\quad  y_i\colon& x_i \pref \overline{x}_i \pref \cdots,\\
    && \overline{x}_i \colon& y_i \pref {\color{winered}D'_i}\pref \overline{y}_i  \pref \cdots,  &  \overline{y}_i\colon& \overline{x}_i\pref {\color{winered}A'_i} \pref x_i  \pref \cdots\text{.}
\end{alignat*}
It remains to specify the meaning of symbols~$A_i, D_i, A'_i$, and $D'_i$.
\noindent
\begin{description}
\item[{\color{winered}$\boldsymbol{A_i}$}] denotes a list (in an arbitrary order) of all clause agents~$a_j$ that satisfy either of the following conditions:
\begin{compactenum}[(a)]
  \item $a_j$ corresponds to a \myemph{size-three-clause}~$C_j$ such that the \myemph{second} literal of clause~$C_j$ (which is a negative literal) is~$\overline{x}_i$, or
  \item it corresponds to a \myemph{negative size-two-clause}~$C_j$ such that the \myemph{first} literal of clause~$C_j$ is $\overline{x}_i$.
\end{compactenum}
\item[{\color{winered}$\boldsymbol{D_i}$}] denotes a list (in an arbitrary order) of all clause agents~$d_j$ 
such that the first literal of clause~$C_j$ is $x_i$ (note that in this case $C_j$ has either three literals or exactly two positive literals). 
\item[{\color{winered}$\boldsymbol{A'_i}$}] denotes a list (in an arbitrary order) of all clause agents~$a_j$ such that the last literal of $C_j$ is $x_i$ (note that in this case $C_j$ has either three literals or exactly two positive literals). 
\item[{\color{winered}$\boldsymbol{D'_i}$}] denotes a list (in an arbitrary order) of all clause agents~$d_j$ such that 
the last literal of $C_j$ is $\overline{x}_i$ (note that in this case  $C_j$ has either three literals or exactly two negative literals). 
\end{description}

\noindent To illustrate the above notation, suppose that variable $x_i$ appears in four size-three-clauses, call them $C_1,C_2,C_3$, and $C_5$, and in two size-two-clauses: $C_4$ and $C_6$.
The positive literal~$x_i$ is the first literal in clauses~$C_1$, $C_3$, and $C_4$.
The negative literal~$\overline{x}_i$ is the second literal in $C_2$,
and the last literal in $C_5$ and $C_6$.
In this case, $A_i=a_2$, $D_i$ could be $D_i=d_1 \pref d_3 \pref d_4$, $A'_i$ is empty, and  $D'_i$ could be $D'_i=d_5\pref d_6$.

\paragraph{Preference lists of the clause agents.} 
The preference lists for the clause agents in the first layer are ``fixed'' when restricted to the clause agents; they only differ in the positions of variable agents. For a clause $C_j$ and an integer $t \in \{1, 2, 3\}$ let $C_j^{(t)}$ denote the $t$-th literal in $C_j$ (there will be no $C_j^{(3)}$ if $C_j$ has two literals). For a literal $\ell_i$ which is $x_i$ or $\overline{x}_i$, by $X(\ell_i)$, $Y(\ell_i)$, $\overline{X}(\ell_i)$, and $\overline{Y}(\ell_i)$, we denote the variable agents $x_i$, $y_i$, $\overline{x_i}$, and~$\overline{y_i}$, respectively, all corresponding to variable~$x_i$. 
For instance, for a clause~$C_j = x_2 \vee \overline{x}_4 \vee x_5$, we have that $\overline{Y}(C_j^{(1)}) = \overline{y}_2$, and $\overline{Y}(C_j^{(2)}) = \overline{y}_4$.
\begin{alignat*}{3}
    \intertext{Layer~$(1), \forall j \in [m]$:}
  &  \text{$|C_j|=3$:~~}
    & a_j\colon& d_j\pref e_j \pref {\color{winered} Y(C_j^{(2)})} \pref f_j\pref \cdots, & \quad d_j\colon & b_j \pref c_j \pref {\color{winered}X(C_j^{(1)})} \pref a_j \pref \cdots,\\
    && b_j\colon& e_j\pref f_j \pref d_j \pref \cdots, & \quad e_j\colon & c_j \pref a_j \pref b_j \pref \cdots,\\
&& c_j\colon& f_j\pref  d_j \pref e_j \pref \cdots, & \quad f_j\colon & a_j \pref b_j \pref c_j\pref \cdots,\\
&\text{$|C_j|=2$ and $C_j$ is \myemph{positive}:}
    & a_j\colon& d_j\pref e_j\pref \cdots, & \quad d_j\colon & b_j \pref {\color{winered}X(C_j^{(1)})} \pref a_j \pref \cdots,\\
    && b_j\colon& e_j\pref d_j \pref \cdots, & \quad e_j\colon & a_j \pref b_j \pref \cdots,\\
    && c_j\colon& f_j\pref  \cdots, & \quad f_j\colon &  c_j\pref \cdots,\\
&\text{$|C_j|=2$ and $C_j$ is \myemph{negative}:}~~
    & a_j\colon& d_j\pref {\color{winered}Y(C_j^{(1)})} \pref e_j\pref \cdots, & \quad d_j\colon & b_j \pref a_j \pref \cdots,\\
    && b_j\colon& e_j\pref d_j \pref \cdots, & \quad e_j\colon & a_j \pref b_j \pref \cdots,\\
    && c_j\colon& f_j\pref  \cdots, & \quad f_j\colon &  c_j\pref \cdots.
\end{alignat*}

\noindent
The preference lists for the second layer depends on the ``positiveness'' of the last literal.
There are two variants:
\begin{alignat*}{3}
\intertext{Layer $(2), \forall j \in [m]$ with $|C_j|=3$:}
&\text{Variant $1$ ($C_j^{(3)}$ is \myemph{positive})}\colon~
  &a_j\colon& f_j\pref d_j \pref {\color{winered}\overline{Y}(C_j^{(3)})} \pref e_j\pref \cdots, & \quad d_j\colon & c_j \pref a_j \pref b_j \pref \cdots,\\
   && b_j\colon& d_j\pref e_j \pref f_j \pref \cdots, & \quad e_j\colon & a_j \pref b_j \pref c_j \pref \cdots,\\ 
 && c_j\colon& e_j\pref  f_j  \pref d_j \pref \cdots, & \quad f_j\colon & b_j \pref c_j \pref a_j\pref \cdots,\\
&\text{Variant $2$ ($C_j^{(3)}$ is \myemph{negative})}\colon~
    & a_j\colon& e_j \pref f_j \pref d_j\pref \cdots, & \quad d_j\colon & a_j \pref b_j \pref {\color{winered}\overline{X}(C_j^{(3)})} \pref c_j \pref \cdots,\\
    && b_j\colon& f_j\pref d_j \pref e_j \pref \cdots, & \quad e_j\colon & b_j \pref c_j \pref a_j \pref \cdots,\\
&& c_j\colon& d_j \pref e_j \pref f_j \pref \cdots, & \quad f_j\colon & c_j \pref a_j \pref b_j \pref \cdots,\\
\intertext{Layer $(2), \forall j \in [m]$ with $|C_j|=2$:}
&\text{Variant $1$ ($C_j$ is \myemph{positive})}\colon~
& a_j\colon& d_j \pref {\color{winered}\overline{Y}(C_j^{(2)})} \pref e_j\pref \cdots, & \quad d_j\colon &  b_j \pref a_j \pref \cdots,\\
   && b_j\colon& e_j\pref d_j \cdots, & \quad e_j\colon & a_j \pref b_j \cdots,\\
   && c_j\colon&   f_j  \pref \cdots, & \quad f_j\colon & c_j\pref \cdots,\\
&\text{Variant $2$ ($C_j$ is \myemph{negative})}
& a_j\colon& d_j \pref e_j\pref \cdots, & \quad d_j\colon &  b_j \pref {\color{winered} \overline{X}(C_j^{(2)})}\pref a_j \pref \cdots,\\
   && b_j\colon& e_j\pref d_j \cdots, & \quad e_j\colon & a_j \pref b_j \cdots,\\
   && c_j\colon& f_j \pref \cdots, & \quad f_j\colon & c_j \pref \cdots.
\end{alignat*}

\noindent
This completes the construction which can be done in polynomial time.

Before we show the correctness of our construction, we first discuss some properties that each \gstable{$2$} matching~$M$ must satisfy.
\begin{claim}\label{claim:variable_gadget}
  Let $M$ be a \gstable{$2$} matching for our two-layer preference profiles.
  For each variable~$x_i \in X$, it holds that either $M^{\text{true}}_{i}\subseteq M$ or $M^{\text{false}}_i \subseteq M$.
\end{claim}
\begin{proof}\renewcommand{\qedsymbol}{(of
      \cref{claim:variable_gadget})~$\diamond$}
  To see this, we distinguish between two cases, depending on whether the partner of $\overline{x}_i$, $M(\overline{x}_i)$, is $\overline{y}_i$ or not.
  If $M(\overline{x}_i)\neq \overline{y}_i$, then by the stability of $M$ for the first layer, 
  it follows that $\overline{y}_i$ prefers its partner~$M(\overline{y}_i)$ to $\overline{x}_i$ in the first layer as otherwise $\overline{x}_i$ and $\overline{y}_i$ are forming a blocking pair for the first layer.
  Since $x_i$  is the only agent that $\overline{y}_i$ prefers to~$\overline{x}_i$ in this layer,
  we have that $M(\overline{y}_i)=x_i$.
  Then, it must hold that $M(\overline{x}_i)=y_i$ as otherwise $\overline{x}_i$ and $y_i$ would block $M$ in the first layer. 
  Thus, $M^{\text{false}}_{i}\subseteq M$.

  Similarly, if $M(\overline{x}_i)=\overline{y}_i$, then by the stability of~$M$ 
and by construction of the preference lists of $x_i$ and $y_i$ in the second layer we must have that $M(x_i)=y_i$. 
This leads to $M^{\text{true}}_{i}\subseteq M$.
\end{proof}

We obtain a similar result for the clause agents. 
For each clause~$C_j\in \mathcal{C}$ with $|C_j|=3$,
let $N^{1}_j=\{\{a_j,d_j\}, \{b_j,e_j\}, \{c_j,f_j\}\}$, 
$N^2_j=\{\{a_j,f_j\}, \{b_j,d_j\},\{c_j,e_j\}\}$, 
$N^{3}_j=\{\{a_j,e_j\}, \{b_j,f_j\}, \{c_j,d_j\}\}$.
\begin{claim}\label{claim:clause_gadget}
  Let~$C_j\in \mathcal{C}$ be a size-three-clause, and
  let $x_i$ be a variable that appears (as either a positive or a negative literal) in $C_j$.
  For a \gstable{$2$} matching $M$ the following conditions hold:
  \begin{compactenum}[(i)]
    \item If $x_i$ is the first literal in $C_j$ and if $M^{\text{false}}_{i}\subseteq M$, 
    then either $N^{2}_j\subseteq M$ or $N^{3}_j\subseteq M$.
    \item If $\overline{x}_i$ is the second literal in $C_j$ and if $M^{\text{true}}_{i}\subseteq M$, 
    then either $N^{1}_j\subseteq M$ or $N^{3}_j\subseteq M$.
    \item If $x_i$ is the third literal in $C_j$ and if $M^{\text{false}}_{i}\subseteq M$,
    then either $N^{1}_j\subseteq M$ or $N^{2}_j\subseteq M$.
    \item If $\overline{x}_i$ is the third literal in $C_j$ and if $M^{\text{true}}_{i}\subseteq M$,
    then either $N^{1}_j\subseteq M$ or $N^{2}_j\subseteq M$.
  \end{compactenum}
\end{claim}
\begin{proof}
\renewcommand{\qedsymbol}{(of
      \cref{claim:clause_gadget})~$\diamond$}
We consider the four cases separately:
    \begin{compactenum}[(i)]
    \item Assume that $x_i$ is the first literal in $C_j$ and $M^{\text{false}}_{i}\subseteq M$.
    This implies that $\{x_i, \overline{y}_i\}\in M$. Consider 
    the preference list of $x_i$ in the first layer, and observe
    that $d_j$ appears in $D_i$. 
    Since $x_i$ prefers~$d_j$ to its partner~$\overline{y}_i$ in the first layer,
    it follows that $d_j$ must obtain a partner that it prefers to~$x_j$ in the first layer.
    By the preference list of $d_j$ in the first layer, we have that $M(d_j) \in \{b_j, c_j\}$.
    If $M(d_j)=b_j$, then by the preference list of $b_j$ in the first layer
    it follows that both $e_j$ and $f_j$ must obtain partners that they find better 
    than $b_j$ in the first layer.
    This means that $M(f_j)=a_j$ and $M(e_j) \in \{a_j, c_j\}$, implying that $M(e_j) = c_j$.
    Analogously, if $M(d_j)=c_j$, 
    then $\{a_j, e_j\}\in M$ as otherwise they will block the first layer 
    since the most preferred agents of both $a_j$ and $e_j$ are already assigned to someone else,
    and $a_j$ and $e_j$ are each other's second most preferred agents.
    Then, $b_j$ must obtain a partner that it prefers to~$d_j$.
    Since $f_j$ is the only agent left that $b_j$ prefers to~$d_j$,
    we get that $M(b_j)=f_j$, and so $N^3_j\subseteq M$.
    
    \item Assume that $\overline{x}_i$ is the second literal in $C_j$ and $M^{\text{true}}_{i}\subseteq M$ (thus, in particular, $\{x_i, y_i\}\in M$).
    Since $a_j$ appears in $A_i$ in the preference list of $y_i$ in the first layer, we infer
    that $y_i$ prefers~$a_j$ to its partner~$x_i$ in the first layer.
    Thus, it follows that $a_j$ must obtain a partner that it prefers to~$y_i$ in the first layer, \emph{i.e.}\ that $M(a_j) \in \{d_j, e_j\}$.
    If $M(a_j)=d_j$, then by considering the preference list of $d_j$ in the first layer
    we infer that both $b_j$ and $c_j$ obtain partners that they find better 
    than $d_j$ in the first layer.
    This means that $M(c_j)=f_j$, and $M(b_j) \in \{e_j, f_j\}$. Since $f_j$ is taken by $c_j$, we get that $M(b_j) = e_j$.
    Together, we have that $N^{1}_j \subseteq M$.

    Analogously, if $M(a_j)=e_j$, 
    then $\{b_j, f_j\}\in M$ as otherwise they would block the first layer
    since the most preferred agents of both $b_j$ and $f_j$ are already assigned to someone else,
    and $b_j$ and $f_j$ are each other's second most preferred agent. 
    Moreover, $c_j$ must obtain a partner that it prefers to~$e_j$.
    Since $d_j$ is the only agent left that $c_j$ prefers to~$e_j$,
    we obtain that $M(c_j)=d_j$.
    This leads to $N^3_j\subseteq M$.

    \item  Assume that $x_i$ is the third literal in $C_j$ and $M^{\text{false}}_{i}\subseteq M$.
    Observe that in this case $a_j$ appears in $A'_i$ in the preference list of $\overline{y}_i$ in the second layer,
    and since $\{x_i, \overline{y}_i\}\in M$,  that $\overline{y}_i$ prefers~$a_j$ to its partner~$x_i$ in the second layer.
    It follows that $a_j$ must obtain a partner that it prefers to~$\overline{y}_j$ in the second layer.
    By investigating the preference list of $a_j$ in the second layer (note that we are in Variant~$1$), we have that $M(a_j) \in \{f_j, d_j\}$.
      If $M(a_j)=f_j$, then by looking at the preference list of $f_j$ in the second layer
      we infer that both $b_j$ and $c_j$ must obtain partners that they find better 
      than $f_j$ in the second layer. Thus $M(c_j)=e_j$, and consequently, $M(b_j)=d_j$.
      Summarizing, in this case we have that $N^{2}_j \subseteq M$.
       
       Analogously, if $M(a_j)=d_j$, 
       then $\{b_j, e_j\}\in M$ as otherwise they would block the second layer.
       Moreover, $f_j$ must obtain a partner that it prefers to~$a_j$.
       Since $c_j$ is the only agent left that $f_j$ prefers to~$a_j$,
       we obtain that $M(c_j)=f_j$.
       This leads to $N^1_j\subseteq M$.

    \item  Finally, assume that $\overline{x}_i$ is the third literal in $C_j$ and $M^{\text{true}}_{i}\subseteq M$.
    In this case, we have that $d_j$ appears in $D'_i$ in the preference list of $\overline{x}_i$ in the second layer, and that $\{\overline{x}_i, \overline{y}_i\}\in M$.
    This means that $\overline{x}_i$ prefers~$d_j$ to its partner~$\overline{y}_i$ in the second layer,
    and so it must be the case that $d_j$ obtains a partner that it prefers to~$\overline{x}_j$ in the second layer.
    As a result, we have that $M(d_j) \in \{a_j, b_j\}$.
    If $M(d_j)=a_j$, then the preference list of $a_j$ indicates that both $e_j$ and $f_j$ must obtain partners that they find better 
    than $a_j$ in the second layer. Thus, $M(f_j)=c_j$, and $M(e_j) \in \{b_j, c_j\}$. Consequently, $M(e_j) = b_j$, and we get that $N^{1}_j \subseteq M$.
    Finally, if $M(d_j)=b_j$, 
    then $\{c_j, e_j\}\in M$ as otherwise they would block the second layer.
    Further, $a_j$ must obtain a partner that it prefers to~$d_j$, thus $M(a_j)=f_j$.
    Consequently, $N^2_j\subseteq M$.
    \end{compactenum}
\end{proof}

For each clause~$C_j\in \mathcal{C}$ with $|C_j|=2$,
let $N^{1}_j=\{\{a_j,d_j\}, \{b_j,e_j\}\}$, 
$N^{2}_j=\{\{a_j,e_j\}, \{b_j,d_j\}\}$.

\begin{claim}\label{claim:clause_gadget-2}
   Let~$C_j\in \mathcal{C}$ be a size-two-clause, and let $x_i$ be a variable that appears (as either a positive or a negative literal) in $C_j$.
  Assume that $M$ is a \gstable{$2$} matching. The following holds:
  \begin{compactenum}[(i)]
    \item\label{claim-claus-1} If $x_i$ is the first literal in $C_j$ and if $M^{\text{false}}_{i}\subseteq M$, 
    then $N^{2}_j\subseteq M$.
    \item\label{claim-claus-2} If $\overline{x}_i$ is the last literal in $C_j$ and if $M^{\text{true}}_{i}\subseteq M$, 
    then $N^{2}_j\subseteq M$.
    \item\label{claim-claus-3} If $x_i$ is the last literal in $C_j$ and if $M^{\text{false}}_{i}\subseteq M$, 
    then $N^{1}_j\subseteq M$.
    \item\label{claim-claus-4} If $\overline{x}_i$ is the first literal in $C_j$ and if $M^{\text{true}}_{i}\subseteq M$, 
    then $N^{1}_j\subseteq M$.
  \end{compactenum}
\end{claim}
\begin{proof}
  \renewcommand{\qedsymbol}{(of
      \cref{claim:clause_gadget-2})~$\diamond$}
    We show the first two statements together and the last two statements together.
    Assume that one of the conditions in the first two statements holds, that is, 
    \begin{compactenum}
      \item $x_i$ is the first literal in $C_j$ and $M^{\text{false}}_{i}\subseteq M$, 
    or
    \item $\overline{x}_i$ is the last literal in $C_j$ and $M^{\text{true}}_{i}\subseteq M$.
    \end{compactenum}
    This implies that 
    \begin{compactenum}
      \item 
    either $\{x_i,\overline{y}_i\}\in M$,
    and in the first layer the list~$D_i$ contains~$d_j$,
    or 
    \item
    $\{\overline{x}_i,\overline{y}_i\}\in M$, 
    and in the second layer the list $D'_i$ contains~$d_j$ and we are in Variant~$2$.
    \end{compactenum}
    Since $x_i$ prefers all agents from $D_i$ to $\overline{y}_i$ in the first layer
    and $\overline{x}_i$ prefers all agents from $D'_i$ to $\overline{y}_i$ in the second layer, 
    we must have that $d_j$ obtains a partner that it prefers to $x_i$ in the first layer or to $\overline{x}_i$ in Variant~$2$ of the second layer.
    In either case, $b_j$ is the only agent that fulfills the requirement, implying that $\{b_j, d_j\}\in M$.
    By looking at the preference lists of $b_j$ and $e_j$ in the first layer,
    we derive that $\{a_j, e_j\}\in M$. 
    Thus, $N^{2}_j \subseteq M$.
 
    Analogously, assume that one of the conditions in the last two statements holds, that is, 
    \begin{compactenum}
      \item $x_i$ is the last literal in $C_j$ and $M^{\text{false}}_{i}\subseteq M$, or 
      \item $\overline{x}_i$ is the first literal in $C_j$ and $M^{\text{true}}_{i}\subseteq M$.
    \end{compactenum}
    This implies that 
   \begin{compactenum}
      \item $\{x_i,\overline{y}_i\}\in M$,
    and in the second layer we have Variant~$1$ such that the list $A'_i$ contains~$a_j$, or 
    \item $\{x_i,y_i\}\in M$, 
    and in the first layer the list $A_i$ contains~$a_j$.
    \end{compactenum}
    Since $\overline{y}_i$ prefers all agents from $A'_i$ to $x_i$ in the second layer 
    and $y_i$ prefers all agents from $A_i$ to $x_i$ in the first layer,
    we must have that $a_j$ obtains a partner that it prefers to $\overline{y}_i$ in the second layer (Variant~1) or to $y_i$ in the first layer.
    In either case, $d_j$ is the only agent that fulfills the requirement, implying that $\{a_j, d_j\}\in M$.
    By the preference lists of $d_j$ and $b_j$ in the first layer,
    we further derive that $\{b_j, e_j\}\in M$. 
    Thus, $N^{1}_j \subseteq M$.
\end{proof}

Now, we are ready to show that $(X,\mathcal{C})$ admits a satisfying truth assignment if and only if there exists a \gstable{$2$} matching for the so-constructed instance.

\medskip
$\boldsymbol{(\Rightarrow)}$ For the ``only if'' direction, assume that $\sigma\colon X\to \{T,F\}$ is a satisfying truth assignment for $(X,\mathcal{C})$.
We claim that the matching~$M$ constructed as follows is all-layer globally stable.
\begin{compactenum}[(1)]
\item For each variable~$x_i \in X$ with $\sigma(x_i)=T$, let $M^{\text{true}}_{i} \subseteq M$; otherwise let $M^{\text{false}}_{i} \subseteq M$.
\item For each size-three-clause~$C_j$, identify a literal~$\ell_j$ such that $\sigma(\ell_j)$ makes $C_j$ satisfied.
If $\ell_j$ is $C_j$'s first literal, then let $N^1_j\subseteq M$.
If $\ell_j$ is the second literal, then let $N^2_j\subseteq M$.
Otherwise let $N^3_j \subseteq M$.
\item For each size-two-clause~$C_j$,  let $\{c_j, f_j\}\in M$.
Identify a literal~$\ell_j$ such that $\sigma(\ell_j)$ makes $C_j$ satisfied.
If $\ell_j$ is the first literal in $C_j$, then let $N^{1}_j\subseteq M$; otherwise let $N^{2}_j\subseteq M$.
\end{compactenum}

Towards a contradiction suppose that $M$ is not \gstable{$2$}, and let $p=\{u,w\}$ be a possible blocking pair with $u\in U$ and $w\in W$.
First, we observe that $p$ involves neither two variable agents that correspond to different variables nor two clause agents that correspond to different clauses.
Second, $p$ does not involve two variable agents that belong to the same variable since
for each layer and for each two variable agents that correspond to the same variable, 
exactly one of both is already matched to its most preferred agent.
Third, $p$ also does not involve two clause agents that belong to the same size-two-clause 
as either of such clause agents is already matched with its most preferred agent.

Next, consider the case that $u$ and $w$ are two clause agents that belong to the same size-three-clause, say~$C_j$.
If $\{u,w\}$ is blocking $M$ in the first layer,
then we know that $N^{3}_j \subseteq M$ as otherwise either $u$ or $w$ already obtains its most preferred agent.
But then $\{u,w\}$ cannot be blocking $M$ in the first layer as for each agent $w'$ that is preferred to $M(u)$ by $u$ in the first layer, we have that $w'$ prefers $M(w')$ to $u$ in the first layer.
Similarly, if $\{u,w\}$ is blocking $M$ in the second layer with Variant~$1$ (resp.\ Variant~$2$),
then we know that $N^1_j \subseteq M$ (resp.\ $N^2_j\subseteq M$) as otherwise either $u$ or $w$ already obtains its most preferred agent.
Since for each agent~$w'$ such that $u$ prefers $w'$ to $M(u)$ in the second layer 
we have that $w'$ prefers~$M(w')$ to~$u$,
it follows that $\{u,w\}$ cannot be blocking the second layer.

Now, suppose that $p$ involves a variable agent and a clause agent.
By the construction of the preference lists, we can assume that the clause agent involved in the blocking pair~$p$ is either an $a_j$ or a $d_j$ for some $j\in [m]$.
We distinguish between two cases, depending on the size of $C_j$.

\medskip
\noindent
\textbf{Case 1: $\boldsymbol{|C_j|=3}$.}
If $a_j \in p$ and $p$ is blocking the first layer, 
then by the preference list of $a_j$ in the first layer,
we have that $N^{2}_j\subseteq M$.
By the construction of the matching~$M$, we have that the truth assignment of the second literal, say~$\overline{x}_i$, in $C_j$ makes $C_j$ satisfied; note that by our convention,
the second literal is always negative.
This implies that $M^{\text{false}}_i\subseteq M$. 
Since $p$ involves $a_j$ and is blocking the first layer,
it follows that the agent~$y_i$ that corresponds to variable~$x_i$ prefers $a_i$ to $M(y_i)$ in the first layer.
This means that $M^{\text{true}}_i\subseteq M$, a contradiction.

Analogously, if $a_j \in p$ and $p$ is blocking the second layer, 
then by the preference list of $a_j$ in the second layer,
we have that the preference list of $a_j$ comes from Variant~$1$ 
and $N^{3}_j\subseteq M$.
By the construction of the matching~$M$, we have that the truth assignment of the third literal which is positive in $C_j$ (recall that Variant~$1$ was used) makes $C_j$ satisfied.
Let this literal be $x_i$.
Then, it must hold that $M^{\text{true}}_i\subseteq M$.
Since $p$ involves $a_j$ and is blocking the second layer (due to Variant $1$),
it follows that the agent~$\overline{y}_i$ that corresponds to variable~$x_i$ prefers $a_j$ to $M(\overline{y}_i)$ in the second layer.
This means that $M_i^{\text{false}}\subseteq M$, which results in a contradiction.

If $d_j\in p$ and $p$ is blocking the first layer,
then by the preference list of $d_j$ in the first layer,
we have that $N^{1}_j\subseteq M$.
By the construction of the matching~$M$, we have that the truth assignment of the first literal, say~$x_i$, in $C_j$ makes $C_j$ satisfied; note that by convention,
the first literal is always positive.
This implies that $M^{\text{true}}_i\subseteq M$.
Since $p$ involves $d_j$ and is blocking the first layer,
it follows that the agent~$x_i$ that corresponds to variable~$x_i$ prefers~$d_j$ to~$M(x_i)$ in the first layer.
By the preference list of $x_i$ in the first layer, 
it follows that $M^{\text{false}}_i\subseteq M$, which also yields a contradiction.

Finally, if $d_j\in p$ and $p$ is blocking the second layer,
then by the preference list of $d_j$ in the second layer,
we have that the preference list of $d_j$ comes from Variant~$2$ 
and $N^{3}_j\subseteq M$.
By the construction of the matching~$M$, we have that the truth assignment of the third literal which is negative in $C_j$ makes $C_j$ satisfied.
Denote this literal by $\overline{x}_i$.
Then, it follows that $M^{\text{false}}_i\subseteq M$.
Since $p$ involves $d_j$ and is blocking the second layer (due to Variant $2$),
it follows that the agent~$\overline{x}_i$ that corresponds to 
variable~$x_i$ in $C_j$ prefers~$d_j$ to~$M(\overline{x}_i)$ in the second layer.
By the preference list of $x_i$ in the first layer, 
this means that $M^{\text{true}}_i\subseteq M$, which is a contradiction.

\smallskip
\noindent
\textbf{Case 2: $\boldsymbol{|C_j|=2}$.} If $a_j \in p$, then by the preference lists of $a_j$ in any of the two layers,
we have that $N^{2}_j\subseteq M$.
Hence, the second literal in $C_j$ makes it satisfied.
We distinguish between two cases.
If the second literal in $C_j$ is positive, say~$x_i$,
then $M^{\text{true}}_i\subseteq M$ and the other involved agent in $p$ must be
either $y_i$ or $\overline{y}_i$.
Since $\overline{y}_i$ prefers its partner~$\overline{x}_i$ to $a_j$ in both layers,
it follows that $y_i$ is the other involved agent.
However, by the preference list of $y_i$ in the first layer, $A_i$ does not contain $a_j$, meaning that $y_i$ also prefers its partner~$x_i$ to $a_j$ in both layers, which is a contradiction to $p$ being a blocking pair.

If the second literal in $C_j$ is negative, say~$\overline{x}_i$,
then $M^{\text{false}}_i\subseteq M$ and our reasoning is very similar.
First we infer that the other involved agent in $p$ must be
either $y_i$ or $\overline{y}_i$.
Since $y_i$ prefers its partner~$\overline{x}_i$ to $a_j$ in both layers,
it follows that the other agent in $p$ is $\overline{y}_i$.
However, by the preference list of $\overline{y}_i$ in the second layer, $A'_i$~does not contain $a_j$, which means that $\overline{y}_i$ prefers its partner to $a_j$ in both layers. Thus, in this case we also get a contradiction.

If $d_j \in p$, then our reasoning is very similar. First, by looking at the preference lists of $d_j$ in any of the two layers,
we infer that $N^{1}_j\subseteq M$.
By the construction of the matching~$M$, we get that the first literal in $C_j$ makes it satisfied.
We consider two cases.
If this literal is positive, say~$x_i$,
then $M^{\text{true}}_i\subseteq M$ and the other involved agent in $p$ must be
either $x_i$ or $\overline{x}_i$.
Agent $x_i$ prefers its partner~$y_i$ to~$d_j$ in both layers and so it cannot be involved in the blocking pair.
Thus, $\overline{x}_i$ is the other involved agent.
However, by the preference list of $\overline{x}_i$ in the first layer, $D'_i$ does not contain $d_j$, meaning that $\overline{x}_i$ also prefers its partner~$\overline{y}_i$ to $d_j$ in both layers, a contradiction to $p$ being a blocking pair.

If the first literal in $C_j$ is negative, say~$\overline{x}_i$,
then $M^{\text{false}}_i\subseteq M$ and the other involved agent in $p$ must be
either $x_i$ or $\overline{x}_i$.
Since $\overline{x}_i$ prefers its partner~$y_i$ to $d_j$ in both layers,
it follows that $x_i$ is the other involved agent.
However, by the preference list of $x_i$ in the last layer, $D_i$ does not contain $d_j$, meaning that $x_i$ prefers its partner~$\overline{y}_i$ to $d_j$ in both layers, which is again a contradiction.


\medskip
$\boldsymbol{(\Leftarrow)}$ For the ``if'' direction, let $M$ be a \gstable{$2$} matching. We construct a truth assignment~$\sigma$ as follows.
For each variable agent~$x_i$, if $M^{\text{true}}_i\subseteq M$, then let $\sigma(x_i)=T$; otherwise by \cref{claim:variable_gadget} we have that $M_i^{\text{false}}\subseteq M$, and let $\sigma(x_i)=F$.
Suppose, towards a contradiction, that $\sigma$ is not a satisfying assignment and let $C_j$ be a clause where none of the literals is evaluated to true. 
We distinguish between two cases.

\medskip
\noindent
\textbf{Case 1: $\boldsymbol{|C_j|=3}$.} Let $x_r$, $\overline{x}_s$, and $\ell_t$ be the first, second, and the third literal in $C_j$.
Since none of these literals is evaluated to true, 
it follows that $M^{\false}_r, M^{\true}_s\subseteq M$.
By statements~$(i)$ and $(ii)$ in \cref{claim:clause_gadget}, 
we must have that $N^{3}_j\subseteq M$.
However, by the statements~$(iii)$ and $(iv)$ in \cref{claim:clause_gadget}, 
applied for $\ell_t$, we must have that either $N^{1}_j\subseteq M$ or $N^{2}_j\subseteq M$, a contradiction.

\smallskip
\noindent
\textbf{Case 2: $\boldsymbol{|C_j|=2}$.} Let the first and the second literals in $C_j$ correspond to variables~$x_i$ and $x_k$, respectively.
If $C_j$ is positive, then since $C_j$ is not satisfied, we have that $M^{\text{false}}_i,
M^{\text{false}}_k \subseteq M$.
By \cref{claim:clause_gadget-2}, we have that both $N^1_j$ and $N^2_j$ must belong to $M$, which leads to a contradiction.

Analogously, if $C_j$ is negative, then since $C_j$ is not satisfied, we have that $M^{\text{true}}_i,
M^{\text{true}}_k \subseteq M$.
By \cref{claim:clause_gadget-2}, we have that both $N^1_j$ and $N^2_j$ must belong to $M$, a contradiction.

Altogether, we showed that the constructed matching is globally stable for two layers. This concludes the proof of \cref{thm:global-hard-alpha=ell=constant}.
\end{proof}

Since \gstability{$\ell$} equals \abstability{$\ell$} (\cref{prop:g_stability_equiv_pair_stability}), by \cref{thm:global-hard-alpha=ell=constant} we obtain the following corollary for the pair stability.
\begin{corollary}
  \label{cor:pair-hard-alpha=ell=constant}
  \BSM is NP-hard even if $\alpha=\ell=2$.
\end{corollary}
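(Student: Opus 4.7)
The plan is essentially to observe that this corollary follows immediately by chaining two already-established results: \cref{thm:global-hard-alpha=ell=constant} and \cref{prop:g_stability_equiv_pair_stability}. Since the corollary is about the special case $\alpha = \ell = 2$, we are in the regime where global stability and pair stability coincide, so the hardness transfers from one concept to the other without any new gadget construction.

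Concretely, I would argue as follows. First, I would recall that \cref{prop:g_stability_equiv_pair_stability} states that whenever $\alpha = \ell$, a matching is \gstable{$\alpha$} if and only if it is \abstable{$\alpha$}. In particular, for $\alpha = \ell = 2$, the set of \gstable{$2$} matchings and the set of \abstable{$2$} matchings of any instance with two layers are literally identical. Therefore an instance admits a \gstable{$2$} matching if and only if it admits a \abstable{$2$} matching.

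Next, I would invoke \cref{thm:global-hard-alpha=ell=constant}, which shows that deciding the existence of a \gstable{$2$} matching is NP-hard for two-layer instances. Interpreting the reduction from that theorem as a reduction into the decision version of \BSM (the instance produced is identical; only the stability concept label changes), the equivalence above shows that the same reduction gives a polynomial-time many-one reduction from 3-SAT to the decision version of \BSM with $\alpha = \ell = 2$. Hence \BSM is NP-hard already for $\alpha = \ell = 2$.

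There is really no obstacle here; the corollary is a one-line consequence of the two preceding results, and the main thing to get right is to state the equivalence clearly and point to the relevant reduction rather than reprove anything. The only minor care needed is to note that the reduction in \cref{thm:global-hard-alpha=ell=constant} is polynomial-time computable and outputs a valid instance of \BSM as well (which is immediate, since the two problems share the same input format).
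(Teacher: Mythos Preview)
Your proposal is correct and matches the paper's own argument exactly: the paper derives the corollary in one line by combining \cref{prop:g_stability_equiv_pair_stability} (equivalence of global and pair stability when $\alpha=\ell$) with \cref{thm:global-hard-alpha=ell=constant}. There is nothing to add.
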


By adding to the profile constructed in the proof of \cref{thm:global-hard-alpha=ell=constant} an arbitrary number of layers with preferences that are stable anyway, we can deduce hardness for arbitrary $\alpha=\ell\ge 2$.

\begin{proposition}\label{prop:global=pair-hard-alpha=ell}
  For each $\alpha=\ell\ge 2$, both \GSM and \BSM are NP-hard.
\end{proposition}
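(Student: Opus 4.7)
The plan is a padding reduction that lifts the hardness of \cref{thm:global-hard-alpha=ell=constant} from $\alpha=\ell=2$ to any $\alpha=\ell\ge 2$. Given an instance $I=(U,W,P_1,P_2)$ produced by the reduction in \cref{thm:global-hard-alpha=ell=constant}, I would build, for each desired $\ell\ge 2$, an instance $I'=(U,W,P_1,P_2,P_3,\dots,P_\ell)$ on the same set of agents by simply defining $P_i\coloneqq P_1$ for every $i\in\{3,\dots,\ell\}$ (any fixed choice of an existing layer would do; duplicating $P_1$ is the cleanest). This construction is obviously polynomial.

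The equivalence is then immediate from the definition of all-layers global stability. For the forward direction, if $M$ is \gstable{$2$} in $I$ then $M$ is stable in each of $P_1$ and $P_2$; since every added layer in $I'$ is a copy of $P_1$, $M$ is stable in every layer of $I'$, hence $M$ is \gstable{$\ell$}. Conversely, if $M$ is \gstable{$\ell$} in $I'$, then $M$ is stable in each of the $\ell$ layers of $I'$, and in particular in the first two, giving an \gstable{$2$} matching for $I$. Thus $I$ is a yes-instance of the $\alpha=\ell=2$ problem iff $I'$ is a yes-instance of the $\alpha=\ell$ problem, and NP-hardness of \GSM for every $\alpha=\ell\ge 2$ follows from \cref{thm:global-hard-alpha=ell=constant}.

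The statement for \BSM requires no separate argument: by \cref{prop:g_stability_equiv_pair_stability}, a matching is \gstable{$\ell$} iff it is \abstable{$\ell$}, so the same reduction witnesses NP-hardness of \BSM for every $\alpha=\ell\ge 2$. There is no real obstacle in this proof, since the only correctness check, namely that duplicating an existing layer preserves both yes- and no-instances, falls out directly from the definition of all-layers global stability; the slight subtlety worth stating explicitly is that we rely crucially on $\alpha=\ell$, so that \emph{all} layers of $I'$ (including the padded copies) must be stable simultaneously, which forces stability in $P_1$ and $P_2$ in particular.
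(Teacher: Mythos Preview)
Your proof is correct and follows essentially the same padding strategy as the paper: augment the two-layer instance from \cref{thm:global-hard-alpha=ell=constant} with $\ell-2$ additional layers so that all-layers stability in the padded instance coincides with two-layer stability in the original. The only difference is cosmetic: the paper pads with freshly designed ``neutral'' layers (keeping the variable- and clause-agent blocks but dropping the cross-links $A_i,D_i,A'_i,D'_i$) and then asserts that any matching stable in the first two layers is automatically stable in these neutral layers, whereas you simply duplicate $P_1$. Your choice makes the equivalence entirely tautological---stability in $P_1$ trivially implies stability in every copy of $P_1$---so no separate verification of the padded layers is needed. Both arguments invoke \cref{prop:g_stability_equiv_pair_stability} identically to transfer the result to \BSM.
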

\begin{proof}
  We add to the profile constructed in the proof of \cref{thm:global-hard-alpha=ell=constant} $\ell-2$ layers with preferences of the following form:
      \begin{alignat*}{4}
 \text{Layers } (3)\text{--}(\ell),  &  \forall i \in [n]\colon \quad
       &  x_i\colon& y_i \pref \overline{y}_i \pref \cdots, &\quad\quad  y_i\colon& \overline{x}_i \pref x_i \pref \cdots,\\
  &  & \overline{x}_i \colon& \overline{y}_i \pref y_i \pref \cdots,  &  \overline{y}_i\colon& x_i\pref \overline{x}_i  \pref \cdots \text{.}\\
 &   \forall j \in [m]\colon\quad
    &  a_j\colon& d_j\pref e_j  \pref f_j\pref \cdots, & \quad d_j\colon & b_j \pref c_j \pref a_j \pref \cdots,\\
&    & b_j\colon& e_j\pref f_j \pref d_j \pref \cdots, & \quad e_j\colon & c_j \pref a_j \pref b_j \pref \cdots,\\
    && c_j\colon& f_j\pref  d_j \pref e_j \pref \cdots, & \quad f_j\colon & a_j \pref b_j \pref c_j\pref \cdots\text{.}
  \end{alignat*}
  It is straightforward that a matching is \gstable{$\ell$} if and only if  it is \gstable{$2$} for the first two layers. 
\end{proof}

\section{Multi-Layer Stable Marriage with $\alpha < \ell$}
\label{sec:notall-stab}

In this section we show that for each of the three concepts that we introduced in \cref{sec:defi} the problem of computing a multi-layer stable matching is computationally hard as soon as $2\leq\alpha < \ell$.

\subsection{NP-hardness for \gstability{$\alpha$}}
To find a matching~$M$ that is \gstable{$\alpha$}, even if $\alpha< \ell$,
the main difficulty is not just to determine $\alpha$~layers where $M$ should be stable.
In fact, we sometimes need to find a matching that is stable in some specific layers.
This requirement allows us to adapt the construction in the proof of \cref{thm:global-hard-alpha=ell=constant}
to show hardness for deciding \gstability{$\alpha$} for the case when $2\le \alpha < \ell$.

\begin{proposition}\label{prop:global-alpha_ge_2}
  For each fixed number~$\alpha$ with $\alpha\ge 2$, \GSM is NP-hard. 
\end{proposition}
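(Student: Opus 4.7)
The plan is to reduce from \cref{prop:global=pair-hard-alpha=ell}, which establishes NP-hardness of \GSM for $\alpha=\ell$ with any fixed $\alpha\ge 2$. Given an input instance $I$ with $\alpha$ layers and threshold $\alpha$, I construct an instance $I'$ with the same threshold $\alpha$ but with $\ell>\alpha$ layers, such that $I$ is a yes-instance if and only if $I'$ is a yes-instance.

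The construction adds, for each additional layer $j\in\{\alpha+1,\dots,\ell\}$, a ``conflict gadget'' consisting of four fresh agents $x_1^{(j)},x_2^{(j)}\in U'$ and $y_1^{(j)},y_2^{(j)}\in W'$. The first $\alpha$ layers of $I'$ restricted to the original agents are copies of $P_1,\dots,P_\alpha$; the additional layers restricted to the originals can be arbitrary (for instance, equal to $P_1$). Every original agent places all gadget agents at the bottom of its preference list in every layer, and vice versa. Each gadget~$j$ is ``top-loaded'': its four agents rank the other three of the same gadget above all originals. The internal preferences of gadget $j$ are set so that, in every layer~$k\ne j$, the unique stable matching on the four gadget agents is $\{x_1^{(j)},y_1^{(j)}\},\{x_2^{(j)},y_2^{(j)}\}$ (use $x_r^{(j)}\colon y_r^{(j)}\succ y_{3-r}^{(j)}$ and $y_r^{(j)}\colon x_r^{(j)}\succ x_{3-r}^{(j)}$), while in layer~$j$ the preferences are flipped so that the unique stable matching becomes $\{x_1^{(j)},y_2^{(j)}\},\{x_2^{(j)},y_1^{(j)}\}$.

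The key observation is that the top-loading of each gadget rules out any blocking pair between an original agent and a gadget agent in any matching stable in any layer: any such pair would be dominated either by the gadget agent's current gadget partner or by the original agent's current original partner. Hence in every matching stable in some layer~$k$ of $I'$, gadget~$j$ must be matched according to the unique stable matching of its four agents in that layer. Since the ``flipped'' matching in layer~$j$ and the ``identity'' matching used in all other layers are disjoint, no matching can be stable in both layer~$j$ and any other layer. It follows that an \gstable{$\alpha$} matching of $I'$ cannot make use of any additional layer (as $\alpha\ge 2$), so its $\alpha$ stable layers are precisely the first $\alpha$ layers, and its restriction to $U\cup W$ is \gstable{$\alpha$} in $I$. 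For the converse, every \gstable{$\alpha$} matching of $I$ can be extended by pairing each gadget~$j$ as $\{x_1^{(j)},y_1^{(j)}\},\{x_2^{(j)},y_2^{(j)}\}$, which by the above remains stable in all $\alpha$ original layers of $I'$.

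The main obstacle is verifying that the gadgets truly act in isolation: one needs to check that in every layer, the top-loading yields the claimed unique stable matching on the four gadget agents, and simultaneously rules out \emph{any} blocking pair mixing an original agent with a gadget agent. Once this is shown, the reduction is polynomial for every fixed $\alpha\ge 2$ (and in fact polynomial in $\ell$ as well, which is useful for covering the full range $\alpha<\ell$ indicated in \cref{table:results}).
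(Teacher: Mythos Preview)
Your proposal is correct and follows essentially the same approach as the paper: add dummy agents whose internal stable matchings differ across layers so that any \gstable{$\alpha$} matching is forced to be stable in precisely the first $\alpha$ layers, which carry the hard instance. The only cosmetic difference is that the paper uses a single block of $2(\ell-\alpha+1)$ dummy agents whose unique stable matching in layer $\alpha+i$ is a cyclic shift $M_i$ (with $M_0$ shared by layers $1,\dots,\alpha$), whereas you use $\ell-\alpha$ independent four-agent gadgets, one per extra layer; both constructions enforce the same conclusion for the same reason. One small point to make explicit in your write-up: when you say each gadget is ``top-loaded,'' you should state that gadget~$j$'s agents rank their own gadget's partners above \emph{all} other agents (not just the originals), so that no cross-gadget blocking pair can arise either.
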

\begin{proof}
 To prove the NP-hardness, we adapt the reduction in the proof of \cref{thm:global-hard-alpha=ell=constant} which shows that deciding \gstability{$\alpha$} for $\alpha=\ell=2$ is NP-hard.
 Let $\Pot$ be the constructed two-layer instance in the proof of \cref{thm:global-hard-alpha=ell=constant}.
 Besides the original agents from $\Pot$, we introduce two sets~$U$ and $W$ of dummy agents with $|U|=|W|=2\cdot (\ell - \alpha + 1)$, where $U=\{u_1, u_2, \ldots, u_{\ell-\alpha+1}\}$ and $W=\{w_1, w_2, \ldots, w_{\ell-\alpha+1}\}$.
  The idea of introducing such dummy agents is to make sure that each \gstable{$\alpha$} matching must include all pairs~$\{u_j, w_j\}$, $j\in [\ell-\alpha+1]$.
  However, this is the case only when such matching is stable in the two layers constructed in the NP-hardness proof of \cref{thm:global-hard-alpha=ell=constant}; we denote these two layers as layers~$(1)$ and $(2)$.
In the following, we use ``$\cdots$'' to denote an arbitrary order of the unmentioned agents.

  \paragraph{Preferences of the original agents.} The preferences of the original agents in the first two layers are the same as in $\Pot$ in the proof of
\cref{thm:global-hard-alpha=ell=constant}.
  For each other layer, the preferences of the original agents are as follows. 
   \begin{alignat*}{4}
     \text{Layers } (3)\text{--}(\ell),  &  \forall i \in [n]\colon \quad
       &  x_i\colon& y_i \pref \overline{y}_i \pref \cdots, &\quad\quad  y_i\colon& \overline{x}_i \pref x_i \pref \cdots,\\
  &  & \overline{x}_i \colon& \overline{y}_i \pref y_i \pref \cdots,  &  \overline{y}_i\colon& x_i\pref \overline{x}_i  \pref \cdots \text{.}\\
 &   \forall j \in [m]\colon\quad
    &  a_j\colon& d_j\pref e_j  \pref f_j\pref \cdots, & \quad d_j\colon & b_j \pref c_j \pref a_j \pref \cdots,\\
&    & b_j\colon& e_j\pref f_j \pref d_j \pref \cdots, & \quad e_j\colon & c_j \pref a_j \pref b_j \pref \cdots,\\
    && c_j\colon& f_j\pref  d_j \pref e_j \pref \cdots, & \quad f_j\colon & a_j \pref b_j \pref c_j\pref \cdots\text{.}
  \end{alignat*}  
  \paragraph{Preferences of the dummy agents.}
  The preferences of the dummy agents are as follows
  ; let $\hat{\ell}=\ell-\alpha+1$:
  \begin{alignat*}{4}
     \text{Layers } (1)\text{--}(\alpha),~~  &  \forall j \in [\hat{\ell}]\colon \quad
       &  u_j\colon& w_j \pref \cdots, &\quad\quad  w_j\colon& u_j \pref \cdots \text{,}\\
       \text{Layer } (i+\alpha), 1 \le i \le \hat{\ell}-1,~~  &  \forall j \in [\hat{\ell}]\colon \quad
        &  u_j\colon& w_{(j \bmod \hat{\ell})+i} \pref \cdots, &\quad\quad  w_j\colon& u_{(j-1 \bmod \hat{\ell}) +i} \pref \cdots \text{.}
  \end{alignat*}
  
Observe that each dummy agent obtains a different partner in different layers
with indices higher than~$\alpha$.
  More precisely, for each layer~$(i+\alpha)$ with $1\le \alpha \le \hat{\ell}-1$,
  the only stable matching in this layer must include $M_i=\{\{u_{j}, w_{(j\bmod \hat{\ell})+1} \} \mid 1\le j \le \hat{\ell}\}\}$ since $u_j$ and $w_{(j\bmod \hat{\ell})+1}$ are each other's most preferred agent.
  Moreover, by the same reasoning, each layer with index at most $\alpha$ admits exactly the same stable matching regarding the dummy agents which is different from any layer with index higher than $\alpha$, namely $M_0=\{\{u_j, w_j\} \mid 1\le j \le \hat{\ell}\}$.
  For each two distinct values~$i,j\in \{0,1,\ldots, \hat{\ell}-1\}$, however, 
  we have that $M_i\cap M_j =\emptyset$.
  This means that each \gstable{$\alpha$} matching must include $M_0$ and must be stable in the first $\alpha$ layers, including the first two layers.
  
  Now, it is straightforward to see that a matching~$M$ is \gstable{$2$} for~$\Pot$ if and only if $M\cup M_0$ is \gstable{$\alpha$} for our new instance.
\end{proof}

We remark that our proof for \cref{prop:global-alpha_ge_2} also implies hardness for $\alpha=\ell$ for arbitrary $\ell\ge 2$. 

\subsection{NP-hardness for \aistability{$\alpha$}}
For \aistability{$\alpha$}, we also obtain a hardness result by reducing from the NP-hard \textsc{Perfect SMTI} problem, the problem of finding a perfect \myemph{SMTI-stable} matching with (possibly) incomplete preference lists and ties~\cite{IwMiMoMa1999,MaIrIwMiMo2002} which is defined as follows.
A preference list is \myemph{incomplete} if not all agents from one side are considered acceptable to an agent from the other side.
A preference list has a \myemph{tie} if there are two agents in the list which are considered to be equally good.
As a result, a preference list of an agent~$u$ from one side can be considered as a weak (\emph{i.e.}\ transitive and complete) order~$\succeq_u$ on a subset of the agents on the other side.   
We use $\succ_u$ and $\sim_u$ to denote the asymmetric and symmetric part of the preference list, respectively.
Equivalently, two agents~$x$ and $y$ are said to be tied by $u$ if $x\succeq_u y$ and $y\succeq_u x$, denoted as $x\sim_u y$.
Formally, we say that a matching $M$ for a   \textsc{Perfect SMTI} with two disjoints sets~$U$ and $W$ of agents is \myemph{SMTI-stable} if there are no SMTI-blocking pairs for $M$. A pair $\{u, w\}$ is \myemph{SMTI-blocking} $M$ if all of the following three conditions are satisfied:
\begin{inparaenum}[(i)]
\item $u$ and $w$ appear in the preference lists of each other, 
\item $w \succ_u M(u)$ or $u$ is not matched to any agent from $W$, and
\item $u \succ_w M(w)$ or $w$ is not matched to any agent from $U$.
\end{inparaenum}

The reduction is based on the following ideas:
In a \textsc{Perfect SMTI} instance~$I$ the preference list of an agent~$z$ may have ties.
To encode ties, we first ``linearize'' the preference list of $z$ in $I$ to obtain two linear preference lists such that the resulting lists restricted to the tied agents are reverse to each other.
Then, we let half of the $\ell$ layers have one of the preference lists and let the remaining half to have the other list.
Since $\alpha < \nicefrac{\ell}{2}$, it is always possible to find half layers which fulfill our \aistability{$\alpha$} constraint.

In~$I$, two agents, say $x$ and $y$, may not be acceptable to each other. 
To encode this, we introduce to the source instance $\ell$~pairs of dummy agents with $\ell$ layers of preferences that preclude $x$ and $y$ from being matched together. 
However, to make sure that an agent~$x$ is not matched to any dummy agent, we have to require that $\ell \ge 4$.

\begin{theorem}\label{thm:hardness_individual_stable}
  For each fixed number~$\ell$ of layers with $\ell\ge 4$ and for each fixed value~$\alpha$ with $2\le \alpha \le \lfloor \nicefrac{\ell}{2} \rfloor$, 
  \ISM is NP-hard, even if on one side, the preference list of each agent is the same in all layers.
\end{theorem}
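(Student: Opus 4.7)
The plan is to reduce from \textsc{Perfect SMTI} --- find a perfect matching stable under the classical SMTI notion with ties and incomplete lists --- which is NP-hard even when ties appear on only one side. Given such an instance $I$ with sides $U,W$ where only the $W$-agents may have ties, we build a multi-layer instance $I'$ on the same vertex sets augmented by dummy agents that encode the incompleteness of $I$. Since every $u\in U$ already has a strict SMTI preference, we copy $u$'s list identically into all $\ell$ layers, which yields the promised one-side layer-uniformity; only the $W$-side lists will differ between layers.

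The construction has two ingredients. For each $w\in W$, pick two strict linear extensions $L_w^{+}$ and $L_w^{-}$ of the weak order $\succeq_w$ that coincide on every strict comparison but reverse the internal order of every tie-class, and use $L_w^{+}$ in $\lceil \ell/2\rceil$ of the layers and $L_w^{-}$ in the remaining $\lfloor \ell/2\rfloor$. If $a\sim_w b$ in $I$, then from $w$'s side each of $a,b$ dominates the other in at most $\lceil \ell/2\rceil$ layers, so neither is $(\ell-\alpha+1)$-dominating the other because $\alpha\le\lfloor \ell/2\rfloor$ gives $\ell-\alpha+1\ge\lceil \ell/2\rceil+1$. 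For each mutually unacceptable pair $\{u,w\}$ in $I$, attach a gadget of $\ell$ dummy pairs whose two members rank their partner on top in every layer; the dummies are inserted between the acceptable and unacceptable portions of $u$'s and $w$'s lists. The top-partner condition forces each dummy pair to be matched internally in any $\alpha$-individually stable matching (else \cref{prop:individual-stable-alternative-def} immediately produces a violation), while the position of the dummies pushes every unacceptable counterpart so far down each real agent's list that it cannot $(\ell-\alpha+1)$-dominate any acceptable or dummy partner. The inequality $\ell\ge 4$, implied by $2\le\alpha\le\lfloor \ell/2\rfloor$, supplies exactly the slack needed to sustain this dominance.

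For correctness, in the forward direction, given a perfect SMTI-stable $M$ in $I$, match all dummy pairs internally to obtain $M'$ in $I'$. For any unmatched pair $\{u,w\}\in \pairset\setminus M'$ of real acceptable agents, the SMTI-stability of $M$ yields $M(u)\succeq_u w$ or $M(w)\succeq_w u$: the first --- since $u$ is strict --- is in fact strict and witnessed uniformly in all $\ell$ layers by condition~(1) of \cref{def:individual}; the second is either strict (and witnessed in all $\ell$ layers) or a tie $M(w)\sim_w u$, in which case at least $\lfloor \ell/2\rfloor \ge \alpha$ layers witness condition~(2). Pairs involving a dummy or an unacceptable counterpart are handled by the gadget's top-preference and low-placement design. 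Conversely, any $\alpha$-individually stable $M'$ must match each dummy pair internally (same $\ell$-dominance argument), and the gadget further forbids pairing any real $u$ to an unacceptable $w$; thus $M:=M'\cap\pairset$ is a perfect matching of real agents pairing only acceptable partners, and any SMTI-blocking pair $\{u,w\}$ in $I$ would strictly dominate both $\{u,M(u)\}$ and $\{w,M(w)\}$ in every single layer of $I'$, contradicting \cref{prop:individual-stable-alternative-def}.

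The main obstacle is the dummy-gadget design: it has to (i) force every dummy pair to stay matched together under the weaker $\alpha$-individual stability, not just classical stability; (ii) prevent any real agent from being matched to an unacceptable counterpart in any $\alpha$-individually stable matching; and (iii) avoid creating spurious $(\ell-\alpha+1)$-dominating pairs between dummies and real agents, all while preserving layer-uniformity of the $U$-side. Balancing these requirements against the $(\ell-\alpha+1)$-dominance threshold on both sides of each potential blocking pair is what drives the $\lceil \ell/2\rceil$-versus-$\lfloor \ell/2\rfloor$ split of layers and the lower bound $\ell\ge 4$.
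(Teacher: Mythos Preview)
Your overall plan---reduce from \textsc{Perfect SMTI}, linearize ties by splitting the layers roughly in half, and simulate incompleteness via dummy agents---is exactly the paper's route, and your forward-direction analysis for real acceptable pairs is correct. The genuine gap is in the dummy gadget.

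If each dummy ranks its dummy partner first in every layer, then yes, property~(i) follows: any dummy not matched to its partner yields an $\ell$-dominating pair on both sides. But this same design kills property~(ii). Once all dummies are matched internally, every dummy already has its top choice in all layers, so no dummy ever prefers a real agent to its current partner. Hence if a real $u$ is matched to an unacceptable real $w$, the pair $\{u,d'\}$ for any $W$-side dummy $d'$ is never $(\ell-\alpha+1)$-dominating on the $d'$-side, and nothing in your gadget blocks the match. Concretely, take an SMTI instance where some agent $u_2$ has a single acceptable partner $w_1$ who strictly prefers (and is matched to) another agent $u_1$, while $u_2$'s only remaining option is a mutually unacceptable $w_2$ whose acceptable partners are likewise all happier elsewhere. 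Then the matching that puts every dummy with its partner, every other real agent with its favourite, and $u_2$ with $w_2$ is $\alpha$-layer individually stable in your $I'$, yet $I$ has no perfect SMTI-stable matching. Your sentence ``the gadget further forbids pairing any real $u$ to an unacceptable $w$'' is therefore unjustified.

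The paper's fix is to have the dummies rank the \emph{real} agent first, not their dummy partner: $r_{i,j}$ puts $u_i$ on top, then $p_{i,j}$. Now if $u_i$ is matched below its dummies, every $r_{i,j}$ wants $u_i$ and $u_i$ wants every $r_{i,j}$, giving an $\ell$-dominating block. The new difficulty is preventing $u_i$ from being matched \emph{to} some $r_{i,j}$; this is where the $\ell\ge 4$ is actually used. The paper rotates the order of $r_{i,1},\ldots,r_{i,\ell}$ cyclically across the $\ell$ layers in $u_i$'s list (this rotation lives on the side that already varies, so single-layeredness of the other side is preserved). If $u_i$ is matched to $r_{i,j}$, the pair $\{u_i,r_{i,(j+\ell-2\bmod\ell)+1}\}$ dominates on both sides in $\ell-1$ layers, and $\ell-1\ge \ell-\alpha+1$ because $\alpha\ge 2$. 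Your proposal never introduces this rotation, and without it the reduction is incorrect.
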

\begin{proof}
Assume that  $\ell \geq 4$ and that $2\le \alpha \leq \lfloor \nicefrac{\ell}{2} \rfloor$.
We give a polynomial-time reduction from the NP-hard \textsc{Perfect SMTI} problem~\cite{IwMiMoMa1999,MaIrIwMiMo2002}.

Let $I$ be an instance of \textsc{Perfect SMTI}. In $I$ we are given two disjoint sets of agents, $U$ and $W$, with $|U| = |W| = n$; each agent $u \in U$ is endowed with a weak order $\succeq_u$ on a subset of~$W$.
By the SMTI-stability, an agent~$u$ prefers not to be assigned to any agent rather than to be assigned to an agent outside of its preference list. 

We assume that ties occur in the preferences of the agents from the side $U$ only, that there is at most one tie per list, and each tie is of length two as this variant remains NP-hard~\cite[Theorem 2.2]{MaIrIwMiMo2002}.

From $I$ we construct an instance $I'$ of the problem of finding an \aistable{$\alpha$} matching in the following way. We copy the sets of agents $U$ and $W$; further, we introduce a set of $2\ell\cdot n$ dummy agents $P \cup R$ with $|P| = |R| = \ell \cdot n$. We denote the elements in these sets as: $P = \{p_{i,j}\mid 1\le i \le n \wedge 1\le j \le \ell\}$
and $R = \{r_{i, j} \mid1\le i \le n \wedge 1\le j \le \ell \}$. One side of the bipartite ``acceptability graph'' will consist of the agents from $U \cup P$, and the other side of the agents from $W \cup R$. We construct the preference orders of the agents as follows:
\begin{description}
\item[Agents from $\boldsymbol{U}$.]
Consider an agent $u \in U$, and let $L_u$ denote its preference list in~$I$. In~$I'$ we construct the preference list of $u$ from~$L_u$ as follows. We iterate over~$L_u$ starting from the top position. If agent~$u$ prefers~$x$ over~$y$, then we assume that $u$ also prefers~$x$ over~$y$ in all layers in~$I'$. If $x$ and~$y$ are tied in~$L_u$, then we assume that $u$ prefers~$x$ over~$y$ in the first $\lceil \nicefrac{\ell}{2} \rceil$ layers and $y$ over~$x$ in the remaining $\lfloor \nicefrac{\ell}{2} \rfloor$ layers (or the other way around). The remaining parts of the preference orders of $u$ are constructed as follows: first, let us assume that $u = u_i \in U$. Right after all agents from~$L_u$, agent~$u$ puts in all layers $r_{i, 1}, r_{i, 2}, \ldots, r_{i, \ell}$, 
respectively in the following orders:
\begin{inparaenum}[(1)]
\item $r_{i, 1} \succ r_{i, 2} \succ \ldots \succ r_{i, \ell}$,
\item $r_{i, 2} \succ r_{i, 3} \succ \ldots \succ r_{i, \ell} \succ r_{i,1}$, and so on, until   
\item[($\ell$)] $r_{i, \ell} \succ r_{i, 1} \succ r_{i, 2} \succ \ldots \succ r_{i, \ell-1}$.
\end{inparaenum}
Next, $u$ puts all the remaining agents in an arbitrary order. For example, if $\ell = 5$ and $L_u$ is equal to $w_1 \succ w_4 \succ w_3 \sim w_5$, then the preference lists of~$u$ in the five layers will be as follows, where ``$\cdots$'' denotes some arbitrary but fixed order of the remaining agents.
\begin{quote}
$\text{Layer } (1)\colon \text{agent } u_i \colon w_1 \succ w_4 \succ w_3 \succ w_5 \succ  r_{i, 1} \succ r_{i, 2} \succ r_{i, 3} \succ r_{i, 4} \succ r_{i, 5} \succ \cdots,$ \\
$\text{Layer } (2)\colon \text{agent } u_i \colon w_1 \succ w_4 \succ w_3 \succ w_5 \succ 
                     r_{i, 2} \succ r_{i, 3} \succ r_{i, 4} \succ r_{i, 5}  \succ r_{i, 1} \succ \cdots\text{,}$\\
$\text{Layer } (3)\colon \text{agent } u_i \colon w_1 \succ w_4 \succ w_3 \succ w_5 \succ 
                     r_{i, 3} \succ r_{i, 4}  \succ r_{i, 5}\succ r_{i, 1} \succ r_{i, 2} \succ \cdots,$ \\
$\text{Layer } (4)\colon \text{agent } u_i \colon w_1 \succ w_4 \succ w_5 \succ w_3 \succ 
                     r_{i, 4} \succ r_{r, 5} \succ r_{i, 1} \succ r_{i, 2} \succ r_{i, 3} \succ \cdots\text{,}$\\
$\text{Layer } (5)\colon \text{agent } u_i \colon w_1 \succ w_4 \succ w_5 \succ w_3 \succ 
                     r_{r, 5} \succ r_{i, 1} \succ r_{i, 2} \succ r_{i, 3} \succ   r_{i, 4} \succ \cdots\text{.}$
\end{quote}
Observe that in the first three layers $w_3$ is preferred to $w_5$ and in the remaining layers $w_5$ is preferred to $w_3$.
\item[Agents from $\boldsymbol{W}$.]
For each agent $w_i$ from $W$, we recall that the preferences of $w_i$ in $I$ do not have ties (that was one of the assumptions in the problem we reduce from).
Let $L_{w_i}$ denote the preference list of $w_i$.
The preferences of $w_i$ in $I'$ are the same in all layers, where the second~``$\cdots$'' denotes some arbitrary but fixed order of the remaining agents. 
\begin{quote}
$\text{Layers } (1)\text{-}(\ell)\colon \text{agent } w_i\colon L_{w_i} \succ  p_{i, 1} \succ p_{i, 2} \succ \cdots \succ p_{i, \ell} \succ \cdots\text{.}$
\end{quote}

   
\item[Agents from $\boldsymbol{P \cup R}$.]
Consider $p_{i, j} \in P$: this agent ranks $w_i$ first, next $r_{i, j}$, and next all the remaining agents in some arbitrary order:
\begin{quote}
$\text{Layers } (1)\text{-}(\ell)\colon \text{agent } p_{i,j}\colon w_i \succ r_{i, j} \succ \cdots\text{.}$
\end{quote}
The preferences of an agent from~$r_{i, j} \in R$ are constructed analogously:
\begin{quote}
$\text{Layers } (1)\text{-}(\ell)\colon \text{agent } r_{i,j}\colon u_i \succ p_{i, j} \succ \cdots\text{.}$
\end{quote}
\end{description}

This completes the description of the construction of instance~$I'$.
Obviously, the preferences of each agent from $W\cup R$ are the same in all layers.
Now, we will show that there exists a perfect SMTI-stable matching in $I$ if and only if there exists an \aistable{$\alpha$} matching in $I'$.

For the ``only if'' direction, assume that the instance $I$ has a perfect SMTI-stable matching~$M$.
We claim that $M'=M\cup \big\{\{r_{i,j},p_{i,j}\} \mid 1\le i \le n\wedge 1\le j\le \ell\big\}$ is \aistable{$\alpha$} for $I'$.
First, $M'$ is a perfect matching for $I'$ as $M$ is a perfect matching for $I$.
Second, observe that no two agents from $R\cup P$ are blocking $M'$ as 
for each agent~$a$ in $R\cup P$ it prefers its partner~$M'(a)$ to every other agent in $R\cup P$ in all layers.
Third, no agent from $U\cup W$ can form a blocking pair with an agent from $R\cup P$ for the matching~$M'$
as each agent $a$ from $U\cup W$ prefers its partner~$M'(a)=M(a)\in L_a$ to every other agent in $R\cup P$ in all layers.
Now, consider two arbitrary agents~$u$ and $w$ with $u\in U$ and $w\in W$ such that $\{u,w\}$ is unmatched.
We show that we can always find a set $S$ of $\lfloor \ell \rfloor\ge \alpha$ layers such that $M'(u)\succ_u w$ in each layer from $S$ or $M'(w)\succ_w u$ in each layer from $S$.
We can assume that $u$ and $w$ are acceptable to each other in $I$ as otherwise both $u$ and $w$ prefer to be with their respective partner~$M'(u)$ and $M'(w)$ rather than with each other in all layers.
Let us consider the following cases; note that the preference list of $w$ does not have ties and remains the same in all layers. 
\begin{description}
\item[Case 1: $\boldsymbol{u \pref_w M'(w)=M(w)}$.] By the stability of $M$ we have that $M'(u)=M(u) \succeq_u w$ in instance~$I$.
For the case that $M'(u)$ and $w$ tied by $u$ we can identify $\lfloor \nicefrac{\ell}{2} \rfloor$ layers, either the first $\lfloor \nicefrac{\ell}{2} \rfloor$ layers or the last $\lfloor \nicefrac{\ell}{2} \rfloor$ layers where $u$ prefers $M'(u)$ to $w$.
For the case that $M'(u)\succ_u w$ we have that $u$ prefers $M'(u)$ to $w$ in all layers.
\item[Case 2: $\boldsymbol{M'(w) \pref_w u}$.] In this case we know that $w$ prefers $M'(w)$ to $u$ in all layers.
\end{description}


For the ``if'' direction, assume that $M'$ is an \aistable{$\alpha$} matching for $I'$.
First, observe that no agent $u_i \in U$ can be matched with an agent that it ranks lower than any agent from $\{r_{i, j} \mid 1\le j \le \ell\}$ in any layer. 
Indeed, for such matching each pair $\{u_i, r_{i, j}\}$, with $j \in [\ell]$, would be blocking~$M'$ in all layers. 
Similarly, $u_i$ cannot be matched with $r_{i, j}$ for $j \in [\ell]$ as the pair~$\{u_i, r_{i,(j+\ell-2\mod \ell)+1}\}$ would be blocking $M'$ in exactly $\ell-1$ layers, namely those layers other than the $j^{\text{th}}$ layer, which are more than $\ell-\alpha$ layers.
For instance, if $u_i$ was matched with~$r_{i, 1}$, then $\{u_i, r_{i, \ell}\}$ would be blocking the~$2^{\text{nd}}, 3^{\text{rd}},\ldots, \ell^{\text{th}}$ layers; if $u_i$ was matched with~$r_{i, 2}$, then $(u_i, r_{i, 1})$ would be blocking the~$1^{\text{st}}, 3^{\text{rd}}, 4^{\text{th}},\ldots, \ell^{\text{th}}$ layers, etc. Thus, each agent~$u$ from $U$ must be matched with someone from $L_u$, where $L_u$ is the preference list of $u$ in~$I$. 
Consequently, no agent in $W$ is matched with an agent in $P$ as otherwise, by the pigeonhole principle, some agent in $U$ must be matched with an agent in $R$ which is not possible by our reasoning above.
Also, by a similar reasoning we deduce that each agent~$w$ from $W$ must be matched with someone from $L_w$, where $L_w$ is the preference list of $w$ in~$I$.
Now, we show that $M=\{\{u,w\}\in M' \mid u,w\in U\cup W\}$ is a perfect SMTI-stable matching for $I$.
Since $M'$ is a perfect matching in $I'$, by the reasoning above, it follows that $M$ is a perfect matching in $I$.
Suppose, for the sake of contradiction, that there is an unmatched pair~$\{x,y\}\notin M$ that is SMTI-blocking $M$.
This means that $x$ and $y$ are acceptable to each other,
and that $y\succ_x M(x)=M'(x)$ and $x\succ_y M(y)=M'(y)$ in $I$.
By the preference lists of $x$ and $y$ in $I'$ and by the definition of $M$,
it follows that in each layer $x$ prefers $y$ to $M'(x)$ and $y$ prefers $x$ to $M'(y)$---a contradiction to $M'$ being \aistable{$\alpha$} since $\alpha>2$.
\end{proof}

\subsection{NP-hardness of \abstability{$\alpha$}}
We note that in the instance constructed in the proof of \cref{thm:hardness_individual_stable} every agent from the side~$W\cup R$ has the same preference list in all layers.
Later on, in \cref{prop:single_layer_equiv} we will show that in such a case the concepts of \aistability{$\alpha$} and \abstability{$\alpha$} are equivalent, Thus, we obtain the same hardness result for pair stability when $\alpha \le \lfloor \nicefrac{\ell}{2} \rfloor$.

\begin{corollary}\label{cor:hardness_pair_stable}
    For each fixed number~$\ell$ of layers with $\ell\ge 4$ and for each fixed value~$\alpha$ with $2\le \alpha \le \lfloor \nicefrac{\ell}{2} \rfloor$, 
    \BSM is NP-hard even if on one side the preference list of each agent is the same in all layers.
\end{corollary}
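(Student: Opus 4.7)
The plan is to reuse the reduction in the proof of \cref{thm:hardness_individual_stable} essentially verbatim, and only verify two things: (i) the constructed instance already has the ``single-layered on one side'' property, and (ii) on such instances \abstability{$\alpha$} and \aistability{$\alpha$} coincide, so the NP-hardness of deciding \aistability{$\alpha$} immediately transfers.

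First, I would inspect the preference lists produced by the reduction. The agents in $U$ use their (linearized) weak order together with a rotating block of dummy agents $r_{i,1},\ldots,r_{i,\ell}$, which differ from layer to layer; so agents from~$U$ do not have single-layered preferences. On the other side, however, every agent $w_i \in W$ has the fixed preference list $L_{w_i} \succ p_{i,1} \succ p_{i,2} \succ \cdots \succ p_{i,\ell} \succ \cdots$ in \emph{every} layer, and each dummy agent $r_{i,j} \in R$ has the fixed list $u_i \succ p_{i,j} \succ \cdots$ in every layer. Hence every agent in $W \cup R$ has the very same preference list across all $\ell$ layers; this is the ``single-layered'' condition on one side of the bipartition.

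Next, I would appeal to the equivalence between pair and individual stability under one-sided single-layered preferences (the forthcoming \cref{prop:single_layer_equiv}): a matching~$M$ in such an instance is \abstable{$\alpha$} if and only if it is \aistable{$\alpha$}. Intuitively, once one agent in a potential blocking pair has the same ranking in every layer, the ``set $S$ of $\alpha$ layers'' that witnesses pair stability can always be chosen so that the same agent's preference stabilizes everything, and this forces the witness set to behave uniformly (matching the requirement of individual stability). Accepting this equivalence, a matching $M'$ for the constructed instance $I'$ is \abstable{$\alpha$} iff it is \aistable{$\alpha$}.

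Combining the two observations, the reduction in \cref{thm:hardness_individual_stable} from \textsc{Perfect SMTI}, which sends yes-instances to instances admitting \aistable{$\alpha$} matchings and no-instances to instances with no such matchings, is simultaneously a reduction for \abstability{$\alpha$} under the same parameter regime $\ell \ge 4$ and $2 \le \alpha \le \lfloor \nicefrac{\ell}{2}\rfloor$. This establishes NP-hardness of \BSM on instances where one side has single-layered preferences. The only possible obstacle is the invocation of \cref{prop:single_layer_equiv} before it is stated; since the paper already announces this as the reason for the corollary, no additional work is required here beyond verifying that the construction really produces one-sidedly single-layered preferences, which I did in the first paragraph.
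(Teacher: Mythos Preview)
Your proposal is correct and follows exactly the paper's own argument: observe that the reduction in \cref{thm:hardness_individual_stable} already yields single-layered preferences on the side $W\cup R$, and then invoke \cref{prop:single_layer_equiv} to transfer the NP-hardness from \aistability{$\alpha$} to \abstability{$\alpha$}. Your explicit check that every agent in $W\cup R$ has the same list in all layers is a useful addition, but otherwise there is nothing to add.
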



For $\alpha > \lceil \nicefrac{\ell}{2} \rceil$, we use an idea similar to the one used for showing \cref{prop:global-alpha_ge_2}. 

\begin{proposition}\label{pro:pair-alpha_ge_ell/2}
   For each fixed number~$\alpha$ with $\lceil \nicefrac{\ell}{2} \rceil +1 \le \alpha \le \ell$, \BSM is NP-hard. 
\end{proposition}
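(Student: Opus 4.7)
The plan is to reduce from the 2-layer \BSM problem, which is NP-hard by \cref{cor:pair-hard-alpha=ell=constant}, by duplicating the two original layers to produce an $\ell$-layer instance. Let $L_1,L_2$ denote the two layers of a given hard 2-layer instance~$I$ on agents~$U\cup W$. I would build an instance~$I'$ on the same agent set consisting of $\lfloor \ell/2 \rfloor$ copies of $L_1$ together with $\lceil \ell/2 \rceil$ copies of~$L_2$; since $\ell$ is fixed, the reduction is clearly polynomial-time.

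My goal is then to establish that $I$ admits an \abstable{$2$} matching if and only if $I'$ admits an \abstable{$\alpha$} matching. For the forward direction, \abstability{$2$} on two layers coincides with stability in every single layer (via \cref{prop:g_stability_equiv_pair_stability}): if $M$ is \abstable{$2$} in~$I$, then no unmatched pair blocks $L_1$ or $L_2$, hence no unmatched pair blocks any layer of $I'$, and thus $M$ is even \abstable{$\ell$} and in particular \abstable{$\alpha$} in~$I'$. For the reverse direction I would argue by contrapositive: suppose some pair $\{u,w\}$ blocks $L_1$ in $I$. Then $\{u,w\}$ blocks every one of the $\lfloor \ell/2 \rfloor$ copies of $L_1$ in $I'$, and since $\alpha\ge \lceil \ell/2 \rceil + 1$ implies $\ell-\alpha+1\le \lfloor \ell/2 \rfloor$, the pair $\{u,w\}$ is $(\ell-\alpha+1)$-blocking~$M$ in~$I'$, contradicting \abstability{$\alpha$} via \cref{prop:pair-stable-alternative-def}. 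The symmetric argument using the $\lceil \ell/2 \rceil$ copies of $L_2$ (together with $\lceil \ell/2 \rceil\ge \lfloor \ell/2 \rfloor \ge \ell-\alpha+1$) handles pairs blocking $L_2$, so $M$ must be stable in both $L_1$ and $L_2$, i.e.\ \abstable{$2$} in $I$.

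The only delicate point is the arithmetic inequality $\lfloor \ell/2 \rfloor \ge \ell-\alpha+1$, but this is exactly the assumption $\alpha \ge \lceil \ell/2 \rceil + 1$, so no real obstacle is anticipated. Conceptually, this is the pair-stability analogue of the padding in \cref{prop:global=pair-hard-alpha=ell}: instead of appending trivially-stable layers (which would leave pairs blocking only one original layer still allowed under \abstability{$\alpha$} when $\alpha$ is not close to $\ell$), the construction duplicates \emph{both} hard layers in roughly equal proportions so that any pair that blocks even a single original layer is immediately amplified into a $(\ell-\alpha+1)$-blocking pair in~$I'$.
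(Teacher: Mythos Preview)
Your proposal is correct and follows essentially the same duplication idea as the paper: amplify a single blocked layer into at least $\ell-\alpha+1$ blocked layers by making roughly $\ell/2$ copies of each of the two hard layers. The paper makes $\lfloor \ell/2\rfloor$ copies of the \emph{pair} $(L_1,L_2)$ and, when $\ell$ is odd, appends one additional hand-crafted ``neutral'' layer that exploits the specific agent structure from \cref{thm:global-hard-alpha=ell=constant}; you instead simply take one more copy of $L_2$. Your variant is slightly cleaner because it treats the two-layer hard instance as a black box, whereas the paper's odd-$\ell$ padding requires knowing the internal agent names from the 3-SAT reduction. Both backward arguments are the same counting argument phrased dually (you count blocked layers directly; the paper counts unblocked layers and uses pigeonhole).
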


\begin{proof}
  Assume that $\alpha\ge \lceil \ell /2 \rceil+1$. 
  To prove the NP-hardness, we slightly adapt the reduction in the proof of \cref{thm:global-hard-alpha=ell=constant} which shows that deciding \gstability{$\alpha$} is NP-hard for $\alpha=\ell=2$.
  Let $\Pot$ be the two-layer instance constructed in the proof of \cref{thm:global-hard-alpha=ell=constant}.
  The idea is to copy $\nicefrac{\ell}{2}$ times profile~$\Pot$ and make sure that an \abstable{$\alpha$} matching must be stable in the two layers of the original profile.

  

  For the preference lists in the $\ell$ layers, we do the following; let $k=\lfloor \nicefrac{\ell}{2} \rfloor$.
  \begin{compactenum}
    \item We make $k$ copies of the profile~$\Pot$.
    \item If $\ell$ is odd, then we add an $\ell^{\text{th}}$ layer with the following preference lists:
  \begin{alignat*}{4}
    \text{Layer }(\ell),  &  \forall i \in [n]\colon \quad
       &  x_i\colon& y_i \pref \overline{y}_i \pref \cdots, &\quad\quad  y_i\colon& \overline{x}_i \pref x_i \pref \cdots,\\
  &  & \overline{x}_i \colon& \overline{y}_i \pref y_i \pref \cdots,  &  \overline{y}_i\colon& x_i\pref \overline{x}_i  \pref \cdots \text{.}\\
 &   \forall j \in [m]\colon\quad
    &  a_j\colon& d_j\pref e_j  \pref f_j\pref \cdots, & \quad d_j\colon & b_j \pref c_j \pref a_j \pref \cdots,\\
&    & b_j\colon& e_j\pref f_j \pref d_j \pref \cdots, & \quad e_j\colon & c_j \pref a_j \pref b_j \pref \cdots,\\
    && c_j\colon& f_j\pref  d_j \pref e_j \pref \cdots, & \quad f_j\colon & a_j \pref b_j \pref c_j\pref \cdots\text{.}
  \end{alignat*}
  \end{compactenum}
  This completes the construction, which can clearly be done in polynomial time.
  We claim that profile~$\Pot$ with two layers has a \gstable{$2$} matching if and only if the new profile with $\ell$ layers has an \abstable{$\alpha$} matching.
  For the ``only if'' direction, it is straightforward to see that each \gstable{$2$} matching for $\Pot$ is \gstable{$\ell$} for the new profile.
  By \cref{prop:relation1}, $M$ is also \abstable{$\ell$} for the new instance.
  
  For the ``if'' direction, assume that $M$ is an \abstable{$\alpha$} matching for the new instance with $\ell$ layers.
  We claim that $M$ is \gstable{$2$} for instance~$\Pot$.
  First, for each unmatched pair~$\{u,w\}\notin M$, 
  let $S^{\text{unblock}}(\{u,v\})$ be 
  the set that consists of all layers that are \myemph{not} blocked by~$\{u,w\}$: $S^{\text{unblock}}(\{u,v\})\coloneqq \{i\in[\ell]\mid M(u)\succ^{(i)}_{u} w\text{ or } M(w)\succ^{(i)}_{w} u\}$.
  Since $M$ is \abstable{$\alpha$} it follows that $|S^{\text{unblock}}(\{u,v\})|\ge \alpha \ge \lceil \nicefrac{\ell}{2}\rceil+1$; the last inequality holds by our assumption on the value of~$\alpha$.
  We claim that $S^{\text{unblock}}(\{u,v\})$ contains at least $k+1$ layers from the first $2\cdot k$ layers.
  If $\ell$ is odd, then $\lceil\nicefrac{\ell}{2}\rceil+1 = k+2$, and the claim follows; otherwise $\ell=2k$ and $\lceil\nicefrac{\ell}{2}\rceil+1=k+1$, implying our claim.
  Since the first $2k$ layers are $k$ copies of the two layers from $\Pot$, we can assume without loss of generality that $S^{\text{unblock}}(\{u,v\})$ contains at least the first $k+1$ layers.
  Now, it is obvious that $\bigcap_{\{u,v\}\notin M}S^{\text{unblock}}(\{u,v\})$ contains at least the first two layers.
  This implies that no unmatched pair is blocking the first two layers, and hence that $M$ is \gstable{$2$} for the instance~$\Pot$.
\end{proof}

\cref{cor:hardness_pair_stable} and \cref{pro:pair-alpha_ge_ell/2} cover the whole range of the potential values of $\alpha$ except for \mbox{$\alpha = \lfloor \nicefrac{\ell}{2} \rfloor +1$} when $\ell$ is odd. As we will see in the next section \cref{thm:hardness_individual_stable} cannot be strengthened to cover the value $\alpha = \lfloor \nicefrac{\ell}{2} \rfloor +1$ (see \cref{prop:single-sided-individual-poly}), and so, also \cref{cor:hardness_pair_stable} cannot be directly strengthened. However, we can tweak the construction from \cref{thm:hardness_individual_stable}, breaking the restriction that on one side the preference list of each agent is the same in all layers, and obtain hardness for \abstability{$\alpha$} for $\alpha = \lfloor \nicefrac{\ell}{2} \rfloor+1$.   

\begin{proposition}\label{prop:hardness_pair_stable-alpha=ell/2+1}
  For each fixed odd number~$\ell$ of layers with $\ell\ge 5$ and for the case when $\alpha = \lfloor \nicefrac{\ell}{2} \rfloor+1$,
  \BSM is NP-hard.
\end{proposition}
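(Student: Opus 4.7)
The plan is to adapt the reduction from \textsc{Perfect SMTI} used in the proof of \cref{thm:hardness_individual_stable} to the new regime $\ell = 2k+1$ odd ($k\ge 2$) and $\alpha = k+1 = \lfloor \nicefrac{\ell}{2} \rfloor + 1$. In that earlier reduction the $W$-side preferences were identical in every layer, which is exactly the feature that \cref{cor:hardness_pair_stable} exploits (via the equivalence to be proved later in \cref{prop:single_layer_equiv}) to transfer hardness from individual to pair stability. For our target parameters, however, the natural $\lceil \nicefrac{\ell}{2} \rceil$-versus-$\lfloor \nicefrac{\ell}{2} \rfloor$ tie encoding produces a pair $\{u,x\}$ that can block exactly $k+1 = \alpha$ layers in the worst case, which is one layer too many, since pair stability only allows each pair to block at most $\ell - \alpha = k$ layers. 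To shave off one blocking layer one has to let the $W$-side preferences vary across layers, which is precisely the restriction the statement tells us to break.

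I would reuse the skeleton of the previous reduction: SMTI agents $U\cup W$ (with ties of length two appearing only on the $U$-side), dummy pair sets $P$ and $R$ to enforce acceptability, $\ell = 2k+1$ layers, and for each tie $x \sim_u y$ in $L_u$ an encoding in which $u$ ranks $x$ above $y$ in a fixed set of $k+1$ \emph{majority} layers and below $y$ in the remaining $k$ \emph{minority} layers. All $U$-side and dummy preferences are taken from \cref{thm:hardness_individual_stable} verbatim. The one change is on the $W$-side: for each tie $x \sim_u y$ I would designate a single \emph{flip layer} among the $k+1$ majority layers and, only inside that layer, move $u$ to the bottom of $x$'s $U$-list (and symmetrically demote $u$ inside $y$'s list), keeping every other relative ordering of $x$'s and $y$'s lists unchanged. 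Beyond this single transposition-like perturbation per tie, the $W$-side preferences in every layer coincide with the original $L_w$'s.

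The correctness argument mirrors the two-directional scheme of \cref{thm:hardness_individual_stable}. The backward direction is essentially unchanged: the dummy pairs of $P\cup R$ remain forced to be matched with their counterparts by the same cyclic-preference argument as before, and any SMTI-blocking pair of $I$ would still block every single layer of $I'$, far exceeding the $(k+1)$-pair-stability threshold. For the forward direction, starting from a perfect SMTI-stable matching $M$, the only pairs that need re-checking are those arising from ties. In the case $M(u) = x$ the tie encoding already limits $\{u,y\}$ to blocking at most $k$ minority layers, which is within budget. In the dangerous case $M(u) = y$ the pair $\{u,x\}$ would have blocked $k+1$ majority layers in the unmodified construction, but the flip layer now puts $u$ at the bottom of $x$'s list in one of those layers, killing its $x$-side blocking condition there, and so the blocking count drops from $k+1$ down to the allowed $k$.

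The main obstacle is the combinatorial bookkeeping of the flip layers. Each flip must (i) remove exactly one blocked layer for the intended ``tie-minority'' pair, (ii) not spawn new blocking pairs $\{u',x\}$ with $u' \ne u$ in the same flip layer, and (iii) remain consistent when a single $W$-agent $x$ participates in several ties whose flips might have to share a majority layer. Point (ii) is delicate because demoting $u$ to the bottom of $L_x$ in one layer can promote $u$'s neighbours past the current partner of $x$; the key observation is that this can only happen when $M(x) = u'$ for another tied partner $u'$ of $x$, in which case a careful distribution of flips across the $k+1$ majority layers keeps the total blocking count of every pair at most $k$. Handling (iii) will likely require either a mild preprocessing of the SMTI input that bounds the number of ties per $W$-agent, or the addition of a small number of auxiliary guard agents per tie whose forced matchings absorb the side-effects of the bottoming operation while preserving the dummy structure of \cref{thm:hardness_individual_stable}.
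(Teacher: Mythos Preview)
Your high-level diagnosis is exactly right: the obstruction at $\alpha=\lfloor\nicefrac{\ell}{2}\rfloor+1$ comes from a tie pair $\{u,x\}$ that can block $k+1$ layers when $M(u)=y$, and the remedy must come from letting the $W$-side vary across layers. Your per-tie ``flip layer'' implementation, however, is much more delicate than necessary and leaves real gaps. Your own points~(ii) and~(iii) are not cosmetic: once $M(x)$ happens to be an agent that is itself demoted in $x$'s list (because $M(x)$ participates in a tie involving $x$), Case~B pairs $\{u',x\}$ lose safe layers, and when several $u_i$'s share the same $W$-partner in their ties the demotions interact. You gesture at preprocessing or guard agents, but neither is worked out, so as it stands the proposal is a plausible sketch rather than a proof.

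The paper sidesteps all of this with a single global change: keep the construction of \cref{thm:hardness_individual_stable} verbatim except that in \emph{layer~$1$} every $w_i\in W$ uses the reverse $\overleftarrow{L_{w_i}}$ of its SMTI list (the dummy tail $p_{i,1}\succ\cdots\succ p_{i,\ell}\succ\cdots$ stays as before). This buys exactly the one extra non-blocked layer you need, uniformly and with no per-tie bookkeeping. In the forward direction, for an unmatched acceptable pair $\{u,w\}$: if $u\succ_w M(w)$ in $I$, then layer~$1$ (reversed) gives $M'(w)\succ^{(1)}_w u$, and SMTI-stability yields $M(u)\succeq_u w$, so among layers $2,\ldots,\ell$ there are at least $\lfloor\nicefrac{\ell}{2}\rfloor$ where $u$ prefers $M'(u)$ to $w$; together that is $\lfloor\nicefrac{\ell}{2}\rfloor+1=\alpha$ layers in which the pair does not block. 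If instead $M(w)\succ_w u$, then layers $2,\ldots,\ell$ already give $\ell-1\ge\alpha$ safe layers. For the backward direction the dummy argument is untouched (only the $L_{w_i}$ segment is reversed, not the $p_{i,j}$ tail, and $R$-agents are unchanged), and any SMTI-blocking pair of $I$ still blocks all of layers $2,\ldots,\ell$, i.e.\ $\ell-1\ge \ell-\alpha+1$ layers. So the single ``reverse one layer on the $W$-side'' trick replaces your flip-layer machinery entirely.
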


\begin{proof}
We provide a reduction from an instance $I$ of the NP-hard \textsc{Perfect SMTI} problem which is very similar to the reduction given in the proof of \cref{thm:hardness_individual_stable}. Consequently, instead of describing the new reduction from scratch, we will only explain how it differs from the one from the proof of \cref{thm:hardness_individual_stable}.
For each agent $w_i \in W$ in the proof of  \cref{thm:hardness_individual_stable}, the preference list of $w_i$ was the same in all layers. 
Now, in the last $\ell-1$ layers this list is constructed in exactly the same way as in the proof of \cref{thm:hardness_individual_stable}. 
However, in the first layer we set the first part of the list to be reversed in comparison to the remaining $(\ell-1)$ layers; let $\overleftarrow{L_{w_i}}$ be the reverse of the strict preference list of $w$ in the input instance of \textsc{Perfect SMTI} and let the second~``$\cdots$'' denote some arbitrary but fixed order of the remaining agents:
\[\text{Layer } (1)\colon \text{agent } w_i\colon \overleftarrow{L_{w_i}}\pref p_{i,1} \pref p_{i,2} \pref \cdots \pref p_{i,\ell}\pref \cdots\text{.}\]
The preference lists of all other agents are constructed in exactly the same way as in the proof of \cref{thm:hardness_individual_stable}.
We prove that $I$ admits a perfect SMTI-stable matching if and only if the constructed instance $I'$ with $\ell$ layers admits an \abstable{$\alpha$} matching with $\alpha=\lfloor \nicefrac{\ell}{2}\rfloor+1$.

For the ``only if'' direction, assume that $M$ is a perfect SMTI-stable matching for $I$ and consider matching $M'=M\cup \big\{\{r_{i,j},p_{i,j}\} \mid 1\le i \le n\wedge 1\le j\le \ell\big\}$. We will show that $M'$ is \abstable{$\alpha$} for~$I'$. 
First of all, just as in the proof of \cref{thm:hardness_individual_stable} we deduce that no agent from~$P\cup R$ and no agent from $U\cup W$ will form a blocking pair. 
Neither will any two agents that are not acceptable to each other in $I$ form a blocking pair.
Now, consider two arbitrary agents~$u$ and $w$ with $u\in U$ and $w\in W$ that are acceptable to each other in $I$.
We need to find a subset of $\lfloor\nicefrac{\ell}{2}\rfloor+1$ layers, 
where in each layer in the subset $u$ prefers $M'(u)$ to~$w$ or $w$ prefers $M'(w)$ to~$u$ .
We distinguish between two cases concerning the preference list of~$w$ in~$I$; note that it does not have ties:
\begin{description}
\item[Case 1: $\boldsymbol{u \pref_w M'(w)=M(w)}$ in $\boldsymbol{I}$.] This implies that $w$ prefers $M'(w)$ to $u$ in the first layer in $I'$. 
Thus, it suffices if we can find $\lfloor \nicefrac{\ell}{2} \rfloor$ layers in the last $\ell-1$ layers, 
where $u$ prefers $M'(u)$ to $w$.
By the stability of $M$ we have that $M'(u)=M(u) \succeq_u w$ in instance~$I$. For the case that $M'(u)\succ_u w$ in $I$ we have that $u$ prefers $M'(u)$ to $w$ in all layers.
For the case that $M'(u)$ and $w$ are tied by $u$ we can identify $\lfloor \nicefrac{\ell}{2} \rfloor$ layers, either the layers from $2$ to $\lfloor \nicefrac{\ell}{2} \rfloor+1$ or the layers from $\lfloor \nicefrac{\ell}{2} \rfloor+2$ to $\ell$, where $u$ prefers $M'(u)$ to $w$. Additionally, in the first layer $w$ prefers $M(w)$ to $u$, which gives in total the subset of $\lfloor \nicefrac{\ell}{2} \rfloor+1$ layers.

\item[Case 2: $\boldsymbol{M'(w) \pref_w u}$.] In this case we know that $w$ prefers $M'(w)$ to $u$ in the last $\ell-1$ layers.
\end{description}

For the ``if'' direction, assume that $M'$ is an \abstable{$\alpha$} matching for $I'$. By the same reasoning as in the proof of \cref{thm:hardness_individual_stable} we deduce that each agent $u \in U$ must be matched with someone that it prefers to each agent from $R$ in each layer,
and that $w \in W$ must be matched with an agent that it prefers to each agent from $P$ in all layers. 
Thus, $u$ and $M'(u)$ (resp.\ $w$ and $M'(w)$) must be acceptable to each other in $I$, meaning that $M=\{\{u,w\}\in M' \mid u\in U, w\in W\}$ is a perfect matching for $I$. We show that~$M$ is SMTI-stable. For the sake of contradiction, suppose that an unmatched pair $\{u, w\}$ is SMTI-blocking~$M$. This means that $w$ prefers $u$ to $M(w) = M'(w)$ in the last $\ell -1$ layers in $I'$, and that~$u$ prefers~$w$ to~$M(u) = M'(u)$ in all $\ell$ layers, a contradiction to~$M'$ being \abstable{$\alpha$}, since $\alpha\ge 2$.
\end{proof}

\section{Two Special Cases of Preferences}
\label{sec:constr-pref}

In this section we consider two well-motivated special cases of our general multi-layer framework for stable matchings. We will discuss how the corresponding simplifying assumptions affect the computational complexity of finding multi-layer stable matchings. Interestingly, even under seemingly strong assumptions some variants of our problem remain computationally hard.    

\subsection{Single-layered preferences on one side}

Consider the special case where the preferences of the agents from~$U$ can be expressed through a single layer. Formally, we model this by assuming that for each agent from $U$, 
its preference list is the same in all layers. 
In this case we say that the agents from~$U$ have \myemph{single-layered preferences}.

Single-layered preferences on one side can arise in many real-life scenarios. For instance, consider the standard example of matching residents with hospitals and, similarly as in \cref{ex:individual}, assume that each layer corresponds to a certain criterion. It is reasonable to assume that the hospitals evaluate their potential employees with respect to the level of their qualifications only (thus, having a single layer of preferences), while the residents take into account a number of factors such as how far is a given hospital from the place they live, the level of compensation, the reputation of the hospital, etc.
 
First, we observe that in such a case two out of our three solution concepts from \cref{sec:defi} are equivalent.

\begin{proposition}\label{prop:single_layer_equiv}
If each agent from~$U$ has single-layered preferences, then \abstability{$\alpha$} and \aistability{$\alpha$} are equivalent for each~$\alpha$.
\end{proposition}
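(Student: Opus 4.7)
The plan is as follows. The direction that \aistability{$\alpha$} implies \abstability{$\alpha$} holds in general by \cref{prop:relation2} and needs no assumption on preferences, so I only need to establish the reverse direction under the single-layered assumption: if $M$ is \abstable{$\alpha$}, then $M$ is also \aistable{$\alpha$}. I would prove the contrapositive.

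Suppose $M$ is not \aistable{$\alpha$}. By \cref{prop:individual-stable-alternative-def}, there exists an unmatched pair $\{u,w\}\in (\pairset)\setminus M$ such that $\{u,w\}$ is both $(\ell-\alpha+1)$-dominating $\{u,M(u)\}$ and $(\ell-\alpha+1)$-dominating $\{w,M(w)\}$. The key observation is that since $u\in U$ has the same preference list in all $\ell$ layers, the statement ``$u$ prefers $w$ to $M(u)$ in layer $i$'' either holds in every layer $i\in[\ell]$ or in none. Since $\ell-\alpha+1\ge 1$, the first domination condition requires this statement to hold in at least one layer, and thus it must hold in \emph{every} layer.

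Next, let $S\subseteq [\ell]$ with $|S|=\ell-\alpha+1$ be a set of layers witnessing that $\{u,w\}$ is $(\ell-\alpha+1)$-dominating $\{w,M(w)\}$; that is, $w$ prefers $u$ to $M(w)$ in every $i\in S$. Combining this with the previous observation, in each $i\in S$ both of the blocking conditions hold, so $\{u,w\}$ is blocking $M$ in every layer of $S$. Hence $\{u,w\}$ is $(\ell-\alpha+1)$-blocking $M$, and by \cref{prop:pair-stable-alternative-def}, $M$ fails to be \abstable{$\alpha$}.

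I do not anticipate any real obstacle here: the single-layered assumption on one side is strong enough to collapse the per-layer quantifier of domination on the $U$-side into a global statement that holds in every layer, which is precisely what is needed to turn two independent sets of dominating layers into a common set of blocking layers. The only care required is to invoke the alternative characterizations in \cref{prop:individual-stable-alternative-def} and \cref{prop:pair-stable-alternative-def} in order to reason uniformly in terms of $(\ell-\alpha+1)$-dominating and $(\ell-\alpha+1)$-blocking.
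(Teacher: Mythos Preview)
Your proposal is correct and follows essentially the same approach as the paper: one direction is \cref{prop:relation2}, and for the other you argue (via the alternative characterizations in \cref{prop:individual-stable-alternative-def} and \cref{prop:pair-stable-alternative-def}) that the single-layered assumption forces $\{u,w\}$ to dominate $\{u,M(u)\}$ in \emph{all} layers, so the $(\ell-\alpha+1)$ layers witnessing domination of $\{w,M(w)\}$ are also blocking layers. This is exactly the paper's argument, with your version spelling out the $\ell-\alpha+1\ge 1$ step slightly more explicitly.
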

\begin{proof}
The fact that \aistability{$\alpha$} implies \abstability{$\alpha$} follows from \cref{prop:relation2}.

Let us prove the other direction. Let $M$ be an \abstable{$\alpha$} matching.
Towards a contradiction, suppose that $M$ is not \aistable{$\alpha$}.
By \cref{prop:individual-stable-alternative-def}, let $p=\{u, w\}$ 
be an unmatched pair of agents 
such that there is a subset~$S_1$ of $\ell-\alpha+1$ layers where $p$ is dominating $\{u, M(u)\}$, and that there is a subset~$S_2$ of $(\ell-\alpha+1)$ layers where $p$ is dominating $\{w, M(w)\}$.
By our assumption of single-layered preferences for each agent in $U$, we have that $p$ is dominating $\{u,M(u)\}$ in every layer.
Thus, the pair~$p$ is blocking $M$ in each layer in $S_2$---a contradiction to \cref{prop:pair-stable-alternative-def}.
\end{proof}

For profiles with single-layered preferences of the agents on one side, \gstability{$\alpha$} is strictly stronger than the other two concepts:

\begin{example}
Consider four agents with the following three layers of preference profiles:

\noindent 
{\centering
\begin{tikzpicture} 
  \node at (0, -0.5) (Layer1) {\Large $P_1$:};
  \foreach \i in {1, 2}
  {
    \foreach \j/\p/\o in {u/0/left,w/1/right} {
      \node[draw, circle, minimum size=3ex, inner sep=1pt] at (\i*\dist, -\p*\xsc) (n\j\i) {$\j_\i$};
    }
  }

  \foreach \n / \i / \o / \a/\b   in {u1/w/left/1/2, u2/w/left/2/1}{
    \gettikzxy{(n\n)}{\xx}{\yy};
    \node at (\xx,\yy+\ysc*25) {$\i_\a$};
    \node at (\xx,\yy+\ysc*15) {$\i_\b$};
  }  
  \foreach \n / \i / \o / \a/\b  in {%
    w1/u/right/2/1, w2/u/right/1/2}{
    \gettikzxy{(n\n)}{\xx}{\yy};
    \node at (\xx,\yy-\ysc*18) {$\i_\a$};
    \node at (\xx,\yy-\ysc*28) {$\i_\b$};
  }
  \foreach \s/\t in {u1/w1,u2/w2}{
    \draw[blackline] (n\s) -- (n\t);
  } 
  \foreach \s/\t in {u1/w2,u2/w1} {
    \draw[greenline] (n\s) -- (n\t);
  }
\end{tikzpicture}
~~\qquad
\begin{tikzpicture}
  \node at (0, -0.5) (Layer1) {\Large $P_2$:};
  \foreach \i in {1, 2}
  {
    \foreach \j/\p/\o in {u/0/left,w/1/right} {
      \node[draw, circle, minimum size=3ex, inner sep=1pt] at (\i*\dist, -\p*\xsc) (n\j\i) {$\j_\i$};
    }
  }

  \foreach \n / \i / \o / \a/\b   in {u1/w/left/1/2, u2/w/left/2/1}{
    \gettikzxy{(n\n)}{\xx}{\yy};
    \node at (\xx,\yy+\ysc*25) {$\i_\a$};
    \node at (\xx,\yy+\ysc*15) {$\i_\b$};
  }  
  \foreach \n / \i / \o / \a/\b  in {%
    w1/u/right/2/1, w2/u/right/2/1}{
    \gettikzxy{(n\n)}{\xx}{\yy};
    \node at (\xx,\yy-\ysc*18) {$\i_\a$};
    \node at (\xx,\yy-\ysc*28) {$\i_\b$};
  }
  \foreach \s/\t in {u1/w1,u2/w2}{
    \draw[blackline] (n\s) -- (n\t);
  } 
\end{tikzpicture}
~~\qquad
\begin{tikzpicture}
  \node at (0, -0.5) (Layer1) {\Large $P_3$:};
  \foreach \i in {1, 2}
  {
    \foreach \j/\p/\o in {u/0/left,w/1/right} {
      \node[draw, circle, minimum size=3ex, inner sep=1pt] at (\i*\dist, -\p*\xsc) (n\j\i) {$\j_\i$};
    }
  }

  \foreach \n / \i / \o / \a/\b   in {u1/w/left/1/2, u2/w/left/2/1}{
    \gettikzxy{(n\n)}{\xx}{\yy};
    \node at (\xx,\yy+\ysc*25) {$\i_\a$};
    \node at (\xx,\yy+\ysc*15) {$\i_\b$};
  }  
  \foreach \n / \i / \o / \a/\b  in {%
    w1/u/right/1/2, w2/u/right/1/2}{
    \gettikzxy{(n\n)}{\xx}{\yy};
    \node at (\xx,\yy-\ysc*18) {$\i_\a$};
    \node at (\xx,\yy-\ysc*28) {$\i_\b$};
  }
  \foreach \s/\t in {u1/w1,u2/w2}{
    \draw[blackline] (n\s) -- (n\t);
  } 
\end{tikzpicture}
\par}
\noindent
Let $M = \{\{u_1, w_2\}, \{u_2, w_1\}\}$.
We show that $M$ is \abstable{2} but not \gstable{2}.
Indeed, $M$ is stable only in the first layer, so it cannot be \gstable{2}.
To see why $M$ is \abstable{2}, consider the unmatched pairs $\{u_1, w_1\}$ and $\{u_2, w_2\}$.
The first one only blocks the third layer, and the latter one only blocks the second layer.
\hfill $\diamond$
\end{example}

\subsubsection{Global stability}
By \cref{prop:g_stability_equiv_pair_stability,prop:single_layer_equiv}, we can find out whether an instance with single-layered preferences on one side admits an \gstable{$\alpha$} matching in time $O(\ell^{\alpha}\cdot \alpha \cdot n^2)$ by guessing a subset of $\alpha$ layers and using \cref{alg:ell-individual}.
However, the following result shows that fixed-parameter tractability (FPT) for the parameter~$\alpha$, 
\emph{i.e.}, the existence of an algorithm running in $f(\alpha) \cdot (\ell\cdot n)^{O(1)}$ time for some computable function~$f$, is unlikely (for details on parameterized complexity we refer to the books of \citet{CyFoKoLoMaPiPiSa2015}, \citet{DF13},  \citet{FG06}, and \citet{Nie06}).

\begin{theorem}\label{thm:single-sided-global-w-hard}
  Even if the preferences of the agents from $U$ are single-layered \GSM is NP-hard and is W[1]-hard for the threshold parameter~$\alpha$.
  It can be solved in $O(\ell^{\alpha}\cdot \alpha \cdot n^2)$~time.
\end{theorem}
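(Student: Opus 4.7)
The upper bound proceeds by brute-force enumeration: for each of the $\binom{\ell}{\alpha}\le \ell^\alpha$ candidate witness sets $S\subseteq[\ell]$ of size $\alpha$, restrict the profile to the $\alpha$ layers in $S$ and invoke \cref{alg:ell-individual}. Under the single-layered assumption on $U$, \cref{prop:g_stability_equiv_pair_stability,prop:single_layer_equiv} together imply that on any restriction the three all-layer stability concepts (global, pair, and individual) coincide. Hence \cref{alg:ell-individual} correctly decides whether the restricted sub-profile admits a matching stable in every one of the $\alpha$ selected layers; by \cref{thm:l_individual_polynomial} each such call runs in $O(\alpha\cdot n^2)$ time, for a total of $O(\ell^\alpha\cdot \alpha \cdot n^2)$.

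For W[1]-hardness (which also immediately gives NP-hardness, since $\alpha$ is part of the input), my plan is to reduce from \textsc{Multicolored Clique} parameterized by the number of colors $k$. Given a $k$-colored graph $G$ with color classes $V_1,\ldots,V_k$, I would build a \GSM instance with one layer $L_v$ per vertex $v\in V(G)$ (so $\ell=|V(G)|$) and threshold $\alpha=k$, engineered so that the following invariant holds: a witness set $S\subseteq[\ell]$ of size $k$ admits a matching simultaneously stable in every layer of $S$ if and only if $\{v:L_v\in S\}$ is a multicolored clique in $G$.

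I would enforce this invariant with two kinds of gadgets, both with single-layered $U$-preferences so that all layer-to-layer variation occurs on the $W$-side. First, a \emph{selection gadget} for each color class $V_i$ that admits no matching simultaneously stable in two distinct layers $L_v, L_{v'}$ with $v,v'\in V_i$: each layer $L_v$ demands a distinct ``choice pair'' inside the gadget as its unique locally stable matching. Combined with the size constraint $|S|=k$, this forces $S$ to contain exactly one layer from each color class. Second, for every non-edge $\{u,v\}$ of $G$, an \emph{edge-incompatibility gadget} consisting of a pair of agents that blocks every matching simultaneously in layers $L_u$ and $L_v$, but in no other layer; such a pair precludes any witness $S$ containing both $L_u$ and $L_v$, forcing the selected vertices to be pairwise adjacent in $G$. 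The reverse direction is obtained by taking a multicolored clique and assembling the canonical stable matching on each gadget.

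The main technical obstacle I anticipate is realizing both gadgets while keeping the $U$-side preferences identical across all layers, so that every layer-dependent blocking/non-blocking behavior must be encoded purely on the $W$-side. Adapting the insertion-list technique (the $D_i$, $A_i$ markers) from the proof of \cref{thm:global-hard-alpha=ell=constant}, but pushing every layer-dependent variation onto $W$, looks like the most natural route: each non-edge could be witnessed by two $W$-agents whose preferences, in precisely layers $L_u$ and $L_v$, promote a specific $U$-partner above their intended matched counterparts, producing the required simultaneous blocking without ever changing any $U$-preference across layers.
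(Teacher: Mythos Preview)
Your upper-bound argument is exactly the paper's: guess the $\alpha$ layers, restrict, and appeal to \cref{prop:g_stability_equiv_pair_stability,prop:single_layer_equiv,thm:l_individual_polynomial}.

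For hardness, however, the paper takes a simpler and more direct route than your plan. It reduces from \textsc{Independent Set} (not \textsc{Multicolored Clique}), with one layer per vertex and $\alpha=k$, and uses a single integrated per-vertex gadget rather than separate selection and non-edge gadgets. Each vertex~$v_i$ contributes six agents ($u_i,\overline{u}_i,a_i\in U$ and $w_i,\overline{w}_i,b_i\in W$); the $U$-side preferences are written down once and never change across layers. All layer dependence lives on the $W$-side: in layer~$i$, the $W$-preferences are chosen so that any stable matching must contain the ``in-set'' pairs $\{u_i,w_i\},\{\overline{u}_i,\overline{w}_i\}$ on $v_i$'s own gadget \emph{and}, for every neighbor $v_j$ of $v_i$, the ``out-of-set'' pairs $\{u_j,\overline{w}_j\},\{\overline{u}_j,w_j\}$ on $v_j$'s gadget. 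Hence two adjacent vertices can never have their layers simultaneously stable, and a witness set of $\alpha$ simultaneously stable layers is exactly an independent set of size~$k$. No per-non-edge gadgets are needed; adjacency is encoded directly into the $W$-lists of layer~$i$.

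Your \textsc{Multicolored Clique} approach should also go through in principle, but it is heavier (one gadget per color class plus one per non-edge), and the obstacle you anticipate is real. Be careful with your proposed fix: the $A_i,D_i$ insertion-list trick in the proof of \cref{thm:global-hard-alpha=ell=constant} varies the preferences of \emph{both} sides across layers (e.g.\ the $x_i,\overline{x}_i,a_j,b_j,c_j$ lists differ between layers~$1$ and~$2$), so it cannot be imported wholesale into a single-layered-$U$ setting. The paper avoids this difficulty not by pushing that construction onto one side, but by designing a fresh, much smaller gadget for which single-layered $U$-preferences are natural from the start.
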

\begin{proof}
  For the running time statement, let $I$ be an instance of \GSM with $\ell$ layers.
  We guess a subset~$S\subseteq [\ell]$ of $\alpha$ layers and check whether a matching~$M$ exists that is stable in all layers of $S$.
  Now consider the instance~$I'$ restricted to the $\alpha$~layers in $S$.
  By \cref{prop:g_stability_equiv_pair_stability}, $M$ is \gstable{$\alpha$} in $I'$ if and only if $M$ is \abstable{$\alpha$} in $I'$.
  Since each agent on one side has the same preference list in all layers, 
  by \cref{prop:single_layer_equiv}, $M$ is \abstable{$\alpha$} in $I'$ if and only if  $M$ is \aistable{$\alpha$} in $I'$.
  By \cref{thm:l_individual_polynomial}, we can use \cref{alg:ell-individual} to decide whether $M$ is \aistable{$\alpha$} in $I'$ in $O(\alpha\cdot n^2)$~time.
  The total running time is thus $O(\ell^{\alpha}\cdot \alpha\cdot n^2)$.

  For the last statement, we provide a parameterized reduction\footnote{A \emph{parameterized reduction} from a parameterized problem~$\Pi_1$
to a parameterized problem~$\Pi_2$ is an algorithm mapping an instance~$(x,k)$
to an instance~$(x',k')$
in $f(k)\cdot|x|^{O(1)}$ time
such that $k'\leq g(k)$,
and that
$(x,k)\in\Pi_1$ if and only if $(x',k')\in\Pi_2$,
where \(f\)~and~\(g\) are some computable functions.}
 from the W[1]-complete \IS problem parameterized by the size~$k$ of the solution.
 We will see that the reduction is also a polynomial-time reduction, showing the first statement since \IS is also NP-hard.

  In the \IS problem we are given an undirected graph~$G=(V,E)$ with vertex set~$V$ and edge set~$E$, and a non-negative integer~$k$, and we ask whether $G$ admits a $k$-vertex \myemph{independent set}, \emph{i.e.}\ a vertex subset~$V'\subseteq V$ with $k$~pairwisely non-adjacent vertices.
  
  Given an \IS instance~$(G=(V,E), k)$, we construct an instance with $|V|$ layers with the set of agents~$U\uplus W$ as follows.
  Assume that~$V=\{v_1, v_2, \ldots, v_n\}$.
  For each vertex~$v_i\in V$ we construct six agents, three for each side, denoted by $u_i$, $\overline{u}_i$, $w_i$, $\overline{w}_i$, $a_i$ and $b_i$.
  Let $U=\{u_i, \overline{u}_i, a_i \mid 1\le i \le n\}$ and $W=\{w_i, \overline{w}_i, b_i \mid 1\le i \le n\}$.
  We create a layer for each vertex so that if a matching is stable for a layer, then 
  an independent set solution has to include the corresponding vertex and exclude any of its adjacent vertices. 
  Again, the notation~``$\cdots$'' denotes an arbitrary order of the unmentioned agents.
  \paragraph{Agents from \boldmath$U$.}
    The preference list of each agent in $U$ is the same for all layers.
  \begin{align*}
    \forall i \in [n]\colon
    \text{agent } u_i \colon& \overline{w}_i \pref w_i \pref \cdots ,\\
    \text{agent } \overline{u}_i \colon& w_i \pref \overline{w}_i \pref \cdots,\\
    \text{agent } a_i \colon& w_i \pref b_i\pref \cdots.
  \end{align*}
  \paragraph{Agents from \boldmath$W$.}
  For each layer~$i$, $1\le i \le n$, the preference list of agents~$w_i$ and $\overline{w}_i$ and of agents~$w_j$ and $\overline{w}_j$ for $v_j \neq v_i$ are constructed in such a way that they encode the adjacency structure of graph~$G$.
   \begin{alignat*}{4}
     &\text{agent } w_i &\colon& u_i \pref a_i \pref \cdots, &\\
     & \text{agent } \overline{w}_i &\colon& \overline{u}_i \pref \cdots,&\\
     \forall j \text{ with } \{v_i, v_j\} \in E\colon& \text{agent }w_j &\colon& \overline{u}_j \pref \cdots, &\\
    & \text{agent }\overline{w}_j &\colon& u_j \pref \cdots, &\\
     \forall k \text{ with } \{v_i, v_k\} \notin E\colon& \text{agent }w_k &\colon& {u}_k \pref \overline{u}_k\pref \cdots, &\\
    & \text{agent }\overline{w}_k &\colon& \overline{u}_k \pref u_k \pref \cdots\text{.}
  \end{alignat*}
  The preference list of each $b_j$ with $j\in [n]$ remains the same in all layers and is as follows.
  \begin{align*}
    \text{agent } b_j \colon a_j\pref \cdots\text{.} 
  \end{align*}
  Observe that for each layer~$i \in [n]$, $w_i$ is the most preferred agent of $a_i$. 
  Thus, a matching~$M$ is stable in layer~$i$ only if $M(w_i)\in \{u_i, a_i\}$. From this we infer that for $M$ to be stable in layer~$i$ it must also hold that $M(\overline{u}_i)=\overline{w}_i$.
  This implies that $M(w_i)=u_i$ as otherwise $\{u_i, w_i\}$ would be blocking layer~$i$.
  Further, if $M$ is stable in layer~$i$, then for each vertex~$v_j$ that is adjacent to $v_i$,
  it must hold that $M(w_j)=\overline{u}_j$ and $M(\overline{w}_j)=u_j$.
  In the following, for each $i\in [n]$, let $M_i^{\text{vc}}=\{\{u_i, w_i\}, \{\overline{u}_i, \overline{w}_i\}\}$ and $M_i^{\text{ind}}=\{\{u_i, \overline{w}_i\}, \{\overline{u}_i, {w}_i\}\}$.
  Then, we have the following observation for the case that a matching~$M$ is stable in layer~$i$:
  \begin{align}\label{eq:single-layered-global-matching}
   M_i^{\text{vc}} \subseteq M \text{ and for each vertex~} v_j \text{ incident with } \{v_i,v_j\} \in E \text{ it holds that }M^{\text{ind}}_j \subseteq M\text{.}
  \end{align}
  \noindent Intuitively, the matching~$M_i^{\text{ind}}$ means that vertex~$v_j$ does not belong to an independent set while the matching~$M_i^{\text{vc}}$ means that vertex~$v_i$ belongs to an independent set.

  To complete the construction we let $\alpha=k$.
  Clearly, the construction can be done in polynomial time.
  Now, we claim that the given graph~$G$ has an independent set of size~$k$ if and only if the constructed instance has an \gstable{$\alpha$} matching.

  For the ``only if'' part, assume that $V'\subseteq V$ is a size-$k$ independent set for $G$.
  We show that the matching~$M= \bigcup_{v_i \in V'} M^{\text{vc}}_i \cup \bigcup_{v_j \in V\setminus V'}M^{\text{ind}}_j \cup \{\{a_i, b_i\} \mid i\in [n]\}$ is stable in each layer~$i$ with $v_i \in V'$.
  Suppose, for the sake of contradiction, that there is a layer~$i$ with $v_i \in V'$ and 
  there is an unmatched pair~$p=\{x,y\}$ with $x\in U$ and $y\in W$ that is blocking layer~$i$, \emph{i.e.}\ the following holds.
  \begin{align*}
    \text{In layer }i \text{, agent } x \text{ prefers } y \text{ to } M(x) \text{ and } \text{agent } y \text{ prefers } x \text{ to } M(y).
  \end{align*}
  First, $y\notin \{b_j \mid j\in [n]\}$ since $M(b_j)=a_j$ is the most preferred agent of $b_j$.
  Second, $y\notin \{w_j, \overline{w}_j\mid \{v_i, v_j\}\in E\}$ as both $w_j$ and $\overline{w}_j$ already obtain their most preferred agents, namely $\overline{u}_j$ and $u_j$ in layer~$i$.
  Third, $y\notin \{w_j, \overline{w}_j\mid v_j \in V'\}$, as in this way $w_j$ and $\overline{w}_j$ obtain $u_j$ and $\overline{u}_j$ as their partners, and since $V'$ is an independent set, we have that $\{v_i, v_j\}\notin E$, and so $u_j$ and $\overline{u}_j$ are the most preferred partners of $w_j$ and $\overline{w}_j$.

  Now, the only possible choice for $y$ would be agent~$w_j$ or $\overline{w}_j$ such that $\{v_i, v_j\}\notin E$ and $v_j \notin V'$.
  By our construction of $M$ we have that $M(w_j)=\overline{u_j}$ and $M(\overline{w}_j)=u_j$. First, consider the case when $y = w_j$.
  By the preference list of $w_j$ in layer~$i$, if $\{x,y\}$ is a blocking pair, then $x=u_j$.
  However, $u_j$ already obtains its most preferred agent, namely, $\overline{w}_j$; a contradiction.
  Analogously, the case when $y=\overline{w}_j$ also results in a contradiction.
  Summarizing, matching~$M$ is stable in each layer~$i$ with $v_i \in V'$.
  
  For the ``if'' part, assume that our constructed instance admits an \gstable{$\alpha$} matching~$M$.
  We show that the vertex subset~$V'\coloneqq\{v_i \in V\mid M \text{ is stable in layer~}i\}$ is an independent set of size~$k$.
  Clearly, $|V'|=\alpha=k$.
  Suppose, for the sake of contradiction, that $V'$ contains two adjacent vertices~$v_i$ and $v_j$ with $\{v_i, v_j\}\in E$.
  Since $M$ is stable in layer~$i$ and $\{v_i,v_j\}\in E$, 
  by our observation~(\ref{eq:single-layered-global-matching}) we deduce that $M_i^{\text{vc}}\subseteq M$ and $M_j^{\text{ind}} \subseteq M$.
  However, this is a contradiction to $M$ being stable in layer~$j$.
\end{proof}

\subsubsection{Individual stability and pair stability}
Observe that, by \Cref{thm:hardness_individual_stable}, we know that finding an \aistable{$\alpha$} matching with preferences single-layered on one side is NP-hard if $\ell \geq 4$ and $\alpha \leq \lfloor \nicefrac{\ell}{2}\rfloor$.

Next, we establish a relation between the individual (and thus pair) stability for preferences single-layered on one side, and the stability concept in the traditional single-layer setting, but with general (incomplete and possibly intransitive) preferences as studied by Farczadi~et~al.~\cite{FarGeoKoe2016}. This relation allows us to construct a polynomial-time algorithm for finding an \aistable{$\alpha$} (and hence \abstable{$\alpha$}) matching with single-layer preferences on one side when $\alpha \geq \lfloor \nicefrac{\ell}{2}\rfloor + 1$.

\begin{proposition}\label{prop:single-sided-individual-poly}
  If the preferences of the agents on one side are single-layered and $\alpha\ge \lfloor \nicefrac{\ell}{2} \rfloor+1$, then \BSM and \ISM can be solved in $O(\ell\cdot n^2)$~time.
\end{proposition}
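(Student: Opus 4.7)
The plan is to reduce the problem to finding a stable matching in the single-layer model with intransitive pair preferences of \citet{FarGeoKoe2016}, and then to invoke their polynomial-time algorithm as a subroutine.

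First, by \cref{prop:single_layer_equiv} it suffices to solve \ISM, because \abstability{$\alpha$} and \aistability{$\alpha$} coincide when one side has single-layered preferences. Using \cref{prop:individual-stable-alternative-def}, \aistability{$\alpha$} is equivalent to the absence of an unmatched pair~$\{u,w\}$ that is simultaneously $(\ell-\alpha+1)$-dominating $\{u,M(u)\}$ and $(\ell-\alpha+1)$-dominating $\{w,M(w)\}$. Since every~$u\in U$ has the same preference list in all layers, the first of these two conditions collapses to $w\pref_u M(u)$ (it holds either in all~$\ell$ layers or in none). Hence $M$ is \aistable{$\alpha$} if and only if for every unmatched pair~$\{u,w\}$ with $w\pref_u M(u)$, strictly fewer than $\ell-\alpha+1$ layers satisfy $u\pref^{(i)}_w M(w)$.

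The next step is to recast this into the framework of \citet{FarGeoKoe2016}. For each $w\in W$ I would define a binary relation~$B_w\subseteq U\times U$ by placing $(u_1,u_2)\in B_w$ iff $u_1\pref^{(i)}_w u_2$ in at least $\ell-\alpha+1$ layers. Then $M$ is \aistable{$\alpha$} iff no unmatched pair~$\{u,w\}$ satisfies both $w\pref_u M(u)$ and $(u,M(u))\in B_w$, which is exactly the single-layer stability notion of \citet{FarGeoKoe2016} on the instance with strict preferences $(\pref_u)_{u\in U}$ on one side and pair preferences $(B_w)_{w\in W}$ on the other. The assumption $\alpha\ge\lfloor\nicefrac{\ell}{2}\rfloor+1$ enters here and only here: it gives $\ell-\alpha+1\le\lceil\nicefrac{\ell}{2}\rceil$, so that for any distinct $u_1,u_2\in U$ with $a$ layers ranking $u_1$ above $u_2$ (and $\ell-a$ ranking them oppositely), one has $\max(a,\ell-a)\ge\lceil\nicefrac{\ell}{2}\rceil\ge\ell-\alpha+1$; thus $B_w$ is a complete tournament on~$U$, which I expect is the structural regime covered by the polynomial-time subroutine of \citet{FarGeoKoe2016}.

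Finally, I would apply their algorithm to the constructed instance $(\pref_u,B_w)_{u\in U,\,w\in W}$. A query ``$(u,u')\in B_w$?'' is answerable in $O(\ell)$ time by scanning $w$'s preference lists across all layers, and since their subroutine performs $O(n^2)$ such queries in a Gale--Shapley-style propose-and-reject loop, the total running time is $O(\ell\cdot n^2)$. The main obstacle will be (a) verifying faithfulness of the reduction in both directions---every stable matching in the Farczadi~et~al.\ sense corresponds to an \aistable{$\alpha$} matching in the original instance and vice versa---and (b) carefully checking that the tournament property of~$B_w$ is indeed the structural condition under which their algorithm attains the claimed running-time bound.
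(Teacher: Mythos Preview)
Your approach is correct and essentially identical to the paper's own proof, which likewise reduces to the SMG framework of \citet{FarGeoKoe2016} and invokes their polynomial-time result; the only cosmetic difference is that the paper works with the dual relation $R_w=\{(u,u'):w\text{ prefers }u\text{ to }u'\text{ in at least }\alpha\text{ layers}\}$ rather than your $B_w$. Two small corrections: the blocking condition should read $(u,M(w))\in B_w$ rather than $(u,M(u))$ (a type error, since $M(u)\in W$ while $B_w\subseteq U\times U$), and $B_w$ is in general only \emph{total}, not a tournament---e.g., with $\ell=4$, $\alpha=3$ and $w$ preferring $u_1$ to $u_2$ in exactly two layers, both $(u_1,u_2)$ and $(u_2,u_1)$ lie in $B_w$---but totality of $B_w$ is precisely equivalent to asymmetry of $R_w$, and asymmetry is the actual hypothesis of Farczadi et~al.'s Theorem~2, so your verification step~(b) goes through.
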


\begin{proof}
We will reduce our problem to the \textsc{Stable Marriage with General Preferences (SMG)} problem~\cite{FarGeoKoe2016}. Let $I$ be an instance of our problem, let $U \cup W$ denote the set of agents in $I$, and let $\ell$ be the number of layers. We construct an instance $I'$ of \textsc{SMG} as follows. In $I'$ we have the same set of agents as in~$I$. For each agent $u \in U$ we copy the preferences from $I$ to $I'$ (such an agent has the same preference list in all layers). 
The preferences of the agents from $W$ can be arbitrary binary relations on the set~$V$.
Thus, we use an ordered pair~$(u,u')$ of two agents to denote that $u$ is regarded as good as $u'$.
Formally, for each agent~$w \in W$ we define the general preferences of $w$, denoted as $R_w$, as follows:
For each two agents~$u, u' \in U$ we let $(u,u')\in R_w$ if and only if $w$ prefers $u$ to $u'$ in at least $\alpha$ layers. 
First, by our assumption on the value of $\alpha$, we observe that the preferences of the agents from $W$ are asymmetric as, \emph{i.e.}, for each pair of agents $u, u' \in U$ we have that $|\{(u,u'), (u',u)\}\cap R_w| \le 1$.

Now, we will prove that a matching $M$ is \abstable{$\alpha$} in $I$ if and only if it is stable in $I'$ according to the definition of stability by \citet{FarGeoKoe2016}. 
Indeed, according to \citet{FarGeoKoe2016} a pair $\{u, w\}$ is \myemph{SMG-blocking} 
if and only if the following two conditions hold:
\begin{inparaenum}[(1)] \item $w \succ_u M(u)$ and \item $(M(w),u)\notin R_w$. 
\end{inparaenum} 
The first condition is equivalent to saying that $u$ prefers $w$ to $M(u)$ in each layer in $I$. The second condition holds if and only if $w$ prefers $M(w)$ to $u$ in less than $\alpha$ layers. This is equivalent to saying that $w$ prefers $u$ to $M(w)$ in at least $\ell - \alpha + 1$ layers, and so the two conditions are equivalent to saying that $\{u, w\}$ is $(\ell - \alpha + 1)$-blocking $M$ in $I$.  

For asymmetric preferences \textsc{SMG} can be solved in $O(n^2)$ time~\cite[Theorem 2]{FarGeoKoe2016}, where the number of agents is $2 n$.
The reduction to an SMG instance takes $O(\ell \cdot n^2)$~time.
The number of agents in $I$ equals the number of agents in $I'$.
Thus, the problem (deciding \abstability{$\alpha$}) can be solved in $O(\ell \cdot n^2)$~time.
By \cref{prop:single_layer_equiv}, we obtain the same result for \aistability{$\alpha$}.
\end{proof}

\subsection{Uniform preferences in each layer}

In this section we consider the case when for each layer the preferences of all agents from $U$ (resp.\ all agents from $W$) are the same---we call such preferences \myemph{uniform in each layer}. This special case is motivated with the following observation pertaining to \cref{ex:individual}: preferences uniform in a layer can arise if the criterion corresponding to the layer is not subjective. 
For instance, if a layer corresponds to the preferences regarding the wealth of potential partners, it is natural to assume that everyone prefers to be matched with a wealthier partner (and so the preferences of all agents for this criterion are the same); similarly it is natural to assume that the preferences of all hospitals are the same: candidates with higher grades will be preferred by each hospital.   

First, we observe that for uniform preferences no two among the three concepts are equivalent.

\begin{example}\label{ex:uniform}
Consider four agents with the following four layers of uniform preferences:

\noindent 
{\centering
\begin{tikzpicture} 
  \node at (0, -0.5) (Layer1) {\Large $P_1$:};
  \foreach \i in {1, 2}
  {
    \foreach \j/\p/\o in {u/0/left,w/1/right} {
      \node[draw, circle, minimum size=3ex, inner sep=1pt] at (\i*\dist, -\p*\xsc) (n\j\i) {$\j_\i$};
    }
  }

  \foreach \n / \i / \o / \a/\b   in {u1/w/left/1/2, u2/w/left/1/2}{
    \gettikzxy{(n\n)}{\xx}{\yy};
    \node at (\xx,\yy+\ysc*25) {$\i_\a$};
    \node at (\xx,\yy+\ysc*15) {$\i_\b$};
  }  
  \foreach \n / \i / \o / \a/\b  in {%
    w1/u/right/1/2, w2/u/right/1/2}{
    \gettikzxy{(n\n)}{\xx}{\yy};
    \node at (\xx,\yy-\ysc*18) {$\i_\a$};
    \node at (\xx,\yy-\ysc*28) {$\i_\b$};
  }
  \foreach \s/\t in {u1/w1,u2/w2}{
    \draw[blackline] (n\s) -- (n\t);
  } 
\end{tikzpicture}
~~\qquad
\begin{tikzpicture}
  \node at (0, -0.5) (Layer1) {\Large $P_2$:};
  \foreach \i in {1, 2}
  {
    \foreach \j/\p/\o in {u/0/left,w/1/right} {
      \node[draw, circle, minimum size=3ex, inner sep=1pt] at (\i*\dist, -\p*\xsc) (n\j\i) {$\j_\i$};
    }
  }

  \foreach \n / \i / \o / \a/\b   in {u1/w/left/1/2, u2/w/left/1/2}{
    \gettikzxy{(n\n)}{\xx}{\yy};
    \node at (\xx,\yy+\ysc*25) {$\i_\a$};
    \node at (\xx,\yy+\ysc*15) {$\i_\b$};
  }  
  \foreach \n / \i / \o / \a/\b  in {%
    w1/u/right/2/1, w2/u/right/2/1}{
    \gettikzxy{(n\n)}{\xx}{\yy};
    \node at (\xx,\yy-\ysc*18) {$\i_\a$};
    \node at (\xx,\yy-\ysc*28) {$\i_\b$};
  }
  \foreach \s/\t in {u1/w2,u2/w1} {
     \draw[blackline] (n\s) -- (n\t);
   }
\end{tikzpicture}
~~\qquad
\begin{tikzpicture}
  \node at (0, -0.5) (Layer1) {\Large $P_3$:};
  \foreach \i in {1, 2}
  {
    \foreach \j/\p/\o in {u/0/left,w/1/right} {
      \node[draw, circle, minimum size=3ex, inner sep=1pt] at (\i*\dist, -\p*\xsc) (n\j\i) {$\j_\i$};
    }
  }

  \foreach \n / \i / \o / \a/\b   in {u1/w/left/2/1, u2/w/left/2/1}{
    \gettikzxy{(n\n)}{\xx}{\yy};
    \node at (\xx,\yy+\ysc*25) {$\i_\a$};
    \node at (\xx,\yy+\ysc*15) {$\i_\b$};
  }  
  \foreach \n / \i / \o / \a/\b  in {%
    w1/u/right/1/2, w2/u/right/1/2}{
    \gettikzxy{(n\n)}{\xx}{\yy};
    \node at (\xx,\yy-\ysc*18) {$\i_\a$};
    \node at (\xx,\yy-\ysc*28) {$\i_\b$};
  }
   \foreach \s/\t in {u1/w2,u2/w1} {
     \draw[blackline] (n\s) -- (n\t);
   }
 \end{tikzpicture}
 ~~\qquad
 \begin{tikzpicture} 
  \node at (0, -0.5) (Layer1) {\Large $P_4$:};
  \foreach \i in {1, 2}
  {
    \foreach \j/\p/\o in {u/0/left,w/1/right} {
      \node[draw, circle, minimum size=3ex, inner sep=1pt] at (\i*\dist, -\p*\xsc) (n\j\i) {$\j_\i$};
    }
  }

  \foreach \n / \i / \o / \a/\b   in {u1/w/left/2/1, u2/w/left/2/1}{
    \gettikzxy{(n\n)}{\xx}{\yy};
    \node at (\xx,\yy+\ysc*25) {$\i_\a$};
    \node at (\xx,\yy+\ysc*15) {$\i_\b$};
  }  
  \foreach \n / \i / \o / \a/\b  in {%
    w1/u/right/2/1, w2/u/right/2/1}{
    \gettikzxy{(n\n)}{\xx}{\yy};
    \node at (\xx,\yy-\ysc*18) {$\i_\a$};
    \node at (\xx,\yy-\ysc*28) {$\i_\b$};
  }
  \foreach \s/\t in {u1/w1,u2/w2}{
    \draw[blackline] (n\s) -- (n\t);
  } 
\end{tikzpicture}
\par}
\noindent
Let $M_1=\{\{u_1,w_1\},\{u_2,w_2\}\}$ and $M_2=\{\{u_1,w_2\},\{u_2,w_1\}\}$.
We observe that $M_1$ is stable only in layers~$(1)$ and $(4)$ whereas $M_2$ is stable only in layers~$2$ and $3$.
Thus, neither is \gstable{$3$}. 
One can check that neither is \aistable{$3$}.
However, both $M_1$ and $M_2$ are \abstable{$3$}.
To see why this is the case we observe that for $M_1$, both unmatched pairs~$\{u_1,w_2\}$ and $\{u_2,w_1\}$ are each blocking only one layer, 
namely layers~$2$ and $3$, respectively.
Moreover, if we restrict the instance to only the first three layers, then we have the following results:
\begin{compactenum} 
\item $M_1$ is not even \gstable{$2$} but it is \aistable{$2$} as
for each unmatched pair~$p$ with respect to $M_1$, at least one agent in $p$ obtains a partner which is its most preferred agent in at least two layers. 
\item $M_2$ is \gstable{$2$} (it is stable in layers~$2$ and $3$) but it is not \aistable{$2$}.
To see why it is not \aistable{$2$}, we consider the unmatched pair~$\{u_1, w_1\}$.
There are $3-2+1=2$ layers where $u_1$ prefers $w_1$ to its partner~$M_2(u_1)=w_2$ and there are two layers where $w_1$ prefers $u_1$ to its partner~$M_2(w_1)$.
\end{compactenum}
\hfill $\diamond$
\end{example}

\subsubsection{Individual stability}

For preferences that are uniform in each layer,
we find that there is a close relation between \ISM and \textsc{Graph Isomorphism}, the problem of finding an isomorphism between graphs or deciding that there exists none.
Herein, an instance of \textsc{Graph Isomorphism} consists of two undirected graphs~$G$ and $H$ with the same number of vertices and the same number of edges.
We want to decide whether there is an edge-preserving bijection~$f\colon V(G)\to V(H)$ between the vertices, \emph{i.e.}\
for each two vertices~$u,v\in V(G)$ it holds that $\{u,v\}\in E(G)$ if and only if $\{f(u), f(v)\}\in E(H)$.
We call such bijection an isomorphism between $G$ and~$H$.

We explore this relation through the following construction. For an instance $I$ of \ISM we construct two directed graphs $G_I$ and $H_I$ as follows. The agents from $U$ and $W$ will correspond to the vertices in $G_I$ and $H_I$, respectively. For each two vertices $u, u' \in U$ we add to $G_I$ an arc~$(u,u')$ if the agents from $W$ prefer $u$ to $u'$ in at least $(\ell - \alpha + 1)$ layers.
Analogously, for each two agents~$w$ and $w'$, we add to $H_I$  an arc~$(w,w')$ if the agents from $U$ prefer $w$ to $w'$ in at least $(\ell - \alpha + 1)$ layers.
Let $E(G_I)$ and $E(H_I)$ denote the arc sets of $G_I$ and $H_I$, respectively.

We first explain how the so constructed graphs can be used to find an \aistable{$\alpha$} matching in the initial instance $I$, or to claim there is no such a matching.

\begin{proposition}\label{thm:uniform_prefs_isomorphism}
A matching $M$ for instance~$I$ is \aistable{$\alpha$} if and only if the following two properties hold.
\begin{compactenum}
  \item For each two vertices~$u, u' \in U$, it holds that: $(u,u')\in E(G_I)$ implies $(M(u'), M(u)) \notin E(H_I)$.
  \item For each two vertices~$w,w' \in W$, it holds that: $(w,w')\in E(H_I)$ implies $(M(w'), M(w)) \notin E(G_I)$.
\end{compactenum}
\end{proposition}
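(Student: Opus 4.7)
The plan is to reduce everything to \cref{prop:individual-stable-alternative-def}, which already characterises $\alpha$-individual stability in terms of $(\ell-\alpha+1)$-dominating unmatched pairs, and then to translate the domination relation into arcs of $G_I$ and $H_I$ by exploiting the uniformity of preferences within each layer.

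The key bridge I would establish first is the following translation. Because all agents in~$W$ have the same preference list in every fixed layer, for any $w\in W$ and any two distinct $u,u'\in U$ the pair $\{u,w\}$ $(\ell-\alpha+1)$-dominates $\{u',w\}$ if and only if $w$ prefers~$u$ to~$u'$ in at least $\ell-\alpha+1$ layers, which by uniformity is equivalent to $(u,u')\in E(G_I)$; in particular this holds or fails \emph{independently of the choice of~$w$}. The symmetric statement, obtained by swapping the roles of $U$ and $W$, says that for any $u\in U$ and distinct $w,w'\in W$, the pair $\{u,w\}$ $(\ell-\alpha+1)$-dominates $\{u,w'\}$ if and only if $(w,w')\in E(H_I)$.

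For the ``if'' direction I would argue the contrapositive. Assume $M$ is not $\alpha$-individually stable; then by \cref{prop:individual-stable-alternative-def} there is an unmatched pair $\{u,w\}$ with $u\in U$, $w\in W$ that is simultaneously $(\ell-\alpha+1)$-dominating $\{u,M(u)\}$ and $\{w,M(w)\}$. Setting $u':=M(w)$ (so $M(u')=w$ and $u'\neq u$), the first domination, read via the above translation on the $U$-side, says $(M(u'),M(u))=(w,M(u))\in E(H_I)$, while the second domination, read on the $W$-side using $\{w,M(w)\}=\{M(u'),u'\}$, says $(u,u')\in E(G_I)$. Hence condition~(1) fails at $u,u'$. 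An entirely analogous argument with $w':=M(u)$ shows that condition~(2) fails too.

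For the ``only if'' direction, suppose one of the two conditions is violated; by symmetry I may assume condition~(1) fails, so there exist $u,u'\in U$ with $(u,u')\in E(G_I)$ and $(M(u'),M(u))\in E(H_I)$. Since $G_I$ is loopless, $u\neq u'$, so $w:=M(u')$ satisfies $w\neq M(u)$ and $\{u,w\}$ is unmatched. Applying the translation in the reverse direction, $(u,u')\in E(G_I)$ yields that $\{u,w\}$ is $(\ell-\alpha+1)$-dominating $\{u',w\}=\{M(w),w\}$, and $(M(u'),M(u))\in E(H_I)$ yields that $\{u,w\}$ is $(\ell-\alpha+1)$-dominating $\{u,M(u)\}$. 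By \cref{prop:individual-stable-alternative-def}, $M$ is not $\alpha$-individually stable.

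There is no serious combinatorial obstacle here: once the translation between $\beta$-domination of a ``common-endpoint'' pair and arcs in $G_I$ (resp.\ $H_I$) is spelled out carefully, both directions of the equivalence fall out by direct substitution. The only point that requires care is matching the orientations of the arcs with the direction of the preference so that the witness unmatched pair is produced with the right endpoints; this is the reason for introducing the substitution $u'=M(w)$ (resp.\ $w=M(u')$) that ties the two conditions together.
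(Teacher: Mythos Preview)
Your proposal is correct and follows essentially the same route as the paper's proof: both reduce the statement to \cref{prop:individual-stable-alternative-def} and then translate $(\ell-\alpha+1)$-domination of common-endpoint pairs into arcs of $G_I$ and $H_I$ via the uniformity of preferences. Your presentation is slightly tidier in that you isolate the translation as a standalone observation before running the two directions, whereas the paper unpacks it inline; and you note that when condition~(1) fails the same witness also violates condition~(2), which is true but more than is needed for the contrapositive.
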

\begin{proof}
For the ``only if'' direction, assume that $M$ is \aistable{$\alpha$}.
Towards a contradiction, suppose that one of both properties stated above does not hold.
That is, there exist two vertices $u, u' \in U$ such that $(u,u')\in E(G_I)$ and $(M(u'), M(u))\in E(H_I)$ or there exist two vertices~$w,w'\in W$ such that $(w,w')\in E(H_I)$ and $(M(w'), M(w))\in E(G_I)$.
For the first case, 
it follows that there is a subset~$S$ of at least $\ell-\alpha+1$ layers such that all agents (including $M(u')$) in $W$ prefer $u$ to $u'$ in each layer of $S$,
and there is a (possibly different) subset~$R$ of at least $\ell-\alpha+1$ layers such that all agents (including $u$) in $U$ prefer $M(u')$ to $M(u)$ in each layer of $R$.
This means that the pair~$\{u, M(u')\}$ is $(\ell-\alpha+1)$-dominating $\{u', M(u')\}$ and is $(\ell-\alpha+1)$-dominating~$\{u, M(u)\}$---a contradiction to \cref{prop:individual-stable-alternative-def}.
Analogously, the second case also leads to a contradiction to \cref{prop:individual-stable-alternative-def} regarding the unmatched pair~$\{M(w'), w\}$.

For the ``if'' direction, assume that both properties hold. 
Towards a contradiction and by \cref{prop:individual-stable-alternative-def}, suppose that there is an unmatched pair, 
call it $\{u, w\}$ with $u\in U$ and $w\in W$, which is $(\ell-\alpha+1)$-dominating $\{u, M(u)\}$
and is $(\ell-\alpha+1)$-dominating $\{w, M(w)\}$.
By our construction of $G_I$ and $H_I$, it follows that $(w, M(u))\in E(H_I)$ and $(u, M(w))\in E(G_I)$---a contradiction to the second property.
\end{proof}

Using a construction by \citet{McG1953}, given an arbitrary directed graph, we can indeed construct multi-layer preferences that induce this graph.
\begin{lemma}\label{lemma:uniform_prefs_isomorphism_complete}
For each two directed graphs $G_I$ and $H_I$ with $m$ arcs each there exists an instance with $\ell=2m$ layers of preferences that induces $G_I$ and $H_I$ via McGarvey's construction, where $m$ denotes the number of arcs in $G_I$ (and thus in $H_I$). 
\end{lemma}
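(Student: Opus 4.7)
The plan is to construct the preference profiles independently on each side of the bipartition by applying McGarvey's classical tournament construction twice. Since the preferences are uniform in each layer, each layer $i \in [2m]$ is fully specified by a pair consisting of a single linear order on $U$ (shared by all agents in $W$) together with a single linear order on $W$ (shared by all agents in $U$); consequently, the sequence of orders on $U$ across the $\ell = 2m$ layers can be designed independently of the sequence of orders on $W$.

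First, I would apply McGarvey's construction to $G_I$. Enumerate the arc set as $E(G_I) = \{(a_1, b_1), \ldots, (a_m, b_m)\}$, and for the $k$-th arc introduce the two linear orders on $U$
\[
    a_k \succ b_k \succ \pi_k
    \qquad \text{and} \qquad
    \overleftarrow{\pi_k} \succ a_k \succ b_k,
\]
where $\pi_k$ is an arbitrary linear order on $U \setminus \{a_k, b_k\}$ and $\overleftarrow{\pi_k}$ denotes its reverse. A direct case check shows that these two orders contribute a net $+2$ to the pair $(a_k, b_k)$ and a net $0$ to every other ordered pair: the orders agree on $(a_k, b_k)$, are mutually reverse on all pairs contained in $U\setminus\{a_k,b_k\}$, and disagree on pairs with exactly one endpoint in $\{a_k, b_k\}$. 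Summing over all $m$ arcs, for any pair $(u, u')\in U\times U$ we obtain exactly $m+1$, $m-1$, or $m$ out of the $2m$ orders ranking $u$ above $u'$, according as $(u, u') \in E(G_I)$, $(u', u) \in E(G_I)$, or neither. Analogously, apply McGarvey's construction to $E(H_I)$ to obtain $2m$ linear orders on $W$.

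Second, I would assemble the multi-layer instance: in layer $i \in [2m]$, every agent in $U$ uses the $i$-th order on $W$, and every agent in $W$ uses the $i$-th order on $U$. Uniformity of preferences within each layer is then immediate by construction. By the vote-count above, the set of pairs $(u, u')$ for which the $W$-side preferences rank $u$ above $u'$ in strictly more than $m$ layers is exactly $E(G_I)$; this coincides with the strict-majority threshold $\ell - \alpha + 1 = m + 1$ relevant at $\alpha = \nicefrac{\ell}{2}$, so the directed graph induced on $U$ is exactly $G_I$. The symmetric argument for the $U$-side yields $H_I$.

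The main subtlety in executing the plan is the usual prerequisite of McGarvey's construction that the target relation be asymmetric (i.e., contain no $2$-cycles); this is automatic in the parameter regime of interest since, by a simple pigeonhole argument on the $\ell$ layers, both $(u, u')$ and $(u', u)$ cannot simultaneously be obtained as strict majorities when $\alpha = \nicefrac{\ell}{2}$. Beyond this observation, the construction is a routine application of McGarvey, and correctness reduces to the vote-count computation above.
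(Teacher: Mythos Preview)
Your proposal is correct and follows essentially the same approach as the paper: apply McGarvey's two-orders-per-arc construction independently on each side, then pair up the resulting $2m$ orders on $U$ with the $2m$ orders on $W$ layer by layer. Your vote count of $m+1$, $m$, $m-1$ is right (the paper states $m+2$, $m$, $m-2$, which appears to be a slip), and you are also more careful than the paper in flagging the asymmetry prerequisite and explaining why it is harmless at $\alpha=\nicefrac{\ell}{2}$.
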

\begin{proof}
  Our profile uses the vertex sets of $G_I$ and $H_I$ as the two disjoint subsets of agents, denoted as $U$ and~$W$, respectively. 
  For each arc in the graphs~$G_I$ (resp.\ $H_I$) we construct two layers of preference lists for the agents from $W$ (resp.\ $U$) to ``encode'' it.
  We describe how to construct the preference lists for an arc in $G_I$, say $(u,u')$.
  The construction for the arcs in $H_I$ works analogously.
  To encode the arc~$(u,u')$, we construct two layers where the preference lists for the agents in $W$ are as follows; denote by $\overrightarrow{X}$ some arbitrary but fixed order of the agents other than $u$ and $u'$ and let $\overleftarrow{X}$ be the reverse of $\overrightarrow{X}$.
  \begin{align*}
   & \text{One layer}\colon u \pref u' \pref \overrightarrow{X}\text{.}\\
   & \text{The other layer}\colon \overleftarrow{X} \pref u \pref u'\text{.}
  \end{align*}
  Finally, we set $\alpha = \nicefrac{\ell}{2}=m$.
  Note that for each two agents~$u$ and $u'$ in $G_I$ all agents from $W$ prefer $u$ to~$u'$ in either exactly $m-2$ layers, or exactly $m$ layers, or exactly $m+2$ layers.
  It is straightforward to see that our constructed instance induces the two input graphs $G_I$ and~$H_I$.
\end{proof}

\begin{figure}[t!]
  \begin{center}
    \hbox{\hspace{0.0cm}\includegraphics[scale=0.65]{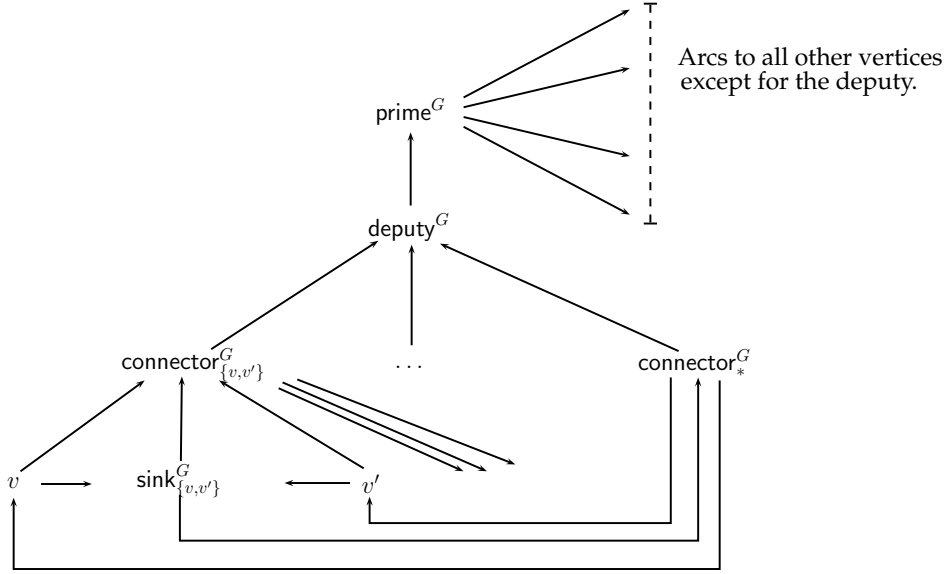}}
  \end{center}
  \caption{Construction of the graphs $G_I$ and $H_I$ in the proof of \cref{thm:uniform_istable_matching_as_hard_as_isomorphism}. The diagram shows the case $\{v, v'\} \in G$.}
  \label{fig:isomorphism_reduction}
\end{figure}

Using \cref{thm:uniform_prefs_isomorphism} and \cref{lemma:uniform_prefs_isomorphism_complete} as tools, we can show that for the preferences of the agents being uniform in all layers, the problem of finding an \aistable{$\alpha$} matching is at least as hard as the graph isomorphism problem.

\begin{theorem}\label{thm:uniform_istable_matching_as_hard_as_isomorphism}
  \textsc{Graph Isomorphism} is polynomial-time reducible to \ISM, where the preferences of the agents are uniform in all $\ell$ layers, $\ell$ is even, and $\alpha=\nicefrac{\ell}{2}$.
\end{theorem}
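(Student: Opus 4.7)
The plan is to reduce \textsc{Graph Isomorphism} to \ISM with uniform preferences, $\ell$ even, and $\alpha=\nicefrac{\ell}{2}$, using \cref{thm:uniform_prefs_isomorphism} and \cref{lemma:uniform_prefs_isomorphism_complete} as the central tools. Given an instance $(G,H)$ of \textsc{Graph Isomorphism} with $|V(G)|=|V(H)|=n$, I first check whether $|E(G)|=|E(H)|$; if not, $G\not\cong H$ and I output a trivial no-instance. Otherwise, writing $m=|E(G)|=|E(H)|$, it suffices to construct two directed graphs $G_I$ on a vertex set $U$ and $H_I$ on a vertex set $W$, with $|U|=|W|$ and $|E(G_I)|=|E(H_I)|$, such that a bijection $M\colon U\to W$ satisfies the characterization of \cref{thm:uniform_prefs_isomorphism} if and only if $M$---possibly after restricting to a distinguished subset of ``real'' vertices---is an isomorphism from $G$ to $H$. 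Then \cref{lemma:uniform_prefs_isomorphism_complete} yields a uniform preference profile with $\ell=2|E(G_I)|$ layers and $\alpha=\ell/2$ whose \aistable{$\alpha$} matchings correspond exactly to the sought isomorphisms.

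For the core encoding I would let $V(G)\subseteq U$ and $V(H)\subseteq W$, put into $G_I$ both arcs $(v,v')$ and $(v',v)$ for every edge $\{v,v'\}\in E(G)$, and put into $H_I$ both arcs $(w,w')$ and $(w',w)$ for every non-edge $\{w,w'\}\notin E(H)$ with $w\neq w'$. Under \cref{thm:uniform_prefs_isomorphism} this collapses the stability property to: for every $\{v,v'\}\in E(G)$, the image $\{M(v),M(v')\}$ must lie in $E(H)$. Because $M$ is a bijection, this forward implication is an injection from $E(G)$ into $E(H)$, and the preprocessing step $|E(G)|=|E(H)|$ upgrades this injection to a bijection between the two edge sets, which is exactly the defining property of an isomorphism from $G$ to $H$.

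The main obstacle is satisfying the counting constraint $|E(G_I)|=|E(H_I)|$ demanded by \cref{lemma:uniform_prefs_isomorphism_complete}, because the natural counts $2m$ and $n(n-1)-2m$ agree only in the accidental case $m=n(n-1)/4$. I plan to remedy this by enlarging both $U$ and $W$ with a polynomial number of dummy vertices and adding two families of auxiliary arcs: \emph{padding arcs} among the dummies, inserted into whichever of $G_I,H_I$ has fewer arcs until the totals agree (using only one directed arc per dummy pair so that \cref{lemma:uniform_prefs_isomorphism_complete} can induce them cleanly); and \emph{forcing arcs} of the form $(u,d)$ in $G_I$ for each real $u$ and dummy $d$ on the $U$-side, and symmetrically $(w,d')$ in $H_I$ on the $W$-side, designed so that any matching that pairs a real with a dummy across the sides must---by counting---also pair a dummy with a real in the opposite direction, and the two corresponding forcing arcs then directly witness a violation of \cref{thm:uniform_prefs_isomorphism}. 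The hardest part will be to verify that these gadgets are simultaneously tight and permissive: any matching extending a genuine isomorphism remains \aistable{$\alpha$}, while every \aistable{$\alpha$} matching is forced to pair reals with reals and therefore restricts to a bijection $V(G)\to V(H)$ that is itself an isomorphism.
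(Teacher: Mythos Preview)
Your overall strategy---build $G_I$ and $H_I$, invoke \cref{thm:uniform_prefs_isomorphism}, and then realise the digraphs via \cref{lemma:uniform_prefs_isomorphism_complete}---is exactly the route the paper takes. But your core encoding cannot be realised in the target setting. You put \emph{both} arcs $(v,v')$ and $(v',v)$ into $G_I$ for every edge of $G$, and both arcs into $H_I$ for every non-edge of $H$. With $\alpha=\ell/2$ the defining threshold for the induced digraph is $\ell-\alpha+1=\ell/2+1$: the arc $(u,u')$ appears in the induced $G_I$ iff the $W$-agents prefer $u$ to $u'$ in at least $\ell/2+1$ layers. Since the two counts for $(u,u')$ and $(u',u)$ sum to exactly $\ell$, they cannot both reach $\ell/2+1$; hence no uniform instance with $\alpha=\ell/2$ can induce a digraph containing a $2$-cycle, and \cref{lemma:uniform_prefs_isomorphism_complete} (whose McGarvey construction yields counts $m-1$, $m$, or $m+1$) cannot produce your $G_I,H_I$. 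You are evidently aware of this constraint for the padding arcs (``only one directed arc per dummy pair''), but it bites precisely at the heart of your edge/non-edge encoding. Replacing the $2$-cycles by a single orientation does not help either: the resulting stability condition then depends on how $M$ interacts with the chosen orientation and no longer forces edges to edges.

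This obstruction is exactly why the paper's construction is so much more elaborate. The paper also builds oriented $G_I,H_I$ and applies \cref{lemma:uniform_prefs_isomorphism_complete}, but it encodes each unordered pair $\{v,v'\}$ \emph{indirectly}, via a dedicated ``connector'' vertex together with a ``sink'' (if $\{v,v'\}\in E(G)$) or a ``source'' (if not), wired by arcs that never form a $2$-cycle. Two further global anchors (``prime'' and ``deputy'') give $G_I$ and $H_I$ enough rigidity that any matching satisfying \cref{thm:uniform_prefs_isomorphism} is forced to respect vertex types---prime to prime, connectors to connectors, sinks/sources to sinks/sources---so that non-special vertices go to non-special vertices and the restriction is a genuine graph isomorphism. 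Your real/dummy forcing idea is in the right spirit, but to make the reduction go through you would need to replace the symmetric $2$-cycle encoding of adjacency by an oriented gadget of this kind.
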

\begin{proof}
Let $G, H$ be two undirected graphs that form an instance of \textsc{Graph Isomorphism}. Without loss of generality, assume that
$n\geq 4$ (number of vertices) and $m\geq 3$ (number of edges).
From $G$ and $H$ we construct an instance $I$ of the problem of deciding whether there exists an \aistable{$\alpha$} matching for preferences being uniform in all $\ell$ layers, where $\alpha=\nicefrac{\ell}{2}$.
By \cref{lemma:uniform_prefs_isomorphism_complete} we can describe this instance by providing the corresponding directed graphs $G_I$ and $H_I$. 
We show how to construct $G_I$ from $G$.
The graph~$H_I$ is constructed from $H$ analogously.

\newcommand{\prim}{\mathsf{prime}}
\newcommand{\deputy}{\mathsf{deputy}}
\newcommand{\connector}{\mathsf{connector}}
\newcommand{\sink}{\mathsf{sink}}
\newcommand{\source}{\mathsf{source}}
\newcommand{\Source}{\mathsf{SOURCE}}
\newcommand{\Sink}{\mathsf{SINK}}
\newcommand{\Connector}{\mathsf{CONNECTOR}}

Let $V$ and $E$ denote the sets of vertices and edges in $G$.
We copy all vertices from $G$ to $G_I$ (we will refer to these vertices as \myemph{non-special}) and we additionally introduce five types of ``special'' vertices, which we call $\prim^{G}$, $\deputy^G$, $\binom{n}{2}+1$ $\connector{s}$, $m$ $\sink{s}$, and $\binom{n}{2} - m$ $\source{s}$. 
We specify the connectors, the sinks, and the sources formally:
With each pair of vertices $p = \{v, v'\}\in \binom{V}{2}$ we associate one connector, 
denoted as $\connector^{G}_{p}$. 
One remaining connector is distinguished and denoted as $\connector^G_*$. 
Further, for each edge~$e \in E$ we have one sink, denoted as $\sink^{G}_e$, 
and for each non-edge $e = \{v, v'\} \notin E$ we have a source, denoted as~$\source^{G}_e$. 
The arcs in $G_I$ are constructed as follows.
Let $\Source^G=\{\source_{e} \mid e\in \binom{V}{2} \setminus E\}$, 
$\Sink^G=\{\sink_{e}\mid e\in E\}$, and $\Connector^G=\{\connector^G_e\mid e\in \binom{V}{2}\}$ denote 
the set of all sources,
the set of all sinks, 
and the set of non-special connectors, respectively:
\begin{compactitem}[$\bullet$]
\item For each pair $e = \{v, v'\} \subseteq V$ of vertices, we do the following:
\begin{compactenum}
  \item If $\{v,v'\}\in E$, then we add to $G_I$ the following five arcs:
  $(v,\sink^{G}_e)$, $(v',\sink^{G}_e)$, $(v,\connector^{G}_e)$, $(v',\connector^{G}_e)$, and $(\sink^{G}_e, \connector^{G}_e)$. Otherwise, we add to $G_I$ the following five arcs:
  $(\source^{G}_e,v)$, $(\source^G_e,v')$,  $(v,\connector^{G}_e)$, $(v',\connector^{G}_e)$, and $(\source^{G}_e, \connector^{G}_e)$.
  \item For each other non-special vertex~$u\in V\setminus \{v,v'\}$, we add to $G_I$ the arc~$(\connector^{G}_e, u)$.
  \item For each other source or sink~$x\in \Source^G\cup \Sink^G\setminus \{\sink^G_e, \source^G_e\}$, we add to $G_I$ the arc~$(\connector^{G}_e, x)$.
\end{compactenum}
\item For each source and each sink $x\in \Source^G\cup \Sink^G$ we add to $G_I$ an arc~$(x, \connector^G_*)$. 
For each non-special vertex~$v\in V$, we additionally add to $G$ the arc~$(\connector^G_*, v)$.
\item For each vertex~$x\in V \cup \Source^G\cup \Sink^G\cup \Connector^G \cup \{\connector^G_*\}$ except for the deputy, 
we add to $G_I$ an arc $(\prim^G, x)$.
\item For each vertex~$x\in \Source^G\cup \Sink^G \cup \{\prim^G\}$ we add to $G_I$ the arc~$(\deputy^G, x)$.
For each connector~$y \in \Connector^G\cup \{\connector^G_*\}$ we add to $G_I$ the arc~$(y, \deputy^G)$.
\end{compactitem}
This completes the construction of $G_I$ (crucial elements of this construction are illustrated in \cref{fig:isomorphism_reduction}).
Observe a useful property: 
Each vertex in $G_I$ except $\prim^G$ has at least two incoming arcs; $\prim^G$ has exactly one incoming arc, namely from $\deputy^G$.
Indeed, since $G_I$ has at least three edges (and so at least three connectors), $\deputy^G$
has at least three incoming arcs from the connectors. 
Each connector $\connector^G_{\{v,v'\}}$ has incoming arcs from $v$ and $v'$; 
The special connector~$\connector^G_*$ has arcs from all sources and sinks (there are at least six of them). 
For each sink, source, and each non-special vertex there are at least two incoming arcs, one from the prime and the other from the deputy.    

The arcs in $H_I$ are constructed analogously.
We claim that there is an \aistable{$\alpha$} matching in the instance $(G_I, H_I)$ with $\alpha=\nicefrac{\ell}{2}$
if and only if there is an isomorphism between $G$ and~$H$. 

\medskip

($\Leftarrow$) For the ``if'' direction, assume that there is an isomorphism~$f\colon V(G)\to V(H)$ between $G$ to $H$.
We show that the following matching~$M$ is \aistable{$m$}. 
For each vertex~$v\in V(G)$ let $M(v)=f(v)\in V(H)$. 
Further, for each source or sink~$s^G_{\{v,v'\}} \in \Source^G\cup \Sink^G$ we let $M\left(s^G_{\{v, v'\}}\right)=s^H_{\{f(v), f(v')\}}$. 
Why should such a matching exist? If $s^G_{\{v, v'\}}$ is a sink, then $\{v, v'\} \in E(G)$; since $f$ is an isomorphism, we have that $\{f(v), f(v')\}\in E(H)$ and so the sink $s^H_{\{f(v), f(v')\}}$ exists.
Analogously,  if $s^G_{\{v, v'\}}$ is a source, then $s^H_{\{v,v'\}}$ is also a source and exists in $H$.
Similarly, we match the corresponding connectors: for each pair of vertices $\{v,v'\}$ we let
$M\left(\connector^G_{\{v, v'\}}\right)=\connector^H_{\{f(v), f(v'))\}}$,
and we let $M(\connector^G_*)=\connector^H_*$. 
Finally, we let $M(\deputy^G)=\deputy^H$ and $M(\prim^G)=\prim^H$.
Since $f$ is an isomorphism between $G$ and $H$, 
by the construction of $G_I$ and $H_I$, an arc~$(x,y)$ in $G_I$ (resp.~$H_I$) implies no arc $(M(x), M(y))$ in $H_I$ (resp.\ $G_I$). 
By \Cref{thm:uniform_prefs_isomorphism}, $M$ is \aistable{$m$}. 

\medskip

($\Rightarrow$) For the ``only if'' direction, assume that $M$ is an \aistable{$m$} matching for $(G_I, H_I)$. 
We start by showing $M(\prim^G)=\prim^H$. 
For the sake of contradiction, suppose that this is not the case, and let $y = M(\prim^G) \in V(H_I)$, with $y \neq \prim^H$. 
Since $y\neq \prim^H$, there are at least two vertices~$w_1,w_2\in V(H_I)$ with $(w_1,y), (w_2,y)\in E(H_I)$. 
Thus, one of them is not matched to $\deputy^G$, say $M(w_1) \neq \deputy^G$.
Since $(w_1, y) \in E(H_I)$, by \cref{thm:uniform_prefs_isomorphism}, it must hold that $(\prim^G, M(w_1))=(M(y),M(w_1))\notin E(G_I)$.
This is a contradiction with $M(w_1) \neq \deputy^G$ since $\prim^{G}$ has outgoing arcs to all vertices but $\deputy^G$.

Second, we show that $M(\deputy^G)=\deputy^H$.
The reason for this is that using \cref{thm:uniform_prefs_isomorphism} for the arc~$(\deputy^G, \prim^G)$ implies that $(M(\prim^G), M(\deputy^G))=(\prim^H, M(\deputy^G))$ cannot exists in~$H_I$.
By construction, $\prim^H$ has an arc to all vertices except $\deputy^H$.
Thus, $M(\deputy^G)=\deputy^H$.
By an analogous reasoning focusing on the deputy vertex, we deduce that each connector in $G_I$ must be matched with some connector in $H_I$. 
Further, for each source or sink~$s^{G}\in \Source^G\cup \Sink^G$ 
since $(s^G,\connector^G_*)\in E(G_I)$, by \cref{thm:uniform_prefs_isomorphism},
it follows that $(M(\connector^G_*),M(s^G))$ does not exist in~$H_I$; thus there are at least 6 forbidden arcs from $M(\connector^G_*)$ to some vertices which are sinks, sources, or non-special vertices.
This means that $\connector^G_*$ can only be matched to $\connector^H_*$ because each other connector in~$H_I$ has all but three outgoing arcs to such vertices.
Thus, we deduce that each source/sink must be matched with a source/sink only, and consequently, that non-special vertices in $G_I$ are matched with non-special vertices in $H_I$.
We show that the bijection~$f\colon V(G_I)\to V(H_I)$ derived from $M$ by setting $f(v)=M(v)$ for each $v \in V$ is an isomorphism between $G$ and $H$.

Consider an arbitrary pair $e = \{v,v'\}\subseteq V(G)$ of non-special vertices and its corresponding connector $\connector^G_e$. 
We know that $M(\connector^G_e)\in \Connector^H$, say $M(\connector^G_e)=\connector^H_{\{w,w'\}}$.
We claim that $\{w,w'\}=\{M(v), M(v')\}$.
Towards a contradiction, suppose that $M(v)\notin \{w, w'\}$.
Then, $(v, \connector^G_e)\in E(G_I)$ and $(\connector^G_{\{w,w'\}}, M(v))\in E(H_I)$, a contradiction to \cref{thm:uniform_prefs_isomorphism}.
Finally, we show that $e\in E(G)$ if and only if $\{f(v), f(v')\}=\{M(v), M(v')\}=\{w,w'\}\in E(H)$.
If $e\in E(G)$, then $\sink^{G}_e$ exists and $(\sink^{G}_e, \connector^G_e)\in E(G_I)$.
By \cref{thm:uniform_prefs_isomorphism},
it follows that $(\connector^{H}_{\{w,w'\}}, M(\sink^{G}_e))=(M(\connector^G_e), M(\sink^{G}_e))\notin E(H_I)$.
Since $M(\sink^G_e)\in \Source^H\cup \Sink^H$, we have that $M(\sink^G_e)=s^H_{\{w,w'\}}$, where $s$ is a source or a sink.
Further, since $(v, \sink^G_e) \in E(G_I)$ we infer that $(s^H_{\{w,w'\}}, w) \notin E(G_I)$, and so $s$ is a sink, and $w$ and $w'$ are connected in $H$.
Similarly, if $e\notin E(G)$, then we deduce that $\source^G_e$ and $\source^H_{\{w,w'\}}$ exist,
and thus $\{w,w'\}\notin E(H)$.
\end{proof}

\cref{thm:uniform_istable_matching_as_hard_as_isomorphism} implies that 
developing a polynomial-time algorithm for our problem is currently out of scope, since the question of whether \textsc{Graph Isomorphism} is solvable in polynomial time is still open.
Besides~\cref{thm:uniform_istable_matching_as_hard_as_isomorphism} there are other interesting implications of \cref{thm:uniform_prefs_isomorphism} and \cref{lemma:uniform_prefs_isomorphism_complete}.
For $\alpha \geq \nicefrac{\ell}{2}+1$ our problem can be reduced to the \textsc{Tournament Isomorphism} problem, which, given two tournament graphs, asks whether there is an arc-preserving bijection between the vertices of the two tournaments~\cite{BabEug83, Schweitzer17, Wagner07}.
 \textsc{Tournament Isomorphism} has been studied extensively in the literature (for a more detailed discussion see, e.g.,~\cite{BabEug83, Wagner07, Schweitzer17}), but to the best of our knowledge it is still open whether it is solvable in polynomial time~\cite{BabEug83}. 
The best known algorithm solving \textsc{Tournament Isomorphism} runs in $n^{O(\log n)}$ time, where $n$ denotes the number of vertices.

\begin{corollary}\label{cor:uniform-alpha>l/2-individual-subexp}
If the preferences of the agents are uniform in all $\ell$ layers and $\alpha \geq \nicefrac{\ell}{2}+1$,
then \ISM can be solved in $n^{O(\log n)} + O(\ell \cdot n^2)$ time, where $n$ denotes the number of agents.
\end{corollary}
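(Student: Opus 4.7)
The plan is to build on \Cref{thm:uniform_prefs_isomorphism}, which characterizes \aistable{$\alpha$} matchings via arc-preservation between the auxiliary digraphs $G_I$ and $H_I$. First, I would construct $G_I$ and $H_I$ in $O(\ell \cdot n^2)$ time by scanning each layer once to compute, for every ordered pair $(u,u') \in U \times U$ and $(w,w') \in W \times W$, the counts $p_W(u,u')$ and $p_U(w,w')$ of layers in which the first agent is ranked above the second. The arcs of $G_I$ and $H_I$ are then placed according to the threshold $\ell - \alpha + 1$.

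The crucial observation is that for $\alpha \geq \nicefrac{\ell}{2}+1$ we have $\ell - \alpha + 1 \leq \nicefrac{\ell}{2}$. Since for every two distinct agents $u,u' \in U$ we have $p_W(u,u') + p_W(u',u) = \ell$, at least one of the two counts is $\geq \lceil \nicefrac{\ell}{2} \rceil \geq \ell - \alpha + 1$. Therefore every unordered pair of distinct vertices in $G_I$ carries at least one arc, i.e., $G_I$ is semi-complete; by the same reasoning, so is $H_I$.

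Next, I would argue that if either $G_I$ or $H_I$ contains a $2$-cycle (a pair of distinct vertices with \emph{both} arcs present), then no \aistable{$\alpha$} matching can exist. Indeed, assume $\{u,u'\}$ is such a 2-cycle in $G_I$ and let $M$ be any matching. Applying property~(1) of \Cref{thm:uniform_prefs_isomorphism} to both $(u,u')$ and $(u',u)$ yields $(M(u'),M(u)) \notin E(H_I)$ and $(M(u),M(u')) \notin E(H_I)$; this contradicts the semi-completeness of $H_I$. A symmetric argument, using property~(2), handles 2-cycles in $H_I$. These checks take only $O(n^2)$ time; if any 2-cycle is found, output \textsc{no}.

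Otherwise, $G_I$ and $H_I$ are semi-complete and 2-cycle-free, hence tournaments. The tournament property of $H_I$ turns property~(1) into $(u,u') \in E(G_I) \Rightarrow (M(u),M(u')) \in E(H_I)$, and symmetrically property~(2) gives the reverse implication, so that $M$ is precisely an arc-preserving bijection between the tournaments $G_I$ and $H_I$. Invoking the quasi-polynomial algorithm for \textsc{Tournament Isomorphism}~\cite{BabEug83} (in its constructive form, which outputs an isomorphism when one exists) then yields an \aistable{$\alpha$} matching, or certifies that none exists, in $n^{O(\log n)}$ time. Combined with the $O(\ell \cdot n^2)$ construction cost, this gives the claimed bound. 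The main subtlety is the 2-cycle infeasibility step: it is tempting to try to resolve 2-cycles by tie-breaking, but the two-sided semi-completeness of $G_I$ and $H_I$ is what forces 2-cycles to be outright infeasibility witnesses, and this is what lets us avoid a more general (and possibly costlier) semi-complete digraph isomorphism procedure.
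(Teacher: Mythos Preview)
Your proposal is correct and follows essentially the same approach as the paper's own proof: construct $G_I$ and $H_I$, observe that $\alpha \geq \nicefrac{\ell}{2}+1$ forces semi-completeness, rule out instances with a $2$-cycle as infeasible via \Cref{thm:uniform_prefs_isomorphism}, and then reduce the remaining tournament case to \textsc{Tournament Isomorphism} with the Babai--Luks bound. Your write-up is in fact a bit more explicit than the paper in two places---you spell out why a $2$-cycle forces infeasibility via semi-completeness of the other side, and you note that the tournament-isomorphism algorithm must be invoked in its constructive form since \ISM is a search problem---but these are elaborations of the same argument, not a different route.
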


\begin{proof}
  Assume that $\alpha \geq \nicefrac{\ell}{2}+1$ and construct the two directed graphs as we did for \cref{thm:uniform_prefs_isomorphism}.
  Since $\ell-\alpha+1\leq\nicefrac{\ell}{2}$, for each pair of vertices~$x$ and $y$ in $G_I$ (resp.\ $H_I$) at least one arc from $(x,y)$ and $(y,x)$ exists. 
  Note that such graphs are not necessarily tournaments, where for each two vertices~$x$ and~$y$ exactly one of $(x,y)$ and $(y,x)$ exists.
  But we can deal with the case when one of the graphs is not a tournament.
  If both $(x,y)$ and $(y,x)$ exist in $G_I$ or in $H_I$, then by \cref{thm:uniform_prefs_isomorphism} no \aistable{$\alpha$} matchings exists.
  Thus, the only non-trivial case is when the graphs $G_I$ and $H_I$ are tournaments.
  Moreover, in such a case, the condition from \cref{thm:uniform_prefs_isomorphism} can be reformulated as 
  $(x,y) \in G_I$ if and only if $(M(x),M(y)) \in H_I$ and this is the condition that $M$ is a tournament isomorphism between $G_I$ and~$H_I$. Consequently, for $\alpha \geq \nicefrac{\ell}{2}+1$ the problem of finding an \aistable{$\alpha$} matching can be reduced to \textsc{Tournament Isomorphism}.
  Note that the number of vertices in the constructed graphs equals the number of agents in our problem.
  By the result of Babai and Luks~\cite{BabEug83}, we obtain an algorithm for our problem with the desired running time.
\end{proof}

\subsubsection{Global stability}
There exists a fairly straightforward polynomial-time algorithm for finding \gstable{$\alpha$} matchings.

\begin{proposition}\label{prop:uniform-global-p}
If the preferences of the agents are uniform in all~$\ell$ layers, then 
\GSM can be solved in $O(\ell\cdot n)$~time, 
where $n$ denotes the number of agents.
\end{proposition}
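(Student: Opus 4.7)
The plan is to exploit two facts: in each layer, uniform preferences uniquely determine a stable matching, and \gstability{$\alpha$} therefore reduces to counting the mode of the resulting $\ell$ candidate matchings, which can be done in linear time by radix sort.

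First, I would argue uniqueness for a single uniform layer~$i$. Let $\sigma_U^{(i)}$ and $\sigma_W^{(i)}$ be the common preferences of $U$ and $W$ in layer~$i$, and let $u^\ast$ and $w^\ast$ be their respective top agents. Any layer-$i$ stable matching must contain $\{u^\ast,w^\ast\}$: otherwise this pair would block, since $w^\ast$ is the top of $\sigma_U^{(i)}$ so $u^\ast$ prefers $w^\ast$ over any other partner, and symmetrically for $w^\ast$. Removing this pair and recursing yields, in $O(n)$ time, the unique layer-$i$ stable matching $M_i$, which simply pairs the $k$-th top-ranked agent of $U$ (under $\sigma_W^{(i)}$) with the $k$-th top-ranked agent of $W$ (under $\sigma_U^{(i)}$) for every $k\in[n]$.

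Next, I would translate \cref{def:global}: a matching $M$ is \gstable{$\alpha$} iff there is a size-$\alpha$ set $S\subseteq[\ell]$ such that $M$ is stable in every layer of $S$. By the uniqueness just established, $M$ is stable in layer~$i$ iff $M=M_i$, so $M$ is \gstable{$\alpha$} iff $M$ appears in the multiset $\{M_1,\ldots,M_\ell\}$ with multiplicity at least $\alpha$. It therefore suffices to find the mode of this multiset and compare its frequency with~$\alpha$.

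Finally, the implementation. Computing $M_1,\ldots,M_\ell$ requires reading the two uniform lists per layer and pairing by rank, totaling $O(\ell n)$. I encode each $M_i$ as the integer sequence $(M_i(u_1),\ldots,M_i(u_n))\in[n]^n$, radix-sort the $\ell$ such sequences in $O(\ell n)$ time, and scan once for the longest run of equal sequences. If its length is at least $\alpha$, I output the corresponding matching; otherwise I report that no \gstable{$\alpha$} matching exists. The only delicate point is achieving a deterministic linear-time mode computation, but since both the sequence length and the alphabet size are bounded by $n$, radix sort meets the $O(\ell\cdot n)$ target exactly.
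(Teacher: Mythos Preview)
Your proposal is correct and follows the same approach as the paper: each uniform layer has a unique stable matching (the rank-aligned one), so an \gstable{$\alpha$} matching exists iff some $M_i$ occurs at least $\alpha$ times among $M_1,\ldots,M_\ell$; compute the candidates and return the mode if its multiplicity reaches~$\alpha$. You are in fact more explicit than the paper about the mode-finding step, which the paper leaves implicit.

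One small caveat on the last sentence: standard LSD radix sort on $\ell$ strings of length~$n$ over alphabet~$[n]$ takes $O(n(\ell+n))$ time, since each of the $n$ passes needs $\Theta(\ell+n)$ for counting sort; this is $O(\ell n)$ only when $\ell\ge n$. The paper's proof does not address this point either, so it is not a discrepancy between your argument and the paper's but a residual loose end in both.
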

\begin{proof}
It is apparent that for uniform preferences of the agents, each layer admits a unique stable matching: for each $i \in [n]$ the $i$-th most preferred agent from~$U$ is matched with the $i$-th most preferred agent from~$W$. Further, such a matching can be computed in $O(n)$ time. Thus, our algorithm proceeds as follows. For each layer we compute a unique stable matching, and we pick the matching which is stable in the largest number of layers. Our algorithm returns this matching if it is stable in at least $\alpha$ layers; otherwise, the algorithm outputs that there exists no \gstable{$\alpha$} matchings for the given instance.     
\end{proof}

\section{Open Problems and Conclusions}

We have considered a new multi-layer model of preferences in the context of the \SM problem. We identified three natural concepts of stability and discussed their relations with each other. Our results show that the algorithmic problem of finding stable matchings according to each of the three concepts is, in general, computationally hard. On the positive side, we also managed to identify a number of natural special cases which are tractable (\Cref{table:results} summarizes our results).
Interestingly, while in the world of multi-layer stable matchings the 
case of two layers already leads to most computational hardness results,
in the world of maximum-cardinality matching in two-layer graphs 
one obtains polynomial-time solvability, while in the case of three layers one encounters NP-hardness~\cite{BKKMNS17}.

Our work provides a rich structure for analyzing computational properties of the problems we considered, and we view our work as only initiating this line of research. Indeed, it directly leads to the following open questions:
\begin{enumerate}[(1)]
\item How hard is it to find an \aistable{$\alpha$} matching for $\lceil \nicefrac{\ell}{2} \rceil < \alpha < \ell$?
\item When the preferences of the agents are uniform in each layer and $\alpha \geq \nicefrac{\ell}{2}+1$, we have shown that the decision variant of \ISM is solvable in quasi-polynomial time $n^{O(\log{n})} +O(\ell \cdot n^2)$) which implies that the problem is in the complexity class LOGSNP~\cite{PY96}. It would be interesting to know whether it is also complete for LOGSNP.\footnote{LOGSNP-hardness 
has been encountered and discussed for natural problems in 
Computational Social Choice~\cite{BCFGNW14,BCHKNSW14}.}
However, LOGSNP-hardness for our problem would also imply LOGSNP-hardness for \textsc{Graph Isomorphism}~(see~\cref{thm:uniform_istable_matching_as_hard_as_isomorphism}).
\item When the preferences of the agents are uniform in each layer,
 how hard is it to search for an \abstable{$\alpha$} matching for arbitrary $\alpha>1$ or an \aistable{$\alpha$} matching when $\alpha < \nicefrac{\ell}{2}$, or in general when the number of layers is constant.
\end{enumerate}   

We also believe that a number of other parameters and special cases can be motivated naturally in the context of our model, in particular parameters quantifying 
the degree to which the preferences of the agents differ.
Analogous parameterizations have been studied in computational social choice, for instance for the 
NP-hard \textsc{Kemeny Score} problem~\cite{BBN14,BGKN11}.

Continuing our research on special cases of input preferences (\cref{sec:constr-pref}), it might be interesting to study stable matching with multi-layer \emph{structured} preferences, such as \emph{single-peaked}~\cite{Black1958}, \emph{single-crossing}~\cite{Roberts1977,Mirrlees1971}, and \emph{1-Euclidean}~\cite{Coombs1964,Knoblauch2010,ChePruWoe2017} preferences.
We note that it can be detected in polynomial time whether a preference profile has any of these structure~\cite{BartholdiTrick1986,DoiFal1994,EscLanOez2008,DoiFal1994,ElkFalSli2012,BreCheWoe2013a,Knoblauch2010} and we refer the reader to \citet{BreCheWoe2016} and \citet{ElLaPe2017} for an overview of the literature on single-peakedness and single-crossingness.
We also note that \citet{BartholdiTrick1986} worked on stable roommates for narcissistic and single-peaked preferences, 
while \citet{BreCheFinNie2017} extended this line by also studying other structured preferences and including preferences with ties and incompleteness. 

Finally, our multi-modal view on the bipartite variant (\SM) can be generalized to the non-bipartite variant (\SR) and the case with incomplete preferences with ties.
It would be interesting to see whether our computational tractability results can transfer to these cases.


\bibliographystyle{abbrvnat}

\end{document}